\newtheorem{theorem}{Theorem}
\newtheorem{lemma}[theorem]{Lemma}
\newtheorem{corollary}[theorem]{Corollary}
\newtheorem{remark}{Remark}
\newtheorem{assumption}{Assumption}
\newcommand{\E}[1]{\mathbb{E}\left[#1\right]}
\newcommand{\R}{\mathbb{R}}
\newcommand{\argmin}{\mathop{\arg\min}}
\newcommand{\abs}[1]{\left\vert#1\right\vert}
\newcommand{\br}[1]{\left(#1\right)}
\newcommand{\N}{\mathbb{N}}
\newcommand{\B}{\boldsymbol}
\newcommand{\BmR}{\boldsymbol{\mathcal{R}}}
\newcommand{\D}{\mathcal{D}}
\title{Global Bias-Corrected Divide-and-Conquer by Quantile-Matched Composite for General Nonparametric Regressions}
\author{Yan Chen and Lu Lin\footnote{The corresponding
author. Email: linlu@sdu.edu.cn. The research was
supported by National Key R\&D Program of China (2018YFA0703900) and NNSF project (11971265) of China.}
\\
 $^1$Zhongtai Securities Institute for Financial Studies, Shandong University, Jinan, China
}
\begin{document}

\maketitle

\begin{abstract}\baselineskip=16pt

The issues of bias-correction and robustness are crucial in the strategy of divide-and-conquer (DC), especially for asymmetric nonparametric models with massive data. It is known that quantile-based methods can achieve the robustness, but the quantile estimation for nonparametric regression has non-ignorable bias when the error distribution is asymmetric.
This paper explores a global bias-corrected DC by quantile-matched composite for nonparametric regressions with general error distributions. The proposed strategies can achieve the bias-correction and robustness, simultaneously.
Unlike common DC quantile estimations that use an identical quantile level to construct a local estimator by each local machine, in the new methodologies, the local estimators are obtained at various quantile levels for different data batches,
and then the global estimator is elaborately constructed as a weighted sum of the local estimators. In the weighted sum, the weights and quantile levels are well-matched such that the bias of the global estimator is corrected significantly, especially for the case where the error distribution is asymmetric.
Based on the asymptotic properties of the global estimator, the optimal weights are attained, and the corresponding algorithms are then suggested. The behaviors of the new methods are further illustrated by various numerical examples from simulation experiments and real data analyses. Compared with the competitors, the new methods have the favorable features of estimation accuracy, robustness, applicability and  computational efficiency.

\end{abstract}

\tableofcontents

\baselineskip=20pt

\section{Introduction}
\subsection{Problem setup and article frame}

A commonly used strategy for analyzing massive data set is the well-known method of divide-and-conquer, denoted by DC for short. From the perspective of statistical estimation,
a main challenge in DC algorithms is how to sufficiently reduce the estimation bias, especially for the case when the bias of local estimators is ineradicable, such as the bias from the nonparametric and penalty-based methods.
The second challenge is how to construct robust strategy since
outliers, fat-tailed features and heterogeneity are more likely to hide in these massive data, and it is usually hard to detect or address them because of the structural complexity of massive data. Although quantile-based methodologies can achieve the robustness, the quantile estimation for nonparametric models has non-ignorable bias when the error distribution is asymmetric, and such a bias is irreducible as well in the procedure of common DC algorithms. We then are interested in the following issue:
\begin{itemize}\item {\it Develop quantile-based algorithms such that the final bias of the global DC estimator can be significantly corrected. Then, our goal is to achieve the bias-correction and robustness, simultaneously. }\end{itemize}
It is known that common DC quantile estimations use an identical quantile level to construct a local estimator by each local data set. Unlike common DC, the main feature of our technique is what follows:
\begin{itemize}\item {\it It is a multilevel quantile-matched composite methodology. Specifically,
the local estimators are obtained at various quantile levels for different data batches,
and then the global estimator is elaborately constructed as a weighted sum of the local estimators, in which the weights and quantile levels are well-matched such that the bias of the global estimator is corrected significantly.}\end{itemize}
The paper is then organized in the following way. In the remainder of this section, some related works
on DC are reviewed, and our
main contributions are summarized.
In Section~\ref{sec_meth}, the main idea of our methodologies is introduced.
In Section~\ref{sec_theo}, the asymptotic properties of the proposed method are established, the optimal selections of the weights, quantile levels and bandwidths are discussed deeply.
In Section~\ref{sec_praimple}, the estimations of unknown parameters needed in computing the optimal weights and bandwidths are suggested.
Section~\ref{sec_simu} contains comprehensive simulation studies and real data analyses to further illustrate our theoretical findings.
Some conditions and technical proofs are deferred to the Appendix.

\subsection{Related works}

In the procedure of DC,
the local results (e.g., the local estimators of a parameter) are obtained by each batch of data, and then the global result is constructed by aggregating these local results.
See e.g. \cite{Guha2008, Lin2011, Xie_2014, Song2015, Scott2016, Pillonetto2019, Chen2019} and the references therein.
A variety of DC strategies have been widely used in statistics. The related works include but are not limited to the DC expression for linear model of \cite{2006Regression, Lin2011} and \cite{Schifano2016}, the density estimation of \cite{Li2013}, the parametric regression estimation of \cite{Chen2014}, and \cite{Zhang2015Divide}, the high-dimensional parametric regression estimation of \cite{Lee2017Communication}, the semi-parametric regression estimation of \cite{Zhao2016partially}, the M-estimator of \cite{Shi2018massive},
the distributed testing and estimation of \cite{Battey2018Distributed},
the quantile regression processes of \cite{Volgushev2019Distributed},
the optimal subdata selection of \cite{Wang2019Information},
the sparse logistic regression of \cite{divide2020Hong},
the communication-efficient composite quantile regression of \cite{WANG2021Robust},
and the distributed statistical inference of \cite{Chen2021Distributed}.

It is known that the convergence rate of the variance of a typical estimator is usually of the order proportional to $\sqrt{n}$ with $n$ the sample size.
When dealing with massive data sets, the estimation variance can be quite small even if the used estimation method is less efficient.
Thus, when evaluating the estimation method for analyzing massive data sets, variance-reduction is not very significant.
Contrarily, the bias-correction is crucial in the framework of DC, it is because
the accumulated bias of local estimators is typically non-negligible with the increasing of the number $m$ of data batches. In the worst case, the global estimator would be even divergent as $m $ tends to infinity if the bias of the local estimators is not small enough.
Due to the bias of local estimators, the number of batches is usually restricted, to achieve the standard convergence rate for the global estimator. This imposes great restrictions on the DC methodologies \citep[see, e.g.][]{Chen2019, lin2019global, Chen2021Distributed}.
To address this issue, the most common procedures use iterative algorithm \citep[see, e.g.][]{Jialei2017Efficient} and local bias-correction \cite[see, e.g.][]{Lee2017Communication, Lian2019Projected, Li2022Score} to reduce the bias of
local estimators and then to control the bias of the global estimator.
However, the iterative algorithm and the bias-correction for local estimators are computationally complex, and the resulting bias-correction for global estimation is not sufficient usually.

Apart from the bias-correction, the robustness plays an important role the area of massive data. It is known that outliers,
fat-tailed and heterogenous distributions often appear in the environment of massive data. Because of the structural complexity of massive data sets, it is usually difficult to recognize and address these abnormal phenomena. Thus the robust strategies are particularly desirable for analyzing massive data.
Quantile-based methods are frequently used in classical statistics because of their robustness and relative efficiency.
For nonparametric regression, the local linear quantile regression \citep{Fan1994Robust, Welsh1996Robust} is a robust alternative to local linear regression \citep{Fan1996Local}.
Based on this method, a kind of composite quantile regressions (CQRs) \citep[e.g. ][]{Zou2008Composite, Kai2010, Sun2013Weighted, LinComposite2019, Gu2020Sparse, WANG2021Robust} have been proposed for parametric and nonparametric models.
The CQRs enjoys great advantages in terms of estimation efficiency whether the
variance of error is finite or not.
In \cite{Kai2010}, the proposed local CQR could even be much more efficient than local linear least squares in some cases of non-normal and symmetric error distributions.

Despite these advantages, in nonparametric regressions, the aforementioned CQR methods deeply rely on the assumption of symmetric random errors.
If the errors were asymmetric, these methods could become invalid, since the estimation would suffer a non-vanishing bias; see the proof of Theorem 1 in \cite{Kai2010}. To correct the non-vanishing bias for the case of symmetric conditions,
\cite{Sun2013Weighted} introduce a weighted local linear CQR (WCQR).
The WCQR method contains local linear CQR as a special case, and moreover, it not only inherits good properties that local linear CQR owns for symmetric errors, but also is applicable to asymmetric error distributions.
The main idea and desirable features of WCQR motivate us to put forward a robust estimation for nonparametric regressions with massive data sets.

\subsection{Our main contributions}

As stated before, although there have already been numerous studies on the estimation method in the framework of DC, few of them focus on bias-correction and robustness, simultaneously.
Hence it is particularly desirable to explore new strategies with  the features of bias-correction and robustness.
In this paper, our goal is then to achieve the two objectives by quantile-based method in the framework of DC for nonparametric regression with general error distributions.
\begin{itemize}
    \item
{\it Unlike common quantile methods, ours is a multilevel quantile-matched composite.}\end{itemize}
Firstly, the local estimators are constructed by quantile estimation respectively at various quantile levels from each data batch.
Secondly, the global estimator is designed  as a weighted sum of the local estimators. As a consequence, when the weights and quantile levels are well-matched, the bias of the global estimator is corrected significantly.
Moreover, the optimal weights are obtained by minimizing the asymptotic variance of the global estimator, and the bandwidths are selected under criterion of the optimal convergence rate.
Theoretically, the asymptotic properties of our global estimator are established.
The behavior of the method is further illustrated by various numerical examples from simulation experiments and real data analyses.

Compared with the competitors, our method has the following main virtues:
\begin{itemize}
    \item[1)] {\it Robustness}. Thanks to the robust feature of the quantile estimation, our method enjoys the robustness compared with the common methods such as ordinary least squares (OLS).
    \item[2)] {\it Bias-correction}.
    The bias of the global estimator is significantly reduced by
    the quantile-matched composite.
    \item[3)] {\it General applicability}. Because of fully use of  weights and quantile levels, the method is adaptive to  symmetric or asymmetric distributions, homogeneous or heterogenous models.
    \item[4)] {\it Computationally efficiency}. The procedure of constructing global estimator is an one-shot aggregation via weighted averaging, without need for multiple rounds of iterative algorithms.
\end{itemize}

\subsection{Notations}

Denote by $f_Z(\cdot)$ and $F_Z(\cdot)$ the the probability density function (PDF) and the cumulative distribution function (CDF) of a random variable $Z$, respectively.
Let $\N$ be the set of natural numbers.
Denote by $\B{I}_m$ and $\B{1}_m$ the $m\times m$-dimensional identity matrix and the $m\times m$-dimensional matrix with all the elements equal to $1$, respectively.
For the sequences $\{a_n\}_{n=1}^{\infty}$ and $\{b_n\}_{n=1}^{\infty}$, say $a_n = \Omega(b_n)$ if $a_n = O(b_n)$ and $b_n = O(a_n)$, as $n \to \infty$.
For a vector $\B{v} = \br{v_i}_{i=1}^d = \br{v_1, \cdots, v_d}^\top$, denote $\|\B{v}\|_q = \br{\sum_{i=1}^d v_i^q}^{1/q}$ for $q = 1, 2$, and denote $\|\B{v}\|_{\infty} = \max_{1 \leq i \leq d}\abs{v_i}$.
For vectors $\B{u}, \B{v} \in \R^d$, denote
$\B{v}_1 \otimes \B{v}_2 = \br{u_1, \cdots, u_d, v_1, \cdots, v_d}^{\top} \in \R^{2d}$.
For vectors $\B{v}_1, \cdots, \B{v}_q \in \R^d$, write $\otimes_{i=1}^q \B{v}_i = \B{v}_1 \otimes \cdots \otimes \B{v}_q$.
For a matrix $\B{M}$, the minimum and maximum eigenvalues of $\B{M}$ are denoted by $\lambda_{\min}(\B{M})$ and $\lambda_{\max}(\B{M})$, respectively.
For matrices $\B{S}_1, \cdots, \B{S}_q$, define the block matrix by
$$\text{diag}\br{\B{S}_1, \cdots, \B{S}_q} = \begin{bmatrix}
    \B{S}_1&  &   & \B{0}\\
    &  \B{S}_2&   &      \\
    &    &    \ddots&      \\
    \B{0}&  &  & \B{S}_q
   \end{bmatrix}.
$$

\section{Methodology}\label{sec_meth}
\subsection{Model and data sets}

We consider the following general nonparametric regression model:
\begin{equation}\label{eq_model}
    Y = m(X) + \sigma(X)\varepsilon.
\end{equation}
Here, for simplicity, $Y$ and $X$ are supposed respectively to be scalar response variable and covariate, with $X\in [0,1]$. In the model, $m(\cdot): [0,1] \to \R$ is a smooth regression function to estimate, $\sigma(\cdot)\in R^+$ is the standard derivation of the error term, $\varepsilon$ is the random error independent of $X$, and satisfies $\E{\varepsilon} = 0$ and $\text{Var}[\varepsilon] = 1$.

Let $\D = \left\{(X_{ij}, Y_{ij}): j=1, \cdots, n_i, i = 1, \cdots, m\right\}$ be the massive data set of independent and identically distributed (i.i.d.) observations of $(X, Y)$.
For estimation from massive data, the full data set $\D$ is divided into $m$ data batches with the $i$-th data batch as $\D_i = \left\{X_{ij}, Y_{ij}\right\}_{j=1}^{n_i}$.
Thus, the size of data set $\D_i$ is $n_i$, and the total number of samples is $n = \sum_{i=1}^m n_i$.

\subsection{Quantile-matched composite estimator and bias correction}

Denote by $Q_Y(\tau|x)$ the conditional $100\tau\%$ quantile of $Y$ given $X=x\in [0,1]$ and $\tau \in (0, 1)$, i.e. $Q_Y(\tau|x) = \inf\left\{t: \mathbb{P}\br{Y \leq t|X=x} \geq \tau \right\}$.
For $i = 1, \cdots, m$, the standard local linear quantile estimator for $Q_{Y}(\tau|x)$ from the $i$-th data batch is given by
\begin{equation}\label{eq_hatm}
    \widehat{m}_i(x; \tau, h) = \widehat{a} \; \text{ with } \; (\widehat{a}, \widehat{b})= \argmin_{(a, b)\in \R\times \R} \sum_{j=1}^{n_i} \rho_{\tau}\left(Y_{ij} - a - b(X_{ij} - x)\right) K_{h}\left(X_{ij} - x\right),
\end{equation}
where $\rho_{\tau}(u) = \br{\tau-I(u<0)}u$, the check function, and $K_{h}(\cdot)= \frac{1}{h}K(\frac{\cdot}{h})$ with $K(\cdot)$ a kernel function and $h$ a smoothing parameter or bandwidth. As stated in Introduction, all the local estimators $\widehat{m}_i(x; \tau, h),i=1,\cdots,m$, have a non-vanishing bias if the error distribution is asymmetric.

To unbiasedly estimate the regression function $m(\cdot)$ in the framework of massive data, we design the following DC strategy:
\begin{enumerate}
    \item On each data batch $\D_i$ and quantile level $\tau_{ij}\in (0,1)$, $i = 1, \cdots, m,j=1,\cdots,J$, the $i$-th machine computes $J$ local quantile estimators $\left\{\widehat{m}_i \br{x; \tau_{ij}, h_{ij}}\right\}_{j=1}^J$ by \eqref{eq_hatm}, where $\tau_{ij}$ and $h_{ij}$ are respectively artificially selected quantile levels and bandwidths specified later;
    \item Each machine send $J$ local quantile estimators $\left\{\widehat{m}_i \br{x; \tau_{ij}, h_{ij}}\right\}_{j=1}^J$ to center machine, and the center machine constructs the global estimator by weighted averaging all the local estimators as
    \begin{equation}\label{nonp_glo_esti}
        \widehat{m}(x) = \sum_{i=1}^m \sum_{j=1}^J \omega_{ij} \widehat{m}_i \br{x; \tau_{ij}, h_{ij}},
    \end{equation}
    where $\omega_{ij}$ are artificially selected weights specified later;
    \item In the above two steps, the weights $\omega_{ij}$ and quantile levels $\tau_{ij}$ are well-matched via eliminating the estimation bias and minimizing the estimation variance of $\widehat{m}(\cdot)$ asymptotically (the details will be given later).
\end{enumerate}


Unlike common DC, the above is a new DC by multilevel quantile-matched composite.
To determine weights and quantile levels that guarantee the required properties of the global estimator, we first discuss the asymptotic representation of $\widehat{m}(x)$.
Under some regularity conditions \citep[see, e.g.][]{Hong2003, Chaudhuri1991Nonparametric, Guerre2012}, following local Bahadur representation holds:
\begin{equation}\label{eq_asy_glomx}
    \widehat{m}(x) = \sum_{i=1}^m \sum_{j=1}^J \omega_{ij} m(x) + \mathbb{I}_1 + \mathbb{I}_2 + \mathbb{I}_3 + \mathbb{I}_4,
\end{equation}
where
\begin{equation*}
    \mathbb{I}_1 = \sum_{i=1}^m \sum_{j=1}^J \omega_{ij} F_{\varepsilon}^{-1}(\tau_{ij})\sigma(x), \quad \mathbb{I}_2 = \sum_{i=1}^m\sum_{j=1}^J \omega_{ij} h_{ij}^2 \beta(x, \tau_{ij}),
\end{equation*}
\begin{equation*}
    \mathbb{I}_3 =  \sum_{i=1}^m\sum_{j=1}^J \omega_{ij}\phi(\tau_{ij}, h_{ij}, \D_i), \quad \mathbb{I}_4 = \sum_{i=1}^m\sum_{j=1}^J \omega_{ij} R^B_i(n_i, h_{ij}, \tau_{ij})
\end{equation*}
with
\begin{equation}\label{eq_defbeta}
    \beta(x, \tau_{ij}) = \frac{\mu_2}{2}\br{m^{\prime\prime}(x) + \sigma^{\prime\prime}(x) F_{\varepsilon}^{-1}(\tau_{ij})}, \quad \mu_2 = \int_{\R} v^2 K(v) dv.
\end{equation}
Here $\phi(\tau_{ij}, h_{ij}, \D_i)$ is given in \eqref{eq_phi} in the next section, and satisfies
\begin{equation*}
    \E{\phi(\tau_{ij}, h_{ij}, \D_i)} = 0, \quad \phi(\tau_{ij}, h_{ij}, \D_i) = O_p(n_i^{-1/2}),
\end{equation*}
and $R^B_i(n_i, \tau_{ij}, h_{ij})$ are the remainder terms satisfying
\begin{equation*}
    R^B_i(n_i, \tau_{ij}, h_{ij}) = O_p\left(\left(\frac{\log n_i}{n_i h_{ij}}\right)^{-3 / 4}\right) + o(h_{ij}^2).
\end{equation*}
In the asymptotic expansion \eqref{eq_asy_glomx}, the first term is equal to $m(x)$ if weights fulfill $\sum_{i=1}^m \sum_{j=1}^J \omega_{ij} = 1$.
Then the estimation error is determined by the terms $\mathbb{I}_1, \cdots, \mathbb{I}_4$.
Under some standard conditions (see, Theorem~\ref{them_asyproperty}),
$\mathbb{I}_4$ is an infinitesimal of higher order than the others; $\mathbb{I}_3$ has zero mean and characterizes the asymptotic variance of $\widehat{m}(x)$;
$\mathbb{I}_2$ is the estimation bias related with the bandwidths;
$\mathbb{I}_1$ would be a non-vanishing bias unless the weights were well-chosen.

Focusing on bias-correction, we introduce the following conditions on the weights and quantile levels:
\begin{equation}\label{eq_nonbias}
    \left\{\begin{aligned}
        &\sum_{i=1}^m \sum_{j=1}^J \omega_{ij} = 1, \\
        &\sum_{i=1}^m \sum_{j=1}^J \omega_{ij} F_{\varepsilon}^{-1}(\tau_{ij}) = 0.
    \end{aligned}\right.
\end{equation}
In \eqref{eq_nonbias}, the first condition is standard for weight selection, and the second one eliminates the non-vanishing bias $\mathbb{I}_1$.

\begin{remark}
    The global estimator given by \eqref{nonp_glo_esti} contains the  local linear CQR estimator \citep[e.g.][]{Kai2010} as a special case,
    in which the uniform weights are used.
    However, the uniform-weight estimator relies on the assumption of symmetric random errors.
    The symmetry ensures the validity of the second equation in \eqref{eq_nonbias}, so as to eliminate the non-vanishing bias $\mathbb{I}_1$ in the asymptotic expansion~\eqref{eq_asy_glomx} \citep{Sun2013Weighted}.
    Without this assumption, the uniform-weight estimator may suffer from a non-negligible bias.
\end{remark}

The candidate weights and quantile levels satisfying \eqref{eq_nonbias} are not unique.
Additionally, as shown in Theorem~\ref{them_asyproperty} in the next section, the weights $\omega_{ij}$, the quantile levels $\tau_{ij}$ and the bandwidths $h_{ij}$ have a significant influence on the feature of the asymptotic distribution of $\widehat{m}(x)$.
This suggests that we can select the best ones by optimizing an objective subject to \eqref{eq_nonbias} and, consequently, obtain an optimal global estimator.

\subsection{The optimal weights}

The optimal weights can be obtained by minimizing the asymptotic variance of $\widehat{m}(x)$ under constraint  \eqref{eq_nonbias}.
For convenience, denote the vectors of weights, quantile levels, and bandwidths by $\B{\omega}= \otimes_{i=1}^m \B{\omega}_i$, $\B{\tau}= \otimes_{i=1}^m \B{\tau}_i$ and $\B{h}= \otimes_{i=1}^m \B{h}_i$,
respectively, where $\B{\omega}_i = \br{\omega_{ij}}_{j=1}^J$, $\B{\tau}_i= \br{\tau_{ij}}_{j=1}^J$ and $\B{h}_i= \br{h_{ij}}_{j=1}^J$.

Under some regular conditions (see Theorem~\ref{them_asyproperty}), the asymptotic variance of $\widehat{m}(x)$ can be expressed as
\begin{equation}\label{eq_sigma}
    \Sigma(x, \B{\omega}, \B{\tau}, \B{h}) = a(x)\B{\omega}^{\top} \B{S}\br{\B{\tau}, \B{h}}\B{\omega}
\end{equation}
with
\begin{equation*}
    a(x) = \frac{\sigma^{2}\left(x\right)\int_{\R} K(v)^2 dv }{f_X\left(x\right)}, \quad
    \B{S}\br{\B{\tau}, \B{h}} = \operatorname{diag} \br{n_1^{-1}\BmR(\B{h}_1, \B{\tau}_1), \cdots, n_m^{-1}\BmR(\B{h}_m, \B{\tau}_m)},
\end{equation*}
\begin{equation}\label{eq_defJ}
    \BmR(\B{h}_i, \B{\tau}_{i}) = \left[\frac{\tau_{ij} \wedge \tau_{ij^{\prime}}-\tau_{ij} \tau_{ij^{\prime}}}{\sqrt{h_{ij} h_{ij^{\prime}}}f_{\varepsilon}\left(F_{\varepsilon}^{-1}(\tau_{ij})\right) f_{\varepsilon}\left(F_{\varepsilon}^{-1}(\tau_{ij^\prime})\right)}\right]_{j, j^\prime = 1}^J.
\end{equation}
Then the optimal weight vector that minimizes the asymptotic variance of $\widehat{m}(x)$ can be expressed as
\begin{equation}\label{eq1_optw_var}
    \B{\omega}^*\br{\B{\tau}, \B{h}} = \frac{c_1\B{d}_1 - c_2\B{d}_2}{c_1c_3 - c_2c_2},
\end{equation}
where
\begin{equation*}
    \B{d}_1 = \B{S}^{-1}\br{\B{\tau}, \B{h}} \B{1}_{mJ}, \quad \B{d}_2 = \B{S}^{-1}\br{\B{\tau}, \B{h}} \B{F}_{\varepsilon}^{-1}\br{\B{\tau}},
\end{equation*}
\begin{equation*}
    c_1 = \B{d}_2^{\top}\B{F}_{\varepsilon}^{-1}\br{\B{\tau}}, \quad c_2 = \B{d}_2^{\top}\B{1}_{mJ}, \quad c_3 = \B{d}_1^{\top} \B{1}_{mJ}
\end{equation*}
with $\B{F}_{\varepsilon}^{-1}\br{\B{\tau}} = \otimes_{i=1}^m \br{F_{\varepsilon}^{-1} \br{\tau_{ij}}}_{j=1}^J$.

\begin{remark}
    The optimal weights given by \eqref{eq1_optw_var} depend on the bandwidth vector $\B{h}$.
    Fortunately, $\B{h}$ can be empirically chosen as
    \begin{equation}\label{eq_bandwithform}
        \B{h} = \B{h}(\alpha, \nu) = \alpha  \otimes_{i=1}^m \br{n_i^{-\nu} \B{1}_J},
    \end{equation}
    where $\alpha$ and $\nu > 0$ are parameters.
    Then, the optimal weight vector in \eqref{eq1_optw_var} is independent of $\alpha$. Consequently, it can be written as
    \begin{equation}\label{eq2_optw_var}
        \B{\omega}^*\br{\B{\tau}, \B{h}(\alpha, \nu)} = \B{\omega}^*\br{\B{\tau}, \B{h}(1, \nu)}.
    \end{equation}
    Since it is often the case that the parameter $\nu$ can be determined empirically (e.g., for each local estimator, the theoretically optimal choice is $\nu=1/5$ ), the optimal weigh vector \eqref{eq2_optw_var} can be easily computed.

\end{remark}

\subsection{The choices of quantile levels}

The optimal weight vector in \eqref{eq1_optw_var} relies on the vector $\B{\tau}$ of quantile levels.
As an example, in this paper we only use the uniformly-spaced quantile levels in the form of
\begin{equation}\label{eq_taubar}
    \B{\tau} = \B{\tau}\br{\bar{\tau}} = \otimes_{i=1}^m \br{\bar{\tau}_{ij}}_{j=1}^J \text{ with }\bar{\tau}_{ij} = \bar{\tau} + \left(\frac{i+mj - m}{mJ} - \frac{1}{2} \br{1 + \frac{1}{mJ}}\right)d_{\tau},
\end{equation}
where $d_{\tau}$ controls the range of used quantile levels and $\bar{\tau}$ is the central position of all the quantile levels.
Naturally, $d_{\tau}$ and $\bar{\tau}$ should be constrained such that
\begin{equation}\label{eq_consttau}
    \br{\bar{\tau} - \frac{1}{2} d_{\tau}, \bar{\tau} + \frac{1}{2} d_{\tau}} \subset \br{\delta_{\tau}, 1 - \delta_{\tau}},
\end{equation}
where $\delta_{\tau}$ is an arbitrary small positive constant.
The choice of $\bar{\tau}$ is not unique in our approach.
In the simulation studies of this paper, a type of $\bar{\tau}$ is chosen as $\bar{\tau}^*$ satisfying
\begin{equation}\label{eq_bartau}
    \B{1}_{mJ}^{\top} \B{F}_{\varepsilon}^{-1}(\B{\tau}\br{\bar{\tau}^*}) = 0.
\end{equation}
With the choice of $\B{\tau} = \B{\tau}\br{\bar{\tau}^*}$, the uniform weight vector $\B{\omega}_u = 1/(mJ) \B{1}_{mJ}$ satisfies the constraints in \eqref{eq_nonbias}.
Thus, our optimal weight vector in \eqref{eq1_optw_var} performs at least better than $\B{\omega}_u$ under the criterion of asymptotic variance.
Moreover, when the distribution of random error is symmetric, we have $\bar{\tau}^* = 1/2$, then the resulting quantile levels $\B{\tau}\br{\bar{\tau}^*}$ is identical to the ones extensively used in those CQR-based works, e.g.,  \cite{Zou2008Composite, Kai2010, Sun2013Weighted, LinComposite2019}.

\begin{remark}
    Alternatively, we can also select $\bar{\tau} = \bar{\tau}^{**}$ with $\bar{\tau}^{**}$ satisfying
    \begin{equation}\label{eq2_bartau}
        \B{1}_{mJ}^{\top} \B{S}^{-1}\br{\B{\tau}\br{\bar{\tau}^{**}}, \B{h}\br{1, \nu}} \B{F}_{\varepsilon}^{-1}\br{\B{\tau}\br{\bar{\tau}^{**}}} = 0.
    \end{equation}
    Under $\B{\tau} = \B{\tau}\br{\bar{\tau}^{**}}$ and $\B{h} = \B{h}(\alpha, \nu)$, the optimal weight vector can be written as
    \begin{equation*}
        \B{\omega}^{**} = \B{\omega}^*\br{\B{\tau}\br{\bar{\tau}^{**}}, \B{h}(\alpha, \nu)} = \frac{\B{d}_3}{\B{1}_{mJ}^{\top}\B{d}_3} \ \text{ with } \ \B{d}_3 = \B{S}^{-1}\br{\B{\tau}\br{\bar{\tau}^{**}}, \B{h}\br{1, \nu}} \B{1}_{mJ}.
    \end{equation*}
    The above is a best choice because the resulting asymptotic variance reaches its minimum over all the weight constrained by $\B{\omega}^{\top}\B{1}_{mJ} = 1$ (instead of \eqref{eq_nonbias}), i.e.,
    \begin{equation}\label{eq_minwoptopt}
        \Sigma(x, \B{\omega}^{**}, \B{\tau}\br{\bar{\tau}^{**}}, \B{h}(\alpha, \nu)) = \min_{\B{\omega}^{\top}\B{1}_{mJ} = 1} \Sigma(x, \B{\omega}, \B{\tau}\br{\bar{\tau}^{**}}, \B{h}(\alpha, \nu)).
    \end{equation}

\end{remark}

%
%

\subsection{The sub-optimal bandwidths}\label{sec_opt_bw}

We only consider the case where the bandwidths are of the form of \eqref{eq_bandwithform} for simplicity.
By Theorem~\ref{thm_amse} and Corollary~\ref{coro_optamse} in the next section, to obtain the optimal convergence rate of AMSE$\br{\widehat{m}(x)}$, the optimal parameters $\nu$ and $\alpha$ should satisfy
\begin{equation}\label{eq_optnu}
    \nu^* = \frac{\ln n}{5\br{\ln n - \ln m}} \to \frac{1}{5s}, \quad \alpha^*(x, \B{\omega}) = \br{\frac{A_2^*\br{x, \B{\omega}}}{4 A_1^*\br{x, \B{\omega}}}}^{1/5},
\end{equation}
where
\begin{equation*}
    A_1^*\br{x, \B{\omega}} = \br{\sum_{i=1}^m\sum_{j=1}^J \omega_{ij} n_i^{-2\nu^*} \beta(x, \tau_{ij})}^2,
\end{equation*}
\begin{equation}\label{eq_A2}
    A_2^*\br{x, \B{\omega}} = a(x) \sum_{i=1}^m n_i^{\nu^*-1} \B{\omega}_i^{\top}\BmR_1\br{\B{\tau}_i}\B{\omega}_i \text{ with } \BmR_1\br{\B{\tau}_i} = \BmR(\B{1}_J, \B{\tau}_i).
\end{equation}
For constant bandwidths, the optimal $\alpha$ that minimizes $\text{AMISE}\br{\widehat{m}} = \int_{0}^1 \text{AMSE}\br{\widehat{m}(x)} W(x) d x$ can be expressed as
\begin{equation*}
    \alpha^*\br{\B{\omega}} = \br{\frac{\int_{\R}A_2^*\br{x, \B{\omega}}W(x) dx}{4 \int_{\R}A_1^*\br{x, \B{\omega}} W(x) dx}}^{1/5},
\end{equation*}
where $W(\cdot)$ is a given nonnegative weight function with a compact support in $[0,1]$.


If data batches $\D_i$ have the same size, i,e., $n_i=n/m$ for all $i$, we can choose that the bandwidth vector is of the form of
\begin{equation}\label{eq_bw_h1}
    \B{h} = h\B{1}_{mJ}, \mbox{ where } h = \alpha \br{\frac{n}{m}}^{-\nu}.
\end{equation}
Then the optimal bandwidth vector has a more concise form.
In fact, by Theorem~\ref{thm_amse} in the next section, and by minimizing the AMSE of $\widehat{m}(x)$, we get the optimal $h$ as
\begin{equation}\label{eq_bw_optch}
    h^*(x, \B{\omega}) = \br{\frac{a(x)}{\mu_2^2 m^{\prime\prime}(x)^2} \sum_{i=1}^m \frac{n}{n_i} \B{\omega}_i^{\top}\BmR_1\br{\B{\tau}_i}\B{\omega}_i}^{\frac{1}{5}} n^{-\frac{1}{5}},
\end{equation}
and the the optimal constant bandwidth as
\begin{equation}\label{eq_bw_optvh}
    h^*(\B{\omega}) = \br{\frac{\int_{\R} a(x) W(x) dx}{\mu_2^2 \int_{\R} m^{\prime\prime}(x)^2 W(x) dx} \sum_{i=1}^m \frac{n}{n_i} \B{\omega}_i^{\top}\BmR_1\br{\B{\tau}_i}\B{\omega}_i}^{\frac{1}{5}} n^{-\frac{1}{5}}.
\end{equation}

\begin{remark}
    Another natural idea to find the optimal weights, quantile levels, and bandwidths is to minimize the AMSE directly.
    However as shown in Corollary~\ref{thm_amse}, the function AMSE to be minimized is generally high dimensional and nonconvex with a complex structure,
    which means that the optimization problem is difficult to solve.
    Even worse, the minimizer is generally not unique, when the quantile level vector $\B{\tau}$ is considered as an independent parameter.
    Thus we consider a sub-optimal but more practical approach in this paper,
    which can drastically reduce the computational expense.
\end{remark}

As previously discussed, the choices of weights, quantile levels and bandwidths rely on the information of the CDF $F_{\varepsilon}(\cdot)$, PDF $f_{\varepsilon}(\cdot)$, the functions $\sigma(\cdot)$ and $\beta(\cdot, \cdot)$.
When these parameters and functions are unknown, they can be replaced by some consistent estimators.
More details will be discussed in Section~\ref{sec_praimple}.

\section{Theoretical Properties}\label{sec_theo}

\subsection{Regularity conditions}

To establish the theoretical properties of the global estimator $\widehat{m}(x)$, we introduce the following regularity conditions.
\begin{assumption}\label{assum_fFK}
    For some open interval $I_x \subseteq [0, 1]$, the functions $f_X(x)$ and $\sigma(x)$ are strictly positive for $x\in I_x$,
   their derivative functions $f_X^{\prime\prime}(x)$, $m^{\prime\prime}(x)$ and $\sigma^{\prime\prime}(x)$ exist and are continuous for $x\in I_x$. Moreover,
    the PDF $f_{\varepsilon}(u)$ is strictly positive and its derivative function $f_{\varepsilon}^{\prime}(u)$ is continuous and bounded for $u\in \R$.
\end{assumption}
\begin{assumption}\label{assum_K}
    The kernel function $K(u)$ is Lipschitz for $u\in \R$, nonnegative with a compact support and satisfies $\int_{\R} K(u) d u=1$, $\int_{\R} uK(u) d u = 0$, $\int_{\R} u^2K(u) d u < \infty$, $\int_{\R} K(u)^2 d u < \infty$.
    Moreover, there exist constants $\delta_K > 0$ and $s_K > 0$, such that $K(u) \geq \delta_K$ for $u \in [-s_K, s_K]$.
\end{assumption}
\begin{assumption}\label{assum_wtau}
    The vector $\B{\omega}$ of weights and vector $\B{\tau}$ of quantile levels satisfy \eqref{eq_nonbias}.
    Additionally, there exist constants $M_w$, $\delta_{\tau}$, $\lambda_{\min}$ and $\lambda_{\max}$ independent with $n$ and $m$, such that
    \begin{equation}\label{eq_Rtau}
        \|\B{\omega}\|_{\infty} < M_w < 0, \quad 0 < \delta_{\tau} < \tau_{ij} < 1 - \delta_{\tau}, \quad
    \end{equation}
    \begin{equation}\label{eq_lamR}
        0 < \lambda_{\min} <  \lambda_{\min}(\BmR_1\br{\B{\tau}_i}) \leq \lambda_{\max}(\BmR_1\br{\B{\tau}_i}) < \lambda_{\max} < +\infty
    \end{equation}
    for all $i = 1, \cdots, m$ and $j = 1, \cdots, J$,
    where $\BmR_1\br{\B{\tau}_i}$ is defined in \eqref{eq_A2}.
\end{assumption}
\begin{assumption}\label{assum_bat_bw}
    There exist constants $M_b > 0$ and $s \in (\frac{2}{3}, 1]$ such that
    \begin{equation}\label{eq_n_m}
        M_b^{-2} n^{s} < n_i < M_b^2 n^{s},  i = 1, \cdots, m.
    \end{equation}
    Moreover, the bandwidths satisfy
    \begin{equation}\label{eq_cr_h}
        h_{ij} = \Omega\br{n_i^{-\nu}} \text{ i.e. } h_{ij} = \Omega\br{n^{-s\nu}} \; \text{ with } \nu < 3 - \frac{2}{s} \; \text{ for } i = 1, \cdots, m,  j = 1, \cdots, J.
    \end{equation}
\end{assumption}


Assumptions~\ref{assum_fFK} and \ref{assum_K} are fairly general in nonparametric regression.
In Assumption~\ref{assum_wtau}, the constraint \eqref{eq_nonbias} is used to eliminate the non-negligible bias, and
additionally, the first condition in \eqref{eq_Rtau} requires a uniform boundedness of $\B{\omega}$, which is standard for designing weights, and the the second condition in \eqref{eq_Rtau} is necessary to obtain a uniformly convergence rate of the remainder term in \eqref{eq_asy_glomx}. Moreover, in Assumption~\ref{assum_wtau},
the condition in \eqref{eq_lamR} is standard in asymptotic analysis which guarantee the boundedness and nonsingularity of the covariance matrix in \eqref{eq_defJ}.
Assumption~\ref{assum_bat_bw} means that the sample size in each data batch is of the same order of $O(n/m)$, and the use of conditions on $s$ and $\nu$ is to guarantee that the remainder term in \eqref{eq_asy_glomx} is infinitesimal of higher order than the others. For the the assumptions above, we have the following further remarks.

\begin{remark}
It is worth pointing out that the condition on the error distribution in Assumption \ref{assum_fFK} is quite mild, which neither requires the error distribution to be symmetric as in \cite{Zou2008Composite, Kai2010} and \cite{Gu2020Sparse}, nor requires the error distribution to have a finite variance as in the local polynomial regression \citep{Fan1993Local}.
In the case of massive data, it is usually quite difficult to obtain the exact information about the error distribution. Hence, our method is desirable for its general applicability in term of error distribution. \end{remark}

\begin{remark}

The condition \eqref{eq_n_m} in Assumption~\ref{assum_bat_bw} implies that the number $m$ of data batches is restricted, namely, $m = O\br{n^{\gamma}} $ with $0\leq\gamma < \frac{1}{3}$.
This restriction is necessary to suppress the remainder term $R(\B{\omega}, \B{h})$ and to establish the asymptotic normality in \eqref{lim_sqrtn_nonp}.
However, this is a common condition in DC quantile methods with one-shot aggregation, even for parametric models, this condition is commonly used \cite[see e.g.][]{zhao2015general, Rong2018Composite, Chen2020Quantile}.
\end{remark}


\subsection{Asymptotic properties}

For simplicity, we define
\begin{equation*}
    Q_{Y, h}^*(\tau|x) = a^*  \ \text{ with }\ \br{a^*, b^*} = \argmin_{(a, b)\in \R\times \R} \E{\rho_{\tau}\left(Y - a - b(X - x)\right) K_{h}\left(X - x\right)},
\end{equation*}
and
\begin{equation}\label{eq_phi}
    \phi(x, \tau, h, \D_i) = \frac{1}{n_i h v\left(x; \tau, h\right) } \sum_{j=1}^{n_i} \psi_{\tau}\left( Y_{ij}- Q_{Y, h}^*(\tau|X_{ij}) \right) K\left(\frac{X_{ij}-x}{h}\right)\ \text{ for } 1 \leq i \leq m
\end{equation}
with
\begin{equation*}
    v\left(x; \tau, h\right) = \int_{\R} \frac{K(u)}{\sigma(x + h u)} f_{X}\left(x+h u \right) f_{\varepsilon}\left(\frac{Q_{Y, h}^*(\tau|x) - m(x + h u)}{\sigma(x + h u)}\right) d u.
\end{equation*}

The following theorem states the asymptotic properties of the global estimator $\widehat{m}(x)$.

\begin{theorem}\label{them_asyproperty}
    Under Assumptions \ref{assum_fFK}-\ref{assum_bat_bw}, the global estimator $\widehat{m}(x)$ defined by \eqref{nonp_glo_esti} has the following asymptotic representation:
    \begin{equation}\label{eq_them_asyproperty}
        \widehat{m}(x) = m(x) + B(x, \B{\omega}, \B{\tau}, \B{h}) + \sum_{i=1}^m\sum_{j=1}^J \omega_{ij}\phi(\tau_{ij}, h_{ij}, \D_i) + R(\B{\omega}, \B{h})\ \mbox{ for } x \in [0, 1],
    \end{equation}
 where
    \begin{equation}\label{eq_defB}
        B(x, \B{\omega}, \B{\tau}, \B{h}) = \sum_{i=1}^m\sum_{j=1}^J \omega_{ij} h_{ij}^2 \beta(x, \tau_{ij})
    \end{equation}
    and
    \begin{equation*}
        R(\B{\omega}, \B{h}) = O_p\br{\|\B{\omega}\|_2 \sqrt{m}\br{\log n}^{\frac{3}{4}} n^{-\frac{3}{4} s \br{1-\nu}}} + o\br{\|\B{\omega}\|_1 n^{-2s\nu}}.
    \end{equation*}
    Consequently, we have
    \begin{equation}\label{lim_sqrtn_nonp}
        \frac{\widehat{m}(x) - m(x) - B(x, \B{\omega}, \B{\tau}, \B{h})}{\sqrt{\Sigma(x, \B{\omega}, \B{\tau}, \B{h})}} \overset{d}{\to} N(0, 1) \ \mbox{ for } x \in [0, 1],
    \end{equation}
    where ``$\overset{d}{\to}$" stands for the convergence in distribution, and $\beta(x, \tau_{ij})$ and $\Sigma(x, \B{\omega}, \B{\tau}, \B{h})$ are defined in \eqref{eq_defbeta} and \eqref{eq_sigma}, respectively.

\end{theorem}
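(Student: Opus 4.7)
The plan is to begin from the single-batch, single-quantile local Bahadur representation for $\widehat{m}_i(x;\tau_{ij},h_{ij})$, which under Assumptions~\ref{assum_fFK}--\ref{assum_K} together with the bandwidth/sample-size conditions of Assumption~\ref{assum_bat_bw} is a classical result \citep{Hong2003, Chaudhuri1991Nonparametric, Guerre2012}. Concretely, for every pair $(i,j)$,
$$
\widehat{m}_i(x;\tau_{ij},h_{ij}) = Q_Y(\tau_{ij}\mid x) + h_{ij}^2\beta(x,\tau_{ij}) + \phi(x,\tau_{ij},h_{ij},\D_i) + R_i^B(n_i,h_{ij},\tau_{ij}),
$$
where $\phi$ is the zero-mean score defined in \eqref{eq_phi} and $R_i^B = O_p\bigl((\log n_i/(n_i h_{ij}))^{3/4}\bigr) + o(h_{ij}^2)$. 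Since $\varepsilon$ is independent of $X$, the conditional quantile splits cleanly as $Q_Y(\tau\mid x)=m(x)+\sigma(x)F_{\varepsilon}^{-1}(\tau)$.

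I then aggregate $\widehat{m}(x)=\sum_{i,j}\omega_{ij}\widehat{m}_i(x;\tau_{ij},h_{ij})$ and invoke the two matching constraints \eqref{eq_nonbias}: the first collapses $\sum\omega_{ij}m(x)$ to $m(x)$, and the second annihilates the non-vanishing bias $\sigma(x)\sum\omega_{ij}F_{\varepsilon}^{-1}(\tau_{ij})$. What remains is $m(x)+B(x,\B{\omega},\B{\tau},\B{h})$ from the smoothing bias \eqref{eq_defB}, the stochastic principal term $\sum_{i,j}\omega_{ij}\phi(\tau_{ij},h_{ij},\D_i)$, and the aggregate remainder $R(\B{\omega},\B{h})=\sum_{i,j}\omega_{ij}R_i^B$, giving \eqref{eq_them_asyproperty}.

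The bulk of the work is the remainder bound. The deterministic $o(h_{ij}^2)$ pieces combine through the triangle inequality and $h_{ij}=\Omega(n^{-s\nu})$ to produce $o(\|\B{\omega}\|_1 n^{-2s\nu})$. Each stochastic Bahadur residual is of order $O_p\bigl((\log n)^{3/4} n^{-3s(1-\nu)/4}\bigr)$ after substituting $n_i=\Omega(n^{s})$ and $h_{ij}=\Omega(n^{-s\nu})$ from Assumption~\ref{assum_bat_bw}. Bounding $|\sum_{i,j}\omega_{ij}R_{ij}^{B}|$ by Cauchy--Schwarz over the $mJ$ indices then yields $\|\B{\omega}\|_2\sqrt{mJ}$ times this common order, producing the first piece of $R(\B{\omega},\B{h})$ with $J$ absorbed into constants. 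The subtle point is showing that the $mJ$ residuals are jointly $O_p$ of the claimed rate uniformly across batches of unequal size and across correlated quantile levels within a batch; independence across $i$ combined with the uniform eigenvalue bound \eqref{eq_lamR} keeps the aggregate second moment under control.

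Asymptotic normality in \eqref{lim_sqrtn_nonp} then reduces to a CLT for the principal stochastic term $\sum_{i,j}\omega_{ij}\phi(\tau_{ij},h_{ij},\D_i)$. The summands across $i$ are independent, and each $\phi(\tau_{ij},h_{ij},\D_i)$ is itself an average of $n_i$ i.i.d.\ terms. A direct second-moment computation, using the covariance $\E{\psi_{\tau}(U)\psi_{\tau'}(U)}=\tau\wedge\tau'-\tau\tau'$ and the local kernel integrals through the Nadaraya-type denominator $v(x;\tau,h)$, reproduces the block-diagonal variance $\Sigma(x,\B{\omega},\B{\tau},\B{h})=a(x)\B{\omega}^{\top}\B{S}(\B{\tau},\B{h})\B{\omega}$ of \eqref{eq_sigma}. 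Lyapunov's condition is verified using the boundedness of $\psi_{\tau}$ together with the lower eigenvalue bound in \eqref{eq_lamR}, which prevents $\Sigma$ from degenerating. Finally, the restriction $\nu<3-2/s$ in \eqref{eq_cr_h} is precisely what guarantees $R(\B{\omega},\B{h})=o_p(\sqrt{\Sigma})$. The main obstacle throughout is controlling the aggregated Bahadur remainder as $m\to\infty$ batches are combined without spoiling the variance rate, and it is exactly here that the coupled conditions on $m$, $n_i$ and $h_{ij}$ in Assumption~\ref{assum_bat_bw} do the heavy lifting.
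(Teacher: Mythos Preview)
Your proposal is correct and follows essentially the same route as the paper's proof: local Bahadur representation (the paper cites \cite{Guerre2012} and separates the step $Q_{Y,h}^*(\tau\mid x)=m(x)+\sigma(x)F_{\varepsilon}^{-1}(\tau)+\beta(x,\tau)h^2+o(h^2)$ into a preliminary lemma, which you fold directly into the expansion), aggregation with the constraints \eqref{eq_nonbias}, Cauchy--Schwarz on the Bahadur residuals to obtain the $\|\B{\omega}\|_2\sqrt{m}$ factor, and a triangular-array CLT for the $\phi$-sum. The only cosmetic difference is that the paper verifies the Lindeberg condition directly (using boundedness of $\psi_\tau K$ to force the truncation indicator to vanish) rather than Lyapunov, but the ingredients you name---boundedness of $\psi_\tau$ and the eigenvalue bounds \eqref{eq_lamR} controlling $\Sigma$ from below---are exactly what drive either version.
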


In Theorem~\ref{them_asyproperty}, the asymptotic distribution of $\widehat{m}(x)$ can throw light on the reasonability of the choices of the optimal weights, bandwidths and quantile levels proposed in the previous section.
For the theorem, we have the following remark.
\begin{remark}

    When the bandwidths have the form $\B{h} = h\B{1}_{mJ}$, by the theorem together with the conditions in~\eqref{eq_nonbias}, we can write the bias term \eqref{eq_defB} in the following concise form
    \begin{equation}\label{eq_defB2}
        B(x, \B{\omega}, \B{\tau}, \B{h}) = \frac{\mu_2}{2}h^2 m^{\prime\prime}(x)\ \mbox{ for } x \in [0, 1],
    \end{equation}
    which is no longer dependent on $\B{\omega}$ and $\B{\tau}$. From Theorem~\ref{them_asyproperty}, we can see that  despite the local estimations on each data set have a non-ignorable bias, the bias of our global estimator is even of the order of $O(\left\|\B{h}\right\|_{\infty}^2)$, which is identical to the standard convergence rate of kernel regressions obtained by using the full data set.
This is owing to our strategy of multilevel quantile-matched composite, in which the weights and quantile levels are well-chosen, such that the non-vanishing biases in local estimators are globally corrected.

\end{remark}

\subsection{Asymptotic mean square error}

If the remainder term $R(\B{\omega}, \B{h})$ in the asymptotic representation (\ref{eq_them_asyproperty}) is ignored, we have the following theorem,  which clarifies the asymptotic mean square error of our estimator.

\begin{theorem}\label{thm_amse}

    Under the conditions of Theorem~\ref{them_asyproperty}, when the vector of bandwidths are of the form of \eqref{eq_bandwithform}, the AMSE of $\widehat{m}(x)$ can be expressed as
    \begin{equation}\label{eq1_amse}
        \text{AMSE}\br{\widehat{m}(x)} = \alpha^4 \br{\sum_{i=1}^m\sum_{j=1}^J \omega_{ij} n_i^{-2\nu} \beta(x, \tau_{ij})}^2 + \frac{a(x)}{\alpha}\sum_{i=1}^m \frac{1}{n_i^{1-\nu}} \B{\omega}_i^{\top}\BmR_1\br{\B{\tau}_i}\B{\omega}_i  \mbox{ for } \ x \in [0, 1].
    \end{equation}
    When the vector of bandwidths further satisfies the form $\B{h} = h\B{1}_{mJ}$ with the scaler $h = \Omega(n^{-s\nu})$, the AMSE of $\widehat{m}(x)$ can be simplified to
    \begin{equation}\label{eq1d1_amse}
        \text{AMSE} \br{\widehat{m}(x)} = \frac{\mu_2^2}{4} m^{\prime\prime}(x)^2 h^4 + \frac{a(x)}{h}\sum_{i=1}^m \frac{1}{n_i} \B{\omega}_i^{\top}\BmR_1\br{\B{\tau}_i}\B{\omega}_i \ \mbox{ for } x \in [0, 1].
    \end{equation}
    In either case, it holds that
    \begin{equation}\label{eq2_amse}
        \text{AMSE}\br{\widehat{m}(x)} = O\br{\|\B{\omega}\|_1^2 n^{-4s\nu}} + \Omega\br{\|\B{\omega}\|_2^2 n^{-s\br{1-\nu}}}\  \mbox{ for } x \in [0, 1].
    \end{equation}

\end{theorem}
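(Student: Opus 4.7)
The plan is to read off the AMSE directly from the asymptotic representation in Theorem~\ref{them_asyproperty}. Once the remainder $R(\B{\omega},\B{h})$ is dropped, the representation exhibits $\widehat m(x)$ as a deterministic shift by $m(x)+B(x,\B{\omega},\B{\tau},\B{h})$ plus the zero-mean stochastic term $\sum_{i,j}\omega_{ij}\phi(\tau_{ij},h_{ij},\D_i)$. Consequently AMSE$(\widehat m(x))=B(x,\B{\omega},\B{\tau},\B{h})^2+\Sigma(x,\B{\omega},\B{\tau},\B{h})$ with $B$ defined by \eqref{eq_defB} and $\Sigma$ by \eqref{eq_sigma}. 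The work is then purely algebraic: substitute the prescribed bandwidth structure and track the dependence on $\alpha$, $\nu$, $n_i$.

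For \eqref{eq1_amse} I would substitute $h_{ij}=\alpha n_i^{-\nu}$ throughout. The bias becomes $B=\alpha^{2}\sum_{i,j}\omega_{ij}n_i^{-2\nu}\beta(x,\tau_{ij})$, whose square yields the first summand. For the variance, since $h_{ij}$ depends only on $i$, the matrix $\BmR(\B{h}_i,\B{\tau}_i)$ in \eqref{eq_defJ} simplifies: every $\sqrt{h_{ij}h_{ij'}}$ in the denominator equals $\alpha n_i^{-\nu}$, so $\BmR(\B{h}_i,\B{\tau}_i)=\alpha^{-1}n_i^{\nu}\BmR_1(\B{\tau}_i)$ by the very definition $\BmR_1(\B{\tau}_i)=\BmR(\B{1}_J,\B{\tau}_i)$ in \eqref{eq_A2}. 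Plugging this into $\B{S}(\B{\tau},\B{h})=\operatorname{diag}(n_i^{-1}\BmR(\B{h}_i,\B{\tau}_i))$ and forming $a(x)\B{\omega}^{\top}\B{S}\B{\omega}$ gives the second summand of \eqref{eq1_amse}.

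For \eqref{eq1d1_amse} I would further impose $\B{h}=h\B{1}_{mJ}$ and exploit both parts of \eqref{eq_nonbias}. Expanding $\beta(x,\tau_{ij})=\frac{\mu_2}{2}(m''(x)+\sigma''(x)F_\varepsilon^{-1}(\tau_{ij}))$ from \eqref{eq_defbeta} gives
\[
B(x,\B{\omega},\B{\tau},\B{h})=\frac{\mu_2}{2}h^{2}\Big(m''(x)\sum_{i,j}\omega_{ij}+\sigma''(x)\sum_{i,j}\omega_{ij}F_\varepsilon^{-1}(\tau_{ij})\Big)=\frac{\mu_2}{2}h^{2}m''(x),
\]
where the first sum equals $1$ and the second equals $0$ by \eqref{eq_nonbias}. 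The variance calculation is the same as above with $\alpha n_i^{-\nu}$ replaced by $h$, giving the factor $a(x)/h$ multiplying $\sum_i n_i^{-1}\B{\omega}_i^{\top}\BmR_1(\B{\tau}_i)\B{\omega}_i$.

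The order bound \eqref{eq2_amse} then follows from Assumption~\ref{assum_bat_bw} and the eigenvalue bounds \eqref{eq_lamR}. For the bias squared, $h_{ij}=\Omega(n^{-s\nu})$ and the smoothness of $m,\sigma$ yield $|B|\le C\|\B{\omega}\|_1 n^{-2s\nu}$, hence $B^{2}=O(\|\B{\omega}\|_1^{2}n^{-4s\nu})$. For the variance, the bounds $0<\lambda_{\min}\le\lambda_{\min}(\BmR_1(\B{\tau}_i))\le\lambda_{\max}(\BmR_1(\B{\tau}_i))\le\lambda_{\max}$ pin $\B{\omega}_i^{\top}\BmR_1(\B{\tau}_i)\B{\omega}_i=\Omega(\|\B{\omega}_i\|_2^{2})$, and combined with $n_i^{\nu-1}=\Omega(n^{s(\nu-1)})$ this yields $\Sigma=\Omega(\|\B{\omega}\|_2^{2}n^{-s(1-\nu)})$. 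I do not anticipate a serious obstacle here: the only subtle point is ensuring that the bias simplification for \eqref{eq1d1_amse} uses both constraints in \eqref{eq_nonbias}, and that the eigenvalue hypothesis is invoked as a two-sided bound so the $\Omega$ estimate is sharp rather than merely an upper bound.
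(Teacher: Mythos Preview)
Your proposal is correct and follows essentially the same route as the paper: the paper simply declares \eqref{eq1_amse} and \eqref{eq1d1_amse} to be direct corollaries of the asymptotic normality in \eqref{lim_sqrtn_nonp} and then bounds $B^2$ and $\Sigma$ separately for \eqref{eq2_amse}, invoking Lemma~\ref{lemm_cr_sgm} for the variance order---which is exactly the eigenvalue argument you spell out inline. Your treatment is in fact more explicit than the paper's, particularly the algebra showing $\BmR(\B{h}_i,\B{\tau}_i)=\alpha^{-1}n_i^{\nu}\BmR_1(\B{\tau}_i)$ and the use of both constraints in \eqref{eq_nonbias} to collapse the bias to \eqref{eq_defB2}.
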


Theorem~\ref{thm_amse} shows that the AMSE of our estimator depends on the used weights, quantile levels and bandwidths.
Moreover, as shown in \eqref{eq2_amse}, the convergence rate of $\widehat{m}(x)$ not only relies on the parameter $\nu$ but also relies on the weights.
Before finding the optimal convergence rate of our estimator, it is necessary to discuss the properties of the optimal weights in \eqref{eq1_optw_var}.
The following theorem justifies the optimality of the weights given by \eqref{eq1_optw_var}.

\begin{theorem}\label{thm_optw}
    Under the conditions of Theorem~\ref{them_asyproperty}, the weight vector in \eqref{eq1_optw_var} minimizes the asymptotic variance of $\widehat{m}(x)$.
    Moreover, under the quantile levels $\B{\tau} = \B{\tau}\br{\bar{\tau}^*}$ or $\B{\tau} =  \B{\tau}\br{\bar{\tau}^{**}}$ which satisfy \eqref{eq_bartau} and \eqref{eq2_bartau}, respectively, it holds that
    \begin{equation}\label{eq_minvar_wu}
        \Sigma\br{x, \B{\omega}^*, \B{\tau}, \B{h}} \leq \Sigma\br{x, \B{\omega}_u, \B{\tau}, \B{h}} \text{ with } \B{\omega}_u = \frac{1}{mJ} \B{1}_{mJ}.
    \end{equation}
\end{theorem}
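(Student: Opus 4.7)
The plan is to treat Theorem~\ref{thm_optw} as a constrained quadratic optimization problem and then verify that the claimed inequality follows from feasibility of the uniform weights under each of the two choices of $\boldsymbol{\tau}$.

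For the first (optimality) statement, I note that by \eqref{eq_sigma}, minimizing $\Sigma(x,\boldsymbol{\omega},\boldsymbol{\tau},\boldsymbol{h})$ in $\boldsymbol{\omega}$ is equivalent to minimizing the quadratic form $\boldsymbol{\omega}^{\top}\boldsymbol{S}(\boldsymbol{\tau},\boldsymbol{h})\boldsymbol{\omega}$ subject to the two linear constraints in \eqref{eq_nonbias}, namely $\boldsymbol{\omega}^{\top}\boldsymbol{1}_{mJ}=1$ and $\boldsymbol{\omega}^{\top}\boldsymbol{F}_{\varepsilon}^{-1}(\boldsymbol{\tau})=0$. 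The first step is to verify that $\boldsymbol{S}(\boldsymbol{\tau},\boldsymbol{h})$ is positive definite: from \eqref{eq_defJ} one has $\boldsymbol{\mathcal{R}}(\boldsymbol{h}_i,\boldsymbol{\tau}_i)=\boldsymbol{H}_i^{-1/2}\boldsymbol{\mathcal{R}}_1(\boldsymbol{\tau}_i)\boldsymbol{H}_i^{-1/2}$ with $\boldsymbol{H}_i=\operatorname{diag}(h_{ij})_{j=1}^J\succ 0$, and the lower-bound eigenvalue condition \eqref{eq_lamR} in Assumption~\ref{assum_wtau} gives $\boldsymbol{\mathcal{R}}_1(\boldsymbol{\tau}_i)\succ 0$, hence $\boldsymbol{S}(\boldsymbol{\tau},\boldsymbol{h})\succ 0$ since it is block-diagonal in these blocks. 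Then I form the Lagrangian $L(\boldsymbol{\omega},\lambda_1,\lambda_2)=\boldsymbol{\omega}^{\top}\boldsymbol{S}\boldsymbol{\omega}-2\lambda_1(\boldsymbol{\omega}^{\top}\boldsymbol{1}_{mJ}-1)-2\lambda_2\boldsymbol{\omega}^{\top}\boldsymbol{F}_{\varepsilon}^{-1}(\boldsymbol{\tau})$ and set $\nabla_{\boldsymbol{\omega}}L=0$, which gives $\boldsymbol{\omega}=\lambda_1\boldsymbol{S}^{-1}\boldsymbol{1}_{mJ}+\lambda_2\boldsymbol{S}^{-1}\boldsymbol{F}_{\varepsilon}^{-1}(\boldsymbol{\tau})=\lambda_1\boldsymbol{d}_1+\lambda_2\boldsymbol{d}_2$. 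Substituting into the two constraints and using the symmetry identity $\boldsymbol{d}_1^{\top}\boldsymbol{F}_{\varepsilon}^{-1}(\boldsymbol{\tau})=\boldsymbol{d}_2^{\top}\boldsymbol{1}_{mJ}=c_2$ yields the linear system $\lambda_1 c_3+\lambda_2 c_2=1$, $\lambda_1 c_2+\lambda_2 c_1=0$, whose solution is $\lambda_1=c_1/(c_1c_3-c_2^2)$, $\lambda_2=-c_2/(c_1c_3-c_2^2)$, producing exactly the formula \eqref{eq1_optw_var}. Positive definiteness of $\boldsymbol{S}$ together with convexity of the feasible affine set guarantees that this critical point is a global minimizer.

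For the inequality \eqref{eq_minvar_wu}, the idea is simply to confirm that $\boldsymbol{\omega}_u=\boldsymbol{1}_{mJ}/(mJ)$ is a feasible competitor under each $\boldsymbol{\tau}$. Under $\boldsymbol{\tau}=\boldsymbol{\tau}(\bar{\tau}^*)$, defining relation \eqref{eq_bartau} gives $\boldsymbol{1}_{mJ}^{\top}\boldsymbol{F}_{\varepsilon}^{-1}(\boldsymbol{\tau}(\bar{\tau}^*))=0$, so both constraints in \eqref{eq_nonbias} are met by $\boldsymbol{\omega}_u$, and the optimality proved above immediately gives $\Sigma(x,\boldsymbol{\omega}^*,\boldsymbol{\tau},\boldsymbol{h})\leq \Sigma(x,\boldsymbol{\omega}_u,\boldsymbol{\tau},\boldsymbol{h})$. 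Under $\boldsymbol{\tau}=\boldsymbol{\tau}(\bar{\tau}^{**})$ I argue that $\boldsymbol{\omega}^*$ and $\boldsymbol{\omega}^{**}$ coincide: the vector $\boldsymbol{\omega}^{**}=\boldsymbol{d}_3/(\boldsymbol{1}_{mJ}^{\top}\boldsymbol{d}_3)$ trivially satisfies the first constraint, and the defining equation \eqref{eq2_bartau} multiplied by the positive scalar $1/(\boldsymbol{1}_{mJ}^{\top}\boldsymbol{d}_3)$ gives $(\boldsymbol{\omega}^{**})^{\top}\boldsymbol{F}_{\varepsilon}^{-1}(\boldsymbol{\tau}(\bar{\tau}^{**}))=0$, so $\boldsymbol{\omega}^{**}$ is feasible for the doubly constrained problem; since it also solves the less constrained problem over $\{\boldsymbol{\omega}:\boldsymbol{\omega}^{\top}\boldsymbol{1}_{mJ}=1\}$, it must coincide with $\boldsymbol{\omega}^*$. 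Invoking \eqref{eq_minwoptopt} on the larger feasible set (which contains $\boldsymbol{\omega}_u$) then yields $\Sigma(x,\boldsymbol{\omega}^*,\boldsymbol{\tau},\boldsymbol{h})=\Sigma(x,\boldsymbol{\omega}^{**},\boldsymbol{\tau},\boldsymbol{h})\leq \Sigma(x,\boldsymbol{\omega}_u,\boldsymbol{\tau},\boldsymbol{h})$.

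The hard part is essentially bookkeeping rather than analysis: the calculation is textbook Lagrange multipliers, and the only genuinely non-trivial ingredient is the positive definiteness of $\boldsymbol{S}(\boldsymbol{\tau},\boldsymbol{h})$, which the bandwidth scaling $\boldsymbol{\mathcal{R}}(\boldsymbol{h}_i,\boldsymbol{\tau}_i)=\boldsymbol{H}_i^{-1/2}\boldsymbol{\mathcal{R}}_1(\boldsymbol{\tau}_i)\boldsymbol{H}_i^{-1/2}$ reduces to the eigenvalue bound \eqref{eq_lamR}. The one place to be careful is verifying the symmetry identity $\boldsymbol{d}_1^{\top}\boldsymbol{F}_{\varepsilon}^{-1}(\boldsymbol{\tau})=\boldsymbol{d}_2^{\top}\boldsymbol{1}_{mJ}$ so that the 2$\times$2 Lagrange system has the stated entries, and to note that the denominator $c_1c_3-c_2^2$ is strictly positive by Cauchy--Schwarz applied to the $\boldsymbol{S}^{-1}$-inner product (with strict inequality provided $\boldsymbol{1}_{mJ}$ and $\boldsymbol{F}_{\varepsilon}^{-1}(\boldsymbol{\tau})$ are not collinear, which holds under Assumption~\ref{assum_wtau}).
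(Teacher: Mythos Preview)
Your proposal is correct and follows essentially the same route as the paper: Lagrange multipliers for the constrained quadratic to obtain \eqref{eq1_optw_var}, then feasibility of $\boldsymbol{\omega}_u$ under $\boldsymbol{\tau}(\bar\tau^*)$ for the first inequality and invoking \eqref{eq_minwoptopt} for the second. You supply more detail than the paper (positive definiteness of $\boldsymbol{S}$ via the scaling $\boldsymbol{\mathcal{R}}(\boldsymbol{h}_i,\boldsymbol{\tau}_i)=\boldsymbol{H}_i^{-1/2}\boldsymbol{\mathcal{R}}_1(\boldsymbol{\tau}_i)\boldsymbol{H}_i^{-1/2}$, the explicit $2\times 2$ Lagrange system, and nondegeneracy of $c_1c_3-c_2^2$), and your verification that $\boldsymbol{\omega}^{**}$ is feasible for the doubly constrained problem makes explicit why the paper can write $\boldsymbol{\omega}^{**}=\boldsymbol{\omega}^*(\boldsymbol{\tau}(\bar\tau^{**}),\boldsymbol{h})$, but none of this is a genuine departure from the paper's argument.
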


Theorem~\ref{thm_optw} shows that if the quantile levels satisfy \eqref{eq_bartau} or \eqref{eq2_bartau}, the use of the optimal weight vector can achieve a smaller asymptotic variance of $\widehat{m}(x)$ than that obtained by the use of the uniform weight vector.
This conclusion will be useful for the efficiency of our estimator discussed later. For the choice of weight vector, we have the following result.


\begin{corollary}\label{coro_optw_norm}
    Under the conditions of Theorem~\ref{them_asyproperty}, let the quantile levels $\B{\tau}\br{\bar{\tau}^*}$ and $\B{\tau}\br{\bar{\tau}^{**}}$ satisfy \eqref{eq_bartau} and \eqref{eq2_bartau}, respectively.
    Then for $\B{\omega} = \B{\omega}^*\br{\B{\tau}\br{\bar{\tau}^*}, \B{h}}$ and $\B{\omega} = \B{\omega}^*\br{\B{\tau}\br{\bar{\tau}^{**}}, \B{h}}$, the following holds:
    \begin{equation}\label{eq_optw_norm}
        \|\B{\omega}\|_1 = O(1), \quad \|\B{\omega}\|_2^2 = \Omega(m^{-1}).
    \end{equation}

\end{corollary}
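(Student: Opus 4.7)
The plan is to reduce both claims to a single upper bound: once I have $\|\B{\omega}\|_2^2 = O(m^{-1})$, the companion lower bound $\|\B{\omega}\|_2^2 \geq 1/(mJ)$ follows at once from Cauchy--Schwarz applied to the constraint $\B{\omega}^\top \B{1}_{mJ} = 1$, so $\|\B{\omega}\|_2^2 = \Omega(m^{-1})$ in the paper's notation; and then $\|\B{\omega}\|_1 \leq \sqrt{mJ}\,\|\B{\omega}\|_2 = O(1)$ is a second Cauchy--Schwarz. So the only real work is the upper bound on $\|\B{\omega}\|_2^2$, and the idea is to exploit optimality against the uniform weight vector $\B{\omega}_u = (mJ)^{-1}\B{1}_{mJ}$ through the quadratic form $\B{\omega}^\top \B{S}(\B{\tau}, \B{h}) \B{\omega}$ and then translate to the $\ell_2$ norm via spectral bounds on $\B{S}$.

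First I would verify that $\B{\omega}_u$ is a feasible competitor in each case and hence yields the key inequality $\B{\omega}^\top \B{S} \B{\omega} \leq \B{\omega}_u^\top \B{S} \B{\omega}_u$. When $\bar\tau = \bar\tau^*$, the defining property \eqref{eq_bartau} gives $\B{\omega}_u^\top \B{F}_{\varepsilon}^{-1}(\B{\tau}) = 0$, so $\B{\omega}_u$ satisfies both constraints in \eqref{eq_nonbias}, and the minimality of $\B{\omega}^*\br{\B{\tau}, \B{h}}$ from \eqref{eq1_optw_var} yields the inequality directly. When $\bar\tau = \bar\tau^{**}$, $\B{\omega}_u$ may fail the second constraint of \eqref{eq_nonbias}, so I instead invoke the stronger optimality \eqref{eq_minwoptopt}: $\B{\omega}^{**}$ minimizes the asymptotic variance over the relaxed set $\{\B{\omega}:\B{\omega}^\top \B{1}_{mJ}=1\}$, which certainly contains $\B{\omega}_u$.

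Next I would compute the spectrum of $\B{S}$. Substituting $\B{h} = \alpha\otimes_{i=1}^m(n_i^{-\nu}\B{1}_J)$ into \eqref{eq_defJ} shows that $\B{S}(\B{\tau}, \B{h}) = \alpha^{-1}\,\mathrm{diag}(n_i^{\nu-1}\BmR_1(\B{\tau}_i))_{i=1}^m$ is block-diagonal, so its eigenvalues are the union of those of its blocks. Combining the spectral bounds \eqref{eq_lamR} on each $\BmR_1(\B{\tau}_i)$ with the batch-size balance \eqref{eq_n_m} and the fact that $\nu<3-2/s\leq 1$, I obtain that both $\lambda_{\min}(\B{S})$ and $\lambda_{\max}(\B{S})$ are of order $\alpha^{-1}n^{s(\nu-1)}$. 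Hence $\B{\omega}_u^\top \B{S} \B{\omega}_u \leq \lambda_{\max}(\B{S})/(mJ) = O(\alpha^{-1}n^{s(\nu-1)}/m)$, and so by the preceding comparison $\B{\omega}^\top \B{S}\B{\omega}$ is of the same order. Finally, $\|\B{\omega}\|_2^2 \leq \B{\omega}^\top \B{S}\B{\omega}/\lambda_{\min}(\B{S}) = O(m^{-1})$, closing the loop.

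I expect no genuine obstacle in this proof: the block-diagonal form of $\B{S}$ makes the spectral bounds essentially one line, and the feasibility-and-comparison argument dispatches both choices of $\bar\tau$ uniformly. The only subtle point worth stressing in the write-up is that the case $\bar\tau=\bar\tau^{**}$ cannot proceed from the constrained formula \eqref{eq1_optw_var} alone -- one must use the stronger unconstrained-minimality property \eqref{eq_minwoptopt} -- because the uniform vector $\B{\omega}_u$ need not obey the bias-correction constraint in \eqref{eq_nonbias} under $\bar\tau^{**}$.
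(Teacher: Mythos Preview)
Your proposal is correct and follows essentially the same route as the paper: compare the optimal weight against the uniform vector $\B{\omega}_u$ via the minimality properties \eqref{eq_minvar_wu} and \eqref{eq_minwoptopt}, convert the resulting quadratic-form inequality into an $\ell_2$ bound through spectral estimates on $\B{S}$ (the paper packages this step as Lemma~\ref{lemm_cr_sgm}, you do the block-diagonal computation directly), and finish with Cauchy--Schwarz for both the lower bound on $\|\B{\omega}\|_2$ and the upper bound on $\|\B{\omega}\|_1$. The only cosmetic difference is that your spectral computation assumes the exact bandwidth form \eqref{eq_bandwithform} rather than the slightly more general $h_{ij}=\Omega(n_i^{-\nu})$ of Assumption~\ref{assum_bat_bw}, but the extension is immediate.
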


The above indicates the regularity of the selected weights.
On the other hand, combining Theorem~\ref{thm_amse} and Corollary~\ref{coro_optw_norm}, we immediately obtain the convergence rate of the estimator $\widehat{m}(x)$ with the optimal weights.

\begin{corollary}\label{coro_optamse}

    Under the conditions of Theorem~\ref{them_asyproperty},
    given the bandwidths $\B{h} = \B{h}\br{\alpha, \nu}$ and the weights
    \begin{equation}
        \B{\omega} = \B{\omega}^*\br{\B{\tau}\br{\bar{\tau}^*}, \B{h}\br{\alpha, \nu}} \quad \text{ or } \quad  \B{\omega} = \B{\omega}^*\br{\B{\tau}\br{\bar{\tau}^{**}}, \B{h}\br{\alpha, \nu}},
    \end{equation}
    then, the following conclusions hold:
    \begin{itemize}
        \item If $11/15 \leq s \leq 1$, we have $\text{AMSE}\br{\widehat{m}(x)} = O\br{n^{-4s\nu}+n^{s\nu-1}}$ for $x \in [0, 1]$. If $2/3 < s \leq 11/15$, we have $\text{AMSE}\br{\widehat{m}(x)} = O\br{n^{-4s\nu}}$ for $x \in [0, 1]$.
         \item Particularly, when $11/15 \leq s \leq 1$ and $\nu = \nu^*$ given in \eqref{eq_optnu}, the global estimator $\widehat{m}(x)$ enjoys the optimal convergence rate as
        \begin{equation*}
            \text{AMSE}\br{\widehat{m}\br{x}} = O(n^{-4/5})\  \mbox{ for } x \in [0, 1].
        \end{equation*}
    \end{itemize}
\end{corollary}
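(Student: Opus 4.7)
The plan is to combine the AMSE expansion of Theorem~\ref{thm_amse} with the norm bounds on the optimal weights given by Corollary~\ref{coro_optw_norm}, and to close the argument with a short case analysis driven by the constraint on $\nu$ in Assumption~\ref{assum_bat_bw}.

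First I would invoke \eqref{eq2_amse} to write
$$\text{AMSE}\br{\widehat{m}(x)} = O\br{\|\B{\omega}\|_1^2\, n^{-4s\nu}} + O\br{\|\B{\omega}\|_2^2\, n^{-s(1-\nu)}}.$$
Corollary~\ref{coro_optw_norm} then supplies $\|\B{\omega}\|_1 = O(1)$ and $\|\B{\omega}\|_2^2 = O(m^{-1})$ under either choice $\bar{\tau}^*$ or $\bar{\tau}^{**}$. Assumption~\ref{assum_bat_bw} gives $n_i = \Omega(n^s)$, so $n = \sum_{i=1}^m n_i = \Omega(m\, n^s)$, which yields $m^{-1} = O(n^{s-1})$. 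Plugging these three ingredients in gives
$$\text{AMSE}\br{\widehat{m}(x)} \;=\; O\br{n^{-4s\nu}} + O\br{n^{s-1}\cdot n^{-s(1-\nu)}} \;=\; O\br{n^{-4s\nu} + n^{s\nu - 1}}.$$

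Next I would split by the value of $s$. When $11/15 \leq s \leq 1$ the bound $\nu < 3 - 2/s$ from \eqref{eq_cr_h} permits $\nu$ to exceed $1/(5s)$, so neither exponent in the sum dominates a priori and the first bullet is immediate. When $2/3 < s \leq 11/15$, a direct check shows that $3 - 2/s \leq 1/(5s)$ (the inequality is equivalent to $15s - 10 \leq 1$); hence \eqref{eq_cr_h} forces $\nu < 1/(5s)$, so $5s\nu < 1$ and $n^{s\nu - 1} = o\br{n^{-4s\nu}}$. The variance term is therefore absorbed into the bias term, producing the second bullet $O\br{n^{-4s\nu}}$. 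For the last bullet, substitute $\nu = \nu^*$ defined in \eqref{eq_optnu}: using $n_i = \Omega(n/m)$ one has $\ln n - \ln m = s\ln n + O(1)$, whence $s\nu^* = 1/5 + o(1)$. Both exponents $-4s\nu^*$ and $s\nu^* - 1$ then converge to $-4/5$, yielding $\text{AMSE}\br{\widehat{m}(x)} = O\br{n^{-4/5}}$, which is the advertised optimal rate.

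The main obstacle, such as it is, is bookkeeping: one has to track two sources of constraints on $\nu$ — the upper bound $3 - 2/s$ inherited from Assumption~\ref{assum_bat_bw} (which controls the remainder in Theorem~\ref{them_asyproperty}) and the variance--bias balance $\nu \leq 1/(5s)$ — and observe that these two exactly coincide at the threshold $s = 11/15$. This coincidence is what makes the case distinction clean and what explains why the optimal rate $n^{-4/5}$ becomes attainable precisely once $s \geq 11/15$; below that threshold the assumption on batch sizes prevents $\nu$ from reaching the bias--variance-balancing value and the bias always dominates.
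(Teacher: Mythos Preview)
Your proposal is correct and follows exactly the route the paper indicates: the paper does not spell out a proof of this corollary, merely noting that it is obtained ``combining Theorem~\ref{thm_amse} and Corollary~\ref{coro_optw_norm}'', and your argument is precisely that combination carried out explicitly, including the observation that $m^{-1}=O(n^{s-1})$ from Assumption~\ref{assum_bat_bw} and the threshold computation $3-2/s=1/(5s)\iff s=11/15$. The only cosmetic point is in the last step: to conclude $O(n^{-4/5})$ rather than $O(n^{-4/5+o(1)})$ you should note that $s\nu^*-1/5=O(1/\ln n)$, so $n^{s\nu^*-1/5}=e^{O(1)}=O(1)$; you essentially have this since you record $\ln n-\ln m=s\ln n+O(1)$, but it is worth stating the consequence explicitly.
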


The above shows the oracle convergence rate of our estimator, which is the same as that of the entire data estimator.

\subsection{Asymptotic relative efficiency}

In this subsection, we establish the asymptotic relative efficiency (ARE) between our estimator $\widehat{m}\br{x}$ and the oracle local linear estimator $\widehat{m}_{\text{oll}}(x)$ obtained on the entire data set $\D$ by comparing their AMSEs.
Throughout this subsection, we suppose Assumptions~\ref{assum_fFK}-\ref{assum_bat_bw} hold with $11/15 \leq s \leq 1$ and the vector of bandwidth satisfies $\B{h} = h\B{1}_{mJ}$.
Then the AMSE of our estimator is given by \eqref{eq1d1_amse}.
Minimizing $\text{AMSE}\br{\widehat{m}(x)}$ and $\text{AMISE}\br{\widehat{m}}$, the optimal variable bandwidth and constant bandwidth for our estimator are given by \eqref{eq_bw_optch} and \eqref{eq_bw_optvh}, respectively. On the other hand, according to \cite{Fan1993Local, Kai2010}, under some regular conditions, the oracle local linear estimator $\widehat{m}_{\text{oll}}(x)$ for $m(x)$ has AMSE as
\begin{equation}\label{eq_amse_lls}
    \text{AMSE}\br{\widehat{m}_{\text{oll}}(x)} = \frac{\mu_2^2}{4} m^{\prime\prime}(x)^2 h^4 + \frac{a(x)}{nh} \  \mbox{ for } x \in [0, 1].
\end{equation}
Minimizing $\text{AMSE}\br{\widehat{m}_{\text{oll}}(x)}$ and $\text{AMISE}\br{\widehat{m}_{\text{oll}}}$, respectively, the optimal variable bandwidth and constant bandwidth for $\widehat{m}_{\text{oll}}(x)$ can be respectively expressed as
\begin{equation}\label{eq_bw_opth_lls}
    h_{\text{oll}}^*\br{x} = \br{\frac{a(x)}{\mu_2^2 m^{\prime\prime}(x)^2}}^{\frac{1}{5}} n^{-\frac{1}{5}}, \quad h_{\text{oll}}^* = \br{\frac{\int_{\R} a(x) W(x) dx}{\mu_2^2 \int_{\R} m^{\prime\prime}(x)^2 W(x) dx}}^{\frac{1}{5}} n^{-\frac{1}{5}}.
\end{equation}

The ARE between $\widehat{m}\br{x}$ and $\widehat{m}_{\text{oll}}(x)$ is defined by
\begin{equation*}
    \text{ARE}\br{\widehat{m}\br{x}, \widehat{m}_{\text{oll}}(x)} = \frac{\text{AMSE}_{\text{opt}}\br{\widehat{m}_{\text{oll}}(x)}}{\text{AMSE}_{\text{opt}}\br{\widehat{m}\br{x}}},
\end{equation*}
where $\text{AMSE}_{\text{opt}}\br{\widehat{m}\br{x}}$ and $\text{AMSE}_{\text{opt}}\br{\widehat{m}_{\text{oll}}(x)}$ respectively denote the AMSE of $\widehat{m}\br{x}$ and $\widehat{m}_{\text{oll}}(x)$ evaluated at the optimal variable bandwidths given in \eqref{eq_bw_optch} and \eqref{eq_bw_opth_lls}.
Then by straightforward calculations, it follows from the AMSEs in \eqref{eq1d1_amse} and \eqref{eq_amse_lls} that
\begin{equation}\label{eq_are}
    \text{ARE}\br{\widehat{m}\br{x}, \widehat{m}_{\text{oll}}(x)} = \br{\sum_{i=1}^m \frac{n}{n_i} \B{\omega}_i^{\top}\BmR_1\br{\B{\tau}_i}\B{\omega}_i}^{-\frac{4}{5}}.
\end{equation}
From \eqref{eq_are}, we can see that the ARE not only depends on the error distribution, but also depends on the used weights and quantile levels.
Moreover, the relevance among them is rather complex.
To address this issue, we turn to discuss the lower bound of the ARE.

Apart from Assumptions~\ref{assum_fFK}-\ref{assum_bat_bw} mentioned before, we further assume that
the data batches $\D_i$ have the same size, i,e., $n_i=n/m$ for all $i$,
    the weights are given by $\B{\omega} = \B{\omega}^*\br{\B{\tau}, \B{h}(1, \nu^*)}$ with the quantile levels $\B{\tau} = \B{\tau}\br{\bar{\tau}^*}$ satisfying \eqref{eq_bartau}.
Then, Theorem~\ref{thm_optw} leads to a lower bound for ARE as
\begin{equation}\label{eq_areRu}
    \text{ARE}\br{\widehat{m}\br{x}, \widehat{m}_{\text{oll}}(x)} \geq R_{\text{u}} = \br{\frac{1}{m}\sum_{i=1}^m \frac{1}{J^2} \B{1}_{J}^{\top}\BmR_1\br{\B{\tau}_i}\B{1}_J}^{-\frac{4}{5}}.
\end{equation}
In \eqref{eq_areRu}, $R_{\text{u}}$ is just the ARE between the local CQR estimator with uniform weights and the oracle local linear estimator $\widehat{m}_{\text{oll}}(x)$.
According to the values of $J$, we have the following observations:
\begin{itemize}
    \item For relatively small $J$, it was shown by Table 3.1 in \cite{Kai2010} that
    under some typical non-normal error distributions, e.g., Laplace distribution, $t$-distribution and mixtures of normal distributions, it holds that $R_{\text{u}} > 1$.
    This implies that our estimator is able to improve the estimation efficiency for relatively small $J$ if the error distribution is non-normal.
    \item For large $J$, Theorem 2 in \cite{Kai2010} shows that regardless of whether the error is normal or not, $R_{\text{u}} \to 1$ as $J \to \infty$ and $d_{\tau} \to 1$.
    Recalling \eqref{eq_bartau}, we have $\bar{\tau}^{*} = \frac{1}{2}$ for symmetric error distributions, which implies that $d_{\tau}$ can be close to $1$ and the constraint in \eqref{eq_consttau} can be satisfied.
    Thus, for symmetric error distributions including the normal distribution, our estimator does not lose any efficiency, as long as $J$ is large and $d_{\tau}$ approaches $1$.
\end{itemize}


\section{Practical Implementation}\label{sec_praimple}


\subsection{Pilot estimations for weights and quantile levels}\label{sec_pil_est}

As discussed in Section~\ref{sec_meth}, the choices of weights and quantile levels rely on the information of CDF $F_{\varepsilon}(\cdot)$, PDF $f_{\varepsilon}(\cdot)$ and $\sigma(\cdot)$.
When these functions are unknown, they can be replaced by the corresponding estimators.
As an example, we propose the following pilot estimation procedures.

The functions $F_{\varepsilon}(\cdot)$, $f_{\varepsilon}(\cdot)$ and $\sigma(\cdot)$ can be estimated as follows:
\begin{enumerate}
    \item Construct the N-W estimator of the DC form as an initial estimator for $m(\cdot)$ by
        \begin{equation}\label{eq_pilmnw}
            \widehat{m}_{\text{nw}}(x) = \frac{\sum_{i=1}^m S_i^e(x)}{\sum_{i=1}^m T_i(x)}, \quad x \in [0,1],
        \end{equation}
        where $S_i^e(x)$ and $T_i(x)$ are computed on the $i$-th data batch by
        \begin{equation*}
            S_i^e(x) = \sum_{j=1}^{n_i} Y_{ij}K_h(X_{ij} - x), \quad T_i(x) = \sum_{j=1}^{n_i} K_h(X_{ij} - x).
        \end{equation*}
    \item Estimate $\sigma(\cdot)$ by N-W estimator of the DC form by
        \begin{equation}\label{eq_pilsgmnw}
            \widehat{\sigma}^2(x) = \frac{\sum_{i=1}^m S_i^{\sigma}(x)}{\sum_{i=1}^m T_i(x)}, \quad x \in [0,1],
        \end{equation}
        where $S_i^{\sigma}(x) = \sum_{j=1}^{n_i} \left(Y_{ij} - \widehat{m}_{nw}(X_{ij}) \right)^2K_h(X_{ij} - x)$ is computed on the $i$-th data batch.
    \item Let $\varepsilon_{ij} = \frac{Y_{ij} - \widehat{m}_{\text{nw}}(X_{ij})}{\widehat{\sigma}(X_{ij})}$, for $j = 1, \cdots, n_i$ and $i = 1, \cdots, m$. For $\tau \in (0, 1)$ and $x \in [0,1]$, estimate $f_{\varepsilon}\left(x\right)$ and $F_{\varepsilon}(x)$ by
        \begin{equation*}
            \widehat{f}_{\varepsilon}\left(x\right) = \frac{1}{m} \sum_{i=1}^m\widehat{f}_{\varepsilon, i}\left(x\right), \quad \widehat{F}_{\varepsilon}(x) = \int_{-\infty}^{x} \widehat{f}_{\varepsilon}\left(x\right) dx,
        \end{equation*}
        respectively, where $\widehat{f}_{\varepsilon}\left(x\right)$ is local estimator computed on the $i$-the data batch by
        \begin{equation*}
            \widehat{f}_{\varepsilon, i}\left(x\right) = \frac{1}{n_i} \sum_{j=1}^{n_i} K_h(\widehat{\varepsilon}_{ij} - x).
        \end{equation*}
\end{enumerate}

\subsection{Practical bandwidth selection}\label{sec_bw_select}

The bandwidth selection is an important issues in practical application.
According to the idea in \cite{Kai2010}, we give two bandwidth selectors as follows:
\begin{itemize}
    \item The ``pilot'' selector: We can use the optimal bandwidth formulas in Subsection~\ref{sec_opt_bw} to estimate the optimal
    bandwidth.
    The unknown functions $F_{\varepsilon}$, $f_{\varepsilon}$ and $\sigma$ can be replaced with the associated pilot estimators given in the previous subsection, and $\beta(x, \tau_{ij})$ can be estimated by
    \begin{equation*}
        \widehat{\beta}(x, \tau_{ij}) = \mu_2 \widehat{b}_{ij}^{(2)}, \quad i = 1, \cdots, m, \quad j = 1, \cdots, J,
    \end{equation*}
    where $\widehat{b}_{ij}^{(2)}$ is the local cubic CQR estimators defined by
    \begin{equation*}
        \br{\widehat{a}_i, \widehat{b}_{ij}^{(1)}, \widehat{b}_{ij}^{(2)}} = \argmin_{\br{a, b^{(1)}, b^{(2)}} \in \R^3} \sum_{l=1}^{n_i} \rho_{\tau_{ij}}\left(Y_{il} - a - \sum_{k=1}^2 b^{(k)} (X_{il} - x)^k\right) K_{h_p}\left(X_{il} - x\right),
    \end{equation*}
    with $h_p$ being a pre-given pilot bandwidth.

    \item A short-cut strategy: Recalling the optimal bandwidths in \eqref{eq_bw_optch} and \eqref{eq_bw_optvh}, and comparing them with the ones in \eqref{eq_bw_opth_lls}, we can obtain the following relationships:
    \begin{equation*}
        h^*(x, \B{\omega}) = V^{\frac{1}{5}}\br{\B{\omega}, \B{\tau}} h_{\text{oll}}^*\br{x}, \   h^*(\B{\omega}) = V^{\frac{1}{5}}\br{\B{\omega}, \B{\tau}} h_{\text{oll}}^*\ \text{ with }  V\br{\B{\omega}, \B{\tau}} = \sum_{i=1}^m \frac{n}{n_i} \B{\omega}_i^{\top}\BmR_1\br{\B{\tau}_i}\B{\omega}_i,
    \end{equation*}
    which shows that the optimal bandwidths for $\widehat{m}(x)$ can be estimated via the optimal bandwidths for $\widehat{m}_{\text{oll}}(x)$.
    Actually, the optimal bandwidths for the local least squares estimators can be chosen by the existing bandwidth selectors, and $V\br{\B{\omega}, \B{\tau}}$ can be replaced with the associated plug-in estimator by using the pilot estimators in the previous subsection.

\end{itemize}

\subsection{The entire estimation procedure}

To be clearer, we summary the entire estimation procedure for $m(x)$ as follows:
\begin{enumerate}
    \item Carry out the pilot estimations in Section~\ref{sec_pil_est} to obtain the estimators for $F_{\varepsilon}(\cdot)$, $f_{\varepsilon}(\cdot)$ and $\sigma(\cdot)$.
    \item After replacing the unknown parameters and functions with the corresponding estimators, compute the quantile level vector $\B{\tau}* = \B{\tau}\br{\bar{\tau}^*}$ with $\bar{\tau}^*$ solved from \eqref{eq_bartau}, and then compute the optimal weight vector $\B{\omega}^* = \B{\omega}^*\br{\B{\tau}^*, \B{h}\br{1, \nu^*}}$ by the formula \eqref{eq1_optw_var} with $\nu^*$ given by \eqref{eq_optnu}.
    \item Select the optimal bandwidth vector $\B{h}^* = \B{h}\br{\alpha^*, \nu^*}$ based on the selectors in Section~\ref{sec_bw_select}.
    \item For $i = 1, \cdots, m$, compute the local estimator $\widehat{m}_{ij}^*(x) = \widehat{m}_i \br{x; \tau_{ij}^*, h_{ij}^*}$ by \eqref{eq_hatm} on the $i$-th data batch.
    \item Compute the optimal global estimator for $m(x)$ by
    \begin{equation*}
        \widehat{m}^*(x) = \sum_{i=1}^m \omega_{ij}^* \widehat{m}_{ij}^* (x).
    \end{equation*}

\end{enumerate}

\section{Numerical analyses}\label{sec_simu}

In this section, we conduct simulation studies and real data analyses to verify performance of the optimal global estimator $\widehat{m}^*(\cdot)$.
We consider two competing estimators.
The first one is a benchmark estimator, namely, the oracle local linear estimator denoted by $\widehat{m}_{\text{oll}}\br{\cdot}$, which is computed by the full data set with the bandwidths selected via cross validation.
The second competitor is the simple-average local linear least absolute deviation (ALAD) estimator denoted by $\widehat{m}_{\text{alad}}\br{\cdot}$, which is actually the simple average of the local linear least absolute deviation (LAD) estimators from each data batch, in other words, it can be expressed by
\begin{equation}
    \widehat{m}_{\text{alad}}\br{x} = \frac{1}{m} \sum_{i=1}^m \widehat{m}_i\br{x; 0.5, h}, \quad x \in [0, 1],
\end{equation}
where the bandwidth $h$ is obtained by the short-cut strategy introduced in Section~\ref{sec_bw_select}.
In the procedure of simulation, we use the Epanechnikov kernel as $K(z)=\frac{3}{4}\left(1-z^{2}\right)_{+}$.

\subsection{Simulation Studies}

In this subsection, we will carry out simulation studies to make a comprehensive comparison between our estimators and the two competing estimators.
We consider various experiment conditions, such as symmetric or asymmetric errors, and homoscedastic or heteroscedastic models.
We also consider the mixtures of two distributions, more specifically, a mixture distribution can be chosen as $(1-\lambda)F_{\varepsilon}+\lambda F_{\sqrt{10}\,\varepsilon}$ with mixture proportion denoted by $\lambda$, where $F_{\varepsilon}$ is distribution function of $\varepsilon$.
If without special statement, all the distributions of $\varepsilon$ involved in the simulations are centralized.

The total number $n$ of samples in the full data set is taken as $10000$.
The full data set is divided equally into $m$ data batches.
To show the influence of the number of data batches, we successively take $m = 1, 5, 10, 20, 50$.
For constructing the composite estimator $\widehat{m}^*(\cdot)$, the quantile levels are chosen as the form of \eqref{eq_taubar} with $J=5$ and $d_{\tau} = 0.5$,
the bandwidths are obtained by the short-cut strategy introduced in Section~\ref{sec_bw_select}.
The number of replications in the simulation is designed as $400$.
The performance of any estimator $\widehat{g}(\cdot)$ of a function $g(\cdot)$ is evaluated by the average squared errors (ASEs) defined by
\begin{equation*}
    \operatorname{ASE}(\widehat{g})=\frac{1}{n_{\text{grid}}} \sum_{i=1}^{n_{\text {grid }}}\left|\widehat{g}\left(u_{k}\right)-g\left(u_{k}\right)\right|^{2},
\end{equation*}
where $\left\{u_{k}, k=1, \ldots, n_{\text {grid }}\right\}$ are the grid points at which the estimator $\widehat{g}(\cdot)$ is evaluated.
In this subsection, we take $n_{\text {grid }} = 200$ and let the grid points be evenly distributed over the interval on which $\widehat{g}(\cdot)$ is estimated.
We use the ratio of average squared errors (RASEs)
\begin{equation*}
    \text{RASE}\br{\widehat{g}_1,\widehat{g}_2} = \frac{\operatorname{ASE}(\widehat{g}_2)}{\operatorname{ASE}
    (\widehat{g}_1)}
\end{equation*}
to compare the performance of two estimators $\widehat{g}_1$ and $\widehat{g}_2$.

\subsubsection{Homoscedastic model}

We first consider the homoscedastic model from \cite{Fan1992Variable}:
\begin{equation}\label{homodel_nonp}
    Y=m(X)+\sigma \varepsilon,
\end{equation}
where the covariate $X$ follows $N(0, 1)$, the regression function is chosen as $m(x) = \sin (2 x)+2 \exp \left(-16 x^{2}\right)$ and the standard deviation is $\sigma=0.5$.
By taking various error distributions, we manage to estimate $m(x)$ for $x\in [-1.5, 1.5]$.

\begin{itemize}
    \item Example 1a. We consider the model \eqref{homodel_nonp} with various symmetric error distributions. The means and standard derivations of RASEs are reported in Table~\ref{tab_homodel_nonp_sym}.
    \item Example 1b. We consider the model \eqref{homodel_nonp} with various asymmetric error distributions. The means and standard derivations of RASEs are reported in Table~\ref{tab_homodel_nonp_asym1}.
\end{itemize}

From the values of RASE$(\cdot, \cdot)$ in Tables~\ref{tab_homodel_nonp_sym} - \ref{tab_homodel_nonp_asym1}, we have the following findings:

(i) When the number $m$ of data batches is no more than $20$, the values of RASE$(\widehat{m}^*, \widehat{m}_{\text{oll}})$ are significantly greater than $1$ in most cases,
indicating that $\widehat{m}^*(\cdot)$ outperforms $\widehat{m}_{\text{oll}}\br{\cdot}$ uniformly for all choices of the mixture proportion $\lambda$. Generally, the values of RASE$(\widehat{m}^*, \widehat{m}_{\text{oll}})$ increase with $\lambda$ increasing.
This means that our estimator is more robust than the local linear estimator.
However, sometimes the above advantage vanishes gradually as $m$ increases, and particularly, when $m = 50$, the advantage of $\widehat{m}^*(\cdot)$ depends on the choices of the mixture proportion $\lambda$. This illustrates that in the asymptotic expansion of \eqref{eq_asy_glomx}, the effect of the remainder term $\mathbb{I}_4$ become more significant for larger $m$.

(ii)  Our estimator $\widehat{m}^*(\cdot)$ performs significantly better than $\widehat{m}_{\text{alad}}\br{\cdot}$ in most cases.
Unlike the observation in (1), however, the values of RASE$(\widehat{m}^*, \widehat{m}_{\text{alad}})$ seems to be insensitive to $m$ and $\lambda$. It is because
both estimators $\widehat{m}^*(\cdot)$ and $\widehat{m}_{\text{alad}}\br{\cdot}$ are quantile-based methods, implying that they have the same sensitivity to $m$, and the same robustness against outliers.

(iii) All the values of RASE$(\widehat{m}_{\text{alad}}, \widehat{m}_{\text{oll}})$ under asymmetric distributions are close to $0$, indicating that the ALAD estimator can be inconsistent when the distribution of $\varepsilon$ is asymmetric.
Contrarily, our estimator $\widehat{m}^*(\cdot)$ performances  satisfactorily in this case, which further demonstrates the effectiveness of the strategy of multilevel quantile-matched composite.

\begin{table}[t]
    \centering
    \caption{The means and standard derivations of RASEs in Example 1a}\label{tab_homodel_nonp_sym}
    \renewcommand\arraystretch{0.9}
    \resizebox{\textwidth}{!}{%
    \begin{tabular}{@{}ccc|ccccc|ccccc@{}}
        \toprule
        \multirow{2}{*}{Distributions} & \multirow{2}{*}{$\lambda$} &  & \multicolumn{5}{c|}{RASE$(\widehat{m}^*, \widehat{m}_{\text{oll}})$} & \multicolumn{5}{c}{RASE$(\widehat{m}^*, \widehat{m}_{\text{alad}})$} \\ \cmidrule(l){4-13}
         &  &  & $m = 1$ & $m = 5$ & $m = 10$ & $m = 20$ & $m = 50$ & $m = 1$ & $m = 5$ & $m = 10$ & $m = 20$ & $m = 50$ \\ \midrule
        \multirow{8}{*}{N$(0, 1)$} & \multirow{2}{*}{0} & mean & 1.0254 & 1.0305 & 0.9616 & 0.9118 & 0.7438 & 1.2079 & 1.3462 & 1.1612 & 1.1660 & 1.2533 \\
         &  & std & 0.1252 & 0.1341 & 0.1616 & 0.1264 & 0.1736 & 0.2140 & 0.2014 & 0.2256 & 0.2069 & 0.3747 \\ \cmidrule(l){2-13}
         & \multirow{2}{*}{0.05} & mean & 1.3159 & 1.2548 & 1.2184 & 1.1479 & 1.0005 & 1.2162 & 1.2379 & 1.1495 & 1.0802 & 1.5681 \\
         &  & std & 0.1833 & 0.1773 & 0.1438 & 0.1881 & 0.2402 & 0.2400 & 0.2584 & 0.1791 & 0.2025 & 0.4424 \\ \cmidrule(l){2-13}
         & \multirow{2}{*}{0.1} & mean & 1.5727 & 1.5862 & 1.6089 & 1.5197 & 1.1865 & 1.1187 & 1.2572 & 1.3113 & 1.3332 & 1.1835 \\
         &  & std & 0.2295 & 0.2448 & 0.2524 & 0.2467 & 0.2859 & 0.2224 & 0.2276 & 0.2720 & 0.2469 & 0.2838 \\ \cmidrule(l){2-13}
         & \multirow{2}{*}{0.2} & mean & 1.8212 & 2.2943 & 2.1070 & 2.2799 & 1.6583 & 0.8811 & 1.2282 & 1.2099 & 1.4501 & 1.3078 \\
         &  & std & 0.2275 & 0.3266 & 0.3192 & 0.3385 & 0.4646 & 0.1361 & 0.2588 & 0.2325 & 0.2569 & 0.3469 \\ \midrule
        \multirow{8}{*}{Standard Laplace} & \multirow{2}{*}{0} & mean & 2.9770 & 2.6554 & 2.4556 & 1.9365 & 1.5956 & 1.0118 & 1.3706 & 1.3092 & 1.2974 & 1.3106 \\
         &  & std & 0.4688 & 0.4135 & 0.4934 & 0.2740 & 0.2676 & 0.1631 & 0.1983 & 0.2218 & 0.2348 & 0.2218 \\ \cmidrule(l){2-13}
         & \multirow{2}{*}{0.05} & mean & 3.3970 & 3.4080 & 3.3435 & 2.8602 & 2.0082 & 0.7931 & 1.2049 & 1.4851 & 1.1493 & 1.1426 \\
         &  & std & 0.4085 & 0.6973 & 0.4937 & 0.4882 & 0.5291 & 0.1147 & 0.2227 & 0.2249 & 0.1954 & 0.2605 \\ \cmidrule(l){2-13}
         & \multirow{2}{*}{0.1} & mean & 4.7786 & 4.7811 & 4.4703 & 3.7875 & 3.0340 & 0.8356 & 1.5946 & 1.3792 & 1.1830 & 1.2004 \\
         &  & std & 0.6943 & 0.7389 & 0.7028 & 0.8086 & 0.4678 & 0.1223 & 0.2148 & 0.2294 & 0.2514 & 0.1829 \\ \cmidrule(l){2-13}
         & \multirow{2}{*}{0.2} & mean & 6.0830 & 6.5949 & 5.9595 & 5.3552 & 5.0854 & 0.8611 & 1.3713 & 1.4206 & 1.3085 & 1.4601 \\
         &  & std & 0.9913 & 1.0297 & 0.7839 & 1.0495 & 1.1607 & 0.1413 & 0.2572 & 0.1928 & 0.1994 & 0.2775 \\ \midrule
        \multirow{8}{*}{t$(3)$} & \multirow{2}{*}{0} & mean & 2.6800 & 2.9373 & 2.8252 & 2.6418 & 1.8335 & 0.9674 & 1.4802 & 1.2509 & 1.2672 & 1.1694 \\
         &  & std & 0.3519 & 0.3831 & 0.4466 & 0.4975 & 0.6461 & 0.1470 & 0.2671 & 0.1952 & 0.2579 & 0.3124 \\ \cmidrule(l){2-13}
         & \multirow{2}{*}{0.05} & mean & 3.8596 & 4.2828 & 3.8771 & 3.3158 & 2.4803 & 0.9937 & 1.4834 & 1.2301 & 1.0838 & 1.1367 \\
         &  & std & 0.7779 & 0.8233 & 0.6707 & 0.5232 & 0.6834 & 0.1913 & 0.3151 & 0.2207 & 0.1661 & 0.2186 \\ \cmidrule(l){2-13}
         & \multirow{2}{*}{0.1} & mean & 4.8815 & 4.6403 & 5.3096 & 4.5717 & 3.2845 & 1.0613 & 1.0734 & 1.3632 & 1.2919 & 1.1608 \\
         &  & std & 0.9005 & 0.7070 & 0.8156 & 0.9130 & 0.7832 & 0.2049 & 0.2095 & 0.2404 & 0.2373 & 0.2568 \\ \cmidrule(l){2-13}
         & \multirow{2}{*}{0.2} & mean & 7.0219 & 5.9059 & 6.2532 & 5.6690 & 5.2359 & 1.1004 & 1.1926 & 1.1845 & 1.1322 & 1.1964 \\
         &  & std & 0.6926 & 0.6474 & 1.2075 & 0.7614 & 1.5671 & 0.1534 & 0.1457 & 0.1557 & 0.2172 & 0.2420 \\ \midrule
        \multirow{8}{*}{U$(-1, 1)$} & \multirow{2}{*}{0} & mean & 0.8483 & 0.8062 & 0.8519 & 0.8077 & 0.5459 & 1.5189 & 1.3554 & 1.3818 & 1.2370 & 1.0708 \\
         &  & std & 0.1208 & 0.1035 & 0.1036 & 0.1164 & 0.1901 & 0.1802 & 0.2006 & 0.1723 & 0.2021 & 0.3188 \\ \cmidrule(l){2-13}
         & \multirow{2}{*}{0.05} & mean & 1.0764 & 1.0708 & 1.1039 & 0.9470 & 0.5721 & 1.5218 & 1.4023 & 1.3360 & 1.2669 & 1.2674 \\
         &  & std & 0.1627 & 0.1598 & 0.1993 & 0.1467 & 0.2343 & 0.2610 & 0.2763 & 0.2178 & 0.1958 & 0.3629 \\ \cmidrule(l){2-13}
         & \multirow{2}{*}{0.1} & mean & 1.0488 & 1.2561 & 1.1950 & 1.1398 & 0.6686 & 1.4198 & 1.3949 & 1.3189 & 1.2142 & 1.0641 \\
         &  & std & 0.1247 & 0.1890 & 0.1219 & 0.1744 & 0.3164 & 0.1778 & 0.2630 & 0.1933 & 0.2207 & 0.3481 \\ \cmidrule(l){2-13}
         & \multirow{2}{*}{0.2} & mean & 1.5591 & 1.6670 & 1.4182 & 1.2928 & 0.8542 & 1.4387 & 1.3615 & 1.2676 & 1.1630 & 0.9363 \\
         &  & std & 0.2374 & 0.2060 & 0.2295 & 0.2680 & 0.2455 & 0.2512 & 0.2248 & 0.2049 & 0.2124 & 0.1871 \\ \bottomrule
        \end{tabular}}
\end{table}

\begin{table}[t]
    \centering
    \caption{The means and standard derivations of RASE in Example 1b}\label{tab_homodel_nonp_asym1}
    \renewcommand\arraystretch{0.9}
    \resizebox{\textwidth}{!}{%
    \begin{tabular}{@{}ccc|ccccc|ccccc@{}}
        \toprule
        \multirow{2}{*}{Distributions} & \multirow{2}{*}{$\lambda$} &  & \multicolumn{5}{c|}{RASE$(\widehat{m}^*, \widehat{m}_{\text{oll}})$} & \multicolumn{5}{c}{RASE$(\widehat{m}_{\text{alad}},   \widehat{m}_{\text{oll}})$} \\ \cmidrule(l){4-13}
         &  &  & $m = 1$ & $m = 5$ & $m = 10$ & $m = 20$ & $m = 50$ & $m = 1$ & $m = 5$ & $m = 10$ & $m = 20$ & $m = 50$ \\ \midrule
        \multirow{8}{*}{F$(10, 6)$} & \multirow{2}{*}{0} & mean & 2.2105 & 1.9803 & 1.4859 & 0.6743 & 0.2841 & 0.0557 & 0.0671 & 0.0736 & 0.0835 & 0.1435 \\
         &  & std & 0.4652 & 0.3926 & 0.3225 & 0.2268 & 0.0982 & 0.0121 & 0.0112 & 0.0097 & 0.0151 & 0.0272 \\ \cmidrule(l){2-13}
         & \multirow{2}{*}{0.05} & mean & 2.7647 & 2.7262 & 1.9325 & 0.9440 & 0.4545 & 0.0629 & 0.1058 & 0.1099 & 0.1159 & 0.1744 \\
         &  & std & 0.4742 & 0.5051 & 0.5250 & 0.1947 & 0.1416 & 0.0091 & 0.0152 & 0.0209 & 0.0211 & 0.0296 \\ \cmidrule(l){2-13}
         & \multirow{2}{*}{0.1} & mean & 3.1725 & 3.0561 & 1.9151 & 1.1754 & 0.4514 & 0.0833 & 0.1188 & 0.1129 & 0.1497 & 0.2054 \\
         &  & std & 0.8079 & 0.6363 & 0.5131 & 0.2123 & 0.1322 & 0.0115 & 0.0138 & 0.0192 & 0.0243 & 0.0291 \\ \cmidrule(l){2-13}
         & \multirow{2}{*}{0.2} & mean & 4.2862 & 4.5000 & 3.8297 & 2.1891 & 0.4648 & 0.1493 & 0.1694 & 0.1863 & 0.2066 & 0.2965 \\
         &  & std & 0.9300 & 0.6794 & 0.8001 & 0.8241 & 0.1498 & 0.0195 & 0.0189 & 0.0219 & 0.0406 & 0.0419 \\ \midrule
        \multirow{8}{*}{F$(4, 6)$} & \multirow{2}{*}{0} & mean & 2.2923 & 2.0137 & 1.7210 & 0.9068 & 0.5713 & 0.0527 & 0.0829 & 0.0781 & 0.0811 & 0.1262 \\
         &  & std & 0.3799 & 0.3676 & 0.3873 & 0.2522 & 0.2192 & 0.0098 & 0.0153 & 0.0121 & 0.0114 & 0.0215 \\ \cmidrule(l){2-13}
         & \multirow{2}{*}{0.05} & mean & 2.9001 & 2.9828 & 2.5495 & 1.6956 & 0.6319 & 0.0758 & 0.0954 & 0.1050 & 0.1097 & 0.1975 \\
         &  & std & 0.5529 & 0.5718 & 0.5257 & 0.5098 & 0.4732 & 0.0134 & 0.0157 & 0.0241 & 0.0160 & 0.0270 \\ \cmidrule(l){2-13}
         & \multirow{2}{*}{0.1} & mean & 3.4260 & 3.6826 & 3.3611 & 2.3170 & 0.8495 & 0.0973 & 0.1181 & 0.1233 & 0.1587 & 0.1971 \\
         &  & std & 0.6717 & 0.6709 & 0.7558 & 0.6924 & 0.7043 & 0.0142 & 0.0173 & 0.0190 & 0.0208 & 0.0329 \\ \cmidrule(l){2-13}
         & \multirow{2}{*}{0.2} & mean & 4.7415 & 5.3339 & 4.4346 & 2.9963 & 1.2439 & 0.1564 & 0.1973 & 0.2519 & 0.2205 & 0.2836 \\
         &  & std & 0.9762 & 1.0380 & 0.8526 & 0.6186 & 0.9853 & 0.0202 & 0.0181 & 0.0257 & 0.0250 & 0.0518 \\ \midrule
        \multirow{8}{*}{Gamma$(2, 1.5)$} & \multirow{2}{*}{0} & mean & 1.2126 & 1.0855 & 1.0170 & 0.9705 & 0.4944 & 0.0851 & 0.1237 & 0.1310 & 0.1683 & 0.2398 \\
         &  & std & 0.1730 & 0.1423 & 0.1784 & 0.1917 & 0.1356 & 0.0116 & 0.0224 & 0.0190 & 0.0232 & 0.0518 \\ \cmidrule(l){2-13}
         & \multirow{2}{*}{0.05} & mean & 1.6789 & 1.3979 & 1.2808 & 1.1338 & 0.5061 & 0.1081 & 0.1611 & 0.1648 & 0.2019 & 0.2918 \\
         &  & std & 0.2052 & 0.1815 & 0.1720 & 0.1885 & 0.1865 & 0.0232 & 0.0253 & 0.0301 & 0.0292 & 0.0524 \\ \cmidrule(l){2-13}
         & \multirow{2}{*}{0.1} & mean & 1.8913 & 1.8694 & 1.5126 & 1.1989 & 0.5927 & 0.1365 & 0.1909 & 0.2109 & 0.2300 & 0.4111 \\
         &  & std & 0.3129 & 0.2677 & 0.1971 & 0.2493 & 0.1285 & 0.0227 & 0.0337 & 0.0335 & 0.0411 & 0.0717 \\ \cmidrule(l){2-13}
         & \multirow{2}{*}{0.2} & mean & 2.5731 & 2.3831 & 2.1421 & 1.6621 & 0.7365 & 0.1894 & 0.2721 & 0.2886 & 0.3554 & 0.3861 \\
         &  & std & 0.3777 & 0.3625 & 0.3346 & 0.3581 & 0.1402 & 0.0329 & 0.0338 & 0.0482 & 0.0547 & 0.0609 \\ \midrule
        \multirow{8}{*}{Lognorm$(0, 1)$} & \multirow{2}{*}{0} & mean & 2.8140 & 2.2767 & 1.9924 & 1.1168 & 0.3754 & 0.0649 & 0.0687 & 0.0798 & 0.0865 & 0.1114 \\
         &  & std & 0.4291 & 0.3239 & 0.4521 & 0.3239 & 0.0893 & 0.0092 & 0.0069 & 0.0106 & 0.0117 & 0.0179 \\ \cmidrule(l){2-13}
         & \multirow{2}{*}{0.05} & mean & 3.1935 & 2.9592 & 2.9543 & 2.1384 & 0.4910 & 0.0734 & 0.1092 & 0.0979 & 0.1255 & 0.1608 \\
         &  & std & 0.4759 & 0.4886 & 0.5325 & 0.5689 & 0.1425 & 0.0093 & 0.0108 & 0.0152 & 0.0180 & 0.0186 \\ \cmidrule(l){2-13}
         & \multirow{2}{*}{0.1} & mean & 4.1320 & 3.8945 & 3.3716 & 3.7500 & 0.9170 & 0.1149 & 0.1472 & 0.1565 & 0.1683 & 0.2087 \\
         &  & std & 0.5061 & 0.6436 & 0.5206 & 0.9616 & 0.2322 & 0.0155 & 0.0169 & 0.0162 & 0.0170 & 0.0325 \\ \cmidrule(l){2-13}
         & \multirow{2}{*}{0.2} & mean & 4.7578 & 5.1623 & 5.2982 & 4.1367 & 1.7631 & 0.1766 & 0.2548 & 0.2431 & 0.2263 & 0.3051 \\
         &  & std & 0.8538 & 1.0442 & 0.6145 & 0.8390 & 0.7249 & 0.0201 & 0.0202 & 0.0249 & 0.0242 & 0.0317 \\ \bottomrule
        \end{tabular}
        }
\end{table}

\subsubsection{Heteroscedastic model}

We consider the heteroscedastic model from \cite{Kai2010}:
\begin{equation}\label{hemodel_nonp}
    Y= m(X)+ \sigma(X) \varepsilon,
\end{equation}
where the covariate $X$ follows $U(0, 1)$, the regression function is set as $m(x) = x \sin (2 \pi x)$ and the standard deviation function is chosen as $\sigma(x)=\left(2 + \cos(2\pi X)\right)$.
By taking various error distributions, we manage to estimate $m(x)$ for $x\in [0, 1]$.

\begin{itemize}
    \item Example 2a. We consider the model \eqref{hemodel_nonp} with various symmetric error distributions. The means and standard derivations of RASEs are reported in Table~\ref{tab_hemodel_nonp_sym}.
    \item Example 2b. We consider the model \eqref{hemodel_nonp} with various asymmetric error distributions. The means and standard derivations of RASEs are reported in Table~\ref{tab_hemodel_nonp_asym1}.
\end{itemize}

\begin{table}[t]
    \centering
    \caption{The means and standard derivations of RASE in Example 2a}\label{tab_hemodel_nonp_sym}
    \renewcommand\arraystretch{0.9}
    \resizebox{\textwidth}{!}{%
    \begin{tabular}{@{}ccc|ccccc|ccccc@{}}
        \toprule
        \multirow{2}{*}{Distributions} & \multirow{2}{*}{$\lambda$} &  & \multicolumn{5}{c|}{RASE$(\widehat{m}^*,   \widehat{m}_{\text{oll}})$} & \multicolumn{5}{c}{RASE$(\widehat{m}^*,   \widehat{m}_{\text{alad}})$} \\ \cmidrule(l){4-13}
         &  &  & $m = 1$ & $m = 5$ & $m = 10$ & $m = 20$ & $m = 50$ & $m = 1$ & $m = 5$ & $m = 10$ & $m = 20$ & $m = 50$ \\ \midrule
        \multirow{8}{*}{N$(0, 1)$} & \multirow{2}{*}{0} & mean & 0.7267 & 0.6701 & 0.6569 & 0.7217 & 0.6383 & 1.0981 & 1.3545 & 1.0492 & 1.5443 & 1.1269 \\
         &  & std & 0.2870 & 0.2543 & 0.2287 & 0.3842 & 0.4001 & 0.5581 & 0.6338 & 0.3769 & 0.7511 & 0.5214 \\ \cmidrule(l){2-13}
         & \multirow{2}{*}{0.05} & mean & 0.9538 & 0.7062 & 1.0686 & 0.9732 & 0.7053 & 1.1255 & 0.8872 & 1.7691 & 1.3292 & 1.1181 \\
         &  & std & 0.2527 & 0.3438 & 0.5124 & 0.3257 & 0.3936 & 0.4465 & 0.3952 & 0.8875 & 0.6483 & 0.6647 \\ \cmidrule(l){2-13}
         & \multirow{2}{*}{0.1} & mean & 1.0053 & 0.8328 & 1.0265 & 0.9176 & 0.5652 & 1.3582 & 1.2061 & 1.2046 & 1.0695 & 1.1319 \\
         &  & std & 0.3475 & 0.3665 & 0.5707 & 0.3404 & 0.2744 & 0.6471 & 0.6225 & 0.5419 & 0.4794 & 0.6302 \\ \cmidrule(l){2-13}
         & \multirow{2}{*}{0.2} & mean & 1.4315 & 1.1714 & 1.0853 & 1.1144 & 0.6050 & 1.4027 & 1.0070 & 1.2111 & 1.1527 & 1.0223 \\
         &  & std & 0.5522 & 0.5248 & 0.5640 & 0.4442 & 0.3553 & 0.6212 & 0.4598 & 0.5722 & 0.4309 & 0.5522 \\ \midrule
        \multirow{8}{*}{Standard Laplace} & \multirow{2}{*}{0} & mean & 1.4999 & 1.0853 & 0.9633 & 1.0135 & 0.7689 & 1.1310 & 1.0666 & 1.2058 & 1.1943 & 1.0125 \\
         &  & std & 0.5218 & 0.5398 & 0.5117 & 0.5058 & 0.4388 & 0.4377 & 0.5606 & 0.6056 & 0.6562 & 0.5396 \\ \cmidrule(l){2-13}
         & \multirow{2}{*}{0.05} & mean & 1.4697 & 1.5707 & 1.1027 & 1.2700 & 0.7391 & 1.1903 & 1.1582 & 1.0135 & 1.1848 & 1.1693 \\
         &  & std & 0.6377 & 0.8914 & 0.6920 & 0.6536 & 0.4856 & 0.6497 & 0.4998 & 0.5536 & 0.4862 & 0.7794 \\ \cmidrule(l){2-13}
         & \multirow{2}{*}{0.1} & mean & 2.2494 & 1.3594 & 1.3814 & 1.2878 & 0.8373 & 1.3913 & 1.1052 & 1.2642 & 0.9966 & 1.0466 \\
         &  & std & 0.5109 & 0.7009 & 0.7871 & 0.5168 & 0.3933 & 0.7358 & 0.6093 & 0.7675 & 0.4452 & 0.3997 \\ \cmidrule(l){2-13}
         & \multirow{2}{*}{0.2} & mean & 2.6447 & 2.2397 & 1.9286 & 1.3453 & 0.9035 & 1.0531 & 1.5186 & 1.3266 & 1.0250 & 1.2183 \\
         &  & std & 1.2271 & 1.0990 & 1.0036 & 0.6600 & 0.4686 & 0.5970 & 0.7892 & 0.6896 & 0.6771 & 0.7485 \\ \midrule
        \multirow{8}{*}{t$(3)$} & \multirow{2}{*}{0} & mean & 1.4659 & 1.3074 & 1.3026 & 1.1767 & 1.0153 & 0.8626 & 1.3175 & 1.5374 & 1.1319 & 1.0242 \\
         &  & std & 0.5563 & 0.4119 & 0.7490 & 0.5388 & 0.3986 & 0.4161 & 0.5881 & 0.6615 & 0.6002 & 0.4903 \\ \cmidrule(l){2-13}
         & \multirow{2}{*}{0.05} & mean & 1.8777 & 1.5717 & 1.2904 & 1.3036 & 1.1208 & 1.0056 & 1.3165 & 0.9650 & 1.0788 & 1.1889 \\
         &  & std & 0.7718 & 0.8743 & 0.7174 & 0.5199 & 0.4682 & 0.4710 & 0.6300 & 0.5224 & 0.4739 & 0.5791 \\ \cmidrule(l){2-13}
         & \multirow{2}{*}{0.1} & mean & 2.2471 & 2.2166 & 1.7593 & 1.3578 & 0.9391 & 1.0773 & 1.9605 & 1.4028 & 1.1274 & 1.1589 \\
         &  & std & 0.9187 & 0.4275 & 0.7007 & 0.7247 & 0.5182 & 0.4948 & 0.7659 & 0.7115 & 0.4101 & 0.7427 \\ \cmidrule(l){2-13}
         & \multirow{2}{*}{0.2} & mean & 2.3384 & 2.2925 & 2.3013 & 2.0294 & 1.5606 & 1.4250 & 1.4807 & 1.3803 & 1.3728 & 1.3745 \\
         &  & std & 0.8727 & 1.2244 & 1.0566 & 0.9287 & 0.7356 & 0.7656 & 0.7735 & 0.7306 & 0.7876 & 0.6665 \\ \midrule
        \multirow{8}{*}{U$(-1, 1)$} & \multirow{2}{*}{0} & mean & 0.5775 & 0.6357 & 0.5422 & 0.5763 & 0.6284 & 1.2524 & 1.6780 & 1.2995 & 1.2697 & 1.5110 \\
         &  & std & 0.2048 & 0.2156 & 0.1422 & 0.1870 & 0.3707 & 0.4957 & 0.5885 & 0.4593 & 0.5718 & 0.8742 \\ \cmidrule(l){2-13}
         & \multirow{2}{*}{0.05} & mean & 0.6743 & 0.6963 & 0.7637 & 0.6572 & 0.6612 & 1.1182 & 1.2658 & 1.6366 & 1.2909 & 1.4651 \\
         &  & std & 0.2907 & 0.2215 & 0.3464 & 0.2666 & 0.2994 & 0.4560 & 0.5006 & 0.6806 & 0.5737 & 0.8189 \\ \cmidrule(l){2-13}
         & \multirow{2}{*}{0.1} & mean & 0.8104 & 0.8848 & 0.8394 & 0.6523 & 0.6803 & 1.5031 & 1.4863 & 1.3842 & 1.2355 & 1.2383 \\
         &  & std & 0.3892 & 0.2502 & 0.3797 & 0.3503 & 0.4333 & 0.7581 & 0.5698 & 0.7026 & 0.5961 & 0.7495 \\ \cmidrule(l){2-13}
         & \multirow{2}{*}{0.2} & mean & 1.0071 & 0.8618 & 0.8145 & 0.9715 & 0.5718 & 1.6736 & 1.3433 & 1.1674 & 1.2686 & 1.0929 \\
         &  & std & 0.4036 & 0.3721 & 0.3889 & 0.4177 & 0.2240 & 0.4895 & 0.6096 & 0.6170 & 0.6462 & 0.5696 \\ \bottomrule
        \end{tabular}}
\end{table}

\begin{table}[t]
    \centering
    \caption{The means and standard derivations of RASE in Example 2b}\label{tab_hemodel_nonp_asym1}
    \renewcommand\arraystretch{0.9}
    \resizebox{\textwidth}{!}{%
    \begin{tabular}{@{}ccc|ccccc|ccccc@{}}
        \toprule
        \multirow{2}{*}{Distributions} & \multirow{2}{*}{$\lambda$} &  & \multicolumn{5}{c|}{RASE$(\widehat{m}^*,   \widehat{m}_{\text{oll}})$} & \multicolumn{5}{c}{RASE$(\widehat{m}_{\text{alad}}, \widehat{m}_{\text{oll}})$} \\ \cmidrule(l){4-13}
         &  &  & $m = 1$ & $m = 5$ & $m = 10$ & $m = 20$ & $m = 50$ & $m = 1$ & $m = 5$ & $m = 10$ & $m = 20$ & $m = 50$ \\ \midrule
        \multirow{8}{*}{F$(10, 4)$} & \multirow{2}{*}{0} & mean & 1.3735 & 1.4638 & 0.8194 & 1.1962 & 0.1672 & 0.0371 & 0.0405 & 0.0369 & 0.0375 & 0.0667 \\
         &  & std & 0.6105 & 0.4985 & 0.4189 & 0.8522 & 0.1244 & 0.0181 & 0.0163 & 0.0175 & 0.0164 & 0.0351 \\ \cmidrule(l){2-13}
         & \multirow{2}{*}{0.05} & mean & 1.8304 & 1.4430 & 1.2087 & 0.5668 & 0.2145 & 0.0502 & 0.0398 & 0.0445 & 0.0559 & 0.0576 \\
         &  & std & 0.8064 & 0.6330 & 0.7744 & 0.3756 & 0.1457 & 0.0275 & 0.0215 & 0.0234 & 0.0289 & 0.0289 \\ \cmidrule(l){2-13}
         & \multirow{2}{*}{0.1} & mean & 1.7119 & 1.5879 & 1.5812 & 0.8193 & 0.3009 & 0.0305 & 0.0398 & 0.0571 & 0.0503 & 0.0976 \\
         &  & std & 0.5578 & 0.6576 & 0.8008 & 0.4651 & 0.2665 & 0.0140 & 0.0216 & 0.0238 & 0.0258 & 0.0558 \\ \cmidrule(l){2-13}
         & \multirow{2}{*}{0.2} & mean & 2.3697 & 1.6646 & 1.6660 & 1.1322 & 0.1618 & 0.0414 & 0.0627 & 0.0656 & 0.0768 & 0.0811 \\
         &  & std & 0.9219 & 0.7953 & 0.7689 & 0.4644 & 0.1335 & 0.0164 & 0.0237 & 0.0315 & 0.0316 & 0.0366 \\ \midrule
        \multirow{8}{*}{F$(2, 5)$} & \multirow{2}{*}{0} & mean & 1.1049 & 1.0591 & 1.0363 & 0.8626 & 0.3579 & 0.0250 & 0.0294 & 0.0263 & 0.0399 & 0.0472 \\
         &  & std & 0.3335 & 0.5031 & 0.4312 & 0.3238 & 0.1039 & 0.0103 & 0.0103 & 0.0105 & 0.0238 & 0.0204 \\ \cmidrule(l){2-13}
         & \multirow{2}{*}{0.05} & mean & 1.1271 & 1.3010 & 1.0532 & 0.9881 & 0.3566 & 0.0278 & 0.0393 & 0.0377 & 0.0421 & 0.0526 \\
         &  & std & 0.5365 & 0.6589 & 0.4739 & 0.3564 & 0.0675 & 0.0157 & 0.0197 & 0.0179 & 0.0227 & 0.0273 \\ \cmidrule(l){2-13}
         & \multirow{2}{*}{0.1} & mean & 1.2899 & 1.3416 & 1.3446 & 0.9262 & 0.5515 & 0.0309 & 0.0381 & 0.0388 & 0.0434 & 0.0466 \\
         &  & std & 0.3166 & 0.6588 & 0.8187 & 0.4101 & 0.2547 & 0.0151 & 0.0167 & 0.0188 & 0.0165 & 0.0282 \\ \cmidrule(l){2-13}
         & \multirow{2}{*}{0.2} & mean & 1.4844 & 1.4122 & 1.4810 & 1.3220 & 0.3950 & 0.0335 & 0.0530 & 0.0575 & 0.0622 & 0.0665 \\
         &  & std & 0.7717 & 0.6367 & 0.8645 & 0.5650 & 0.1435 & 0.0131 & 0.0243 & 0.0249 & 0.0254 & 0.0324 \\ \midrule
        \multirow{8}{*}{Gamma$(2, 2)$} & \multirow{2}{*}{0} & mean & 1.0588 & 0.9268 & 0.9538 & 0.6224 & 0.2620 & 0.0487 & 0.0613 & 0.0726 & 0.1031 & 0.1561 \\
         &  & std & 0.4150 & 0.3488 & 0.4002 & 0.2660 & 0.1756 & 0.0177 & 0.0286 & 0.0274 & 0.0447 & 0.0669 \\ \cmidrule(l){2-13}
         & \multirow{2}{*}{0.05} & mean & 1.3546 & 1.1785 & 0.9647 & 0.6167 & 0.2957 & 0.0579 & 0.1142 & 0.1327 & 0.1248 & 0.1991 \\
         &  & std & 0.3608 & 0.3919 & 0.3832 & 0.2370 & 0.1486 & 0.0238 & 0.0503 & 0.0677 & 0.0486 & 0.1150 \\ \cmidrule(l){2-13}
         & \multirow{2}{*}{0.1} & mean & 1.3725 & 1.2431 & 1.1622 & 0.6920 & 0.3197 & 0.0910 & 0.0941 & 0.1779 & 0.1460 & 0.1624 \\
         &  & std & 0.4375 & 0.3993 & 0.4439 & 0.3158 & 0.2061 & 0.0407 & 0.0355 & 0.0621 & 0.0582 & 0.0718 \\ \cmidrule(l){2-13}
         & \multirow{2}{*}{0.2} & mean & 1.3237 & 1.2846 & 1.1816 & 0.9477 & 0.3295 & 0.1308 & 0.1518 & 0.1243 & 0.1258 & 0.1850 \\
         &  & std & 0.3888 & 0.4407 & 0.3685 & 0.4449 & 0.1380 & 0.0343 & 0.0673 & 0.0614 & 0.0508 & 0.0769 \\ \midrule
        \multirow{8}{*}{Lognorm$(0.5, 1)$} & \multirow{2}{*}{0} & mean & 1.1939 & 0.9354 & 0.8505 & 0.8633 & 0.2510 & 0.0156 & 0.0235 & 0.0257 & 0.0266 & 0.0359 \\
         &  & std & 0.6830 & 0.4620 & 0.4101 & 0.5293 & 0.1184 & 0.0074 & 0.0084 & 0.0120 & 0.0156 & 0.0128 \\ \cmidrule(l){2-13}
         & \multirow{2}{*}{0.05} & mean & 1.2466 & 1.2355 & 1.1767 & 0.8856 & 0.2059 & 0.0208 & 0.0288 & 0.0309 & 0.0345 & 0.0456 \\
         &  & std & 0.9349 & 0.7200 & 0.6983 & 0.3891 & 0.1105 & 0.0057 & 0.0117 & 0.0137 & 0.0133 & 0.0152 \\ \cmidrule(l){2-13}
         & \multirow{2}{*}{0.1} & mean & 1.2581 & 1.4572 & 1.4016 & 1.1773 & 0.2357 & 0.0233 & 0.0314 & 0.0396 & 0.0407 & 0.0599 \\
         &  & std & 0.7062 & 0.6170 & 0.6901 & 0.6054 & 0.1108 & 0.0054 & 0.0140 & 0.0169 & 0.0120 & 0.0143 \\ \cmidrule(l){2-13}
         & \multirow{2}{*}{0.2} & mean & 1.9738 & 1.9726 & 1.8643 & 1.2611 & 0.4986 & 0.0349 & 0.0415 & 0.0458 & 0.0449 & 0.0540 \\
         &  & std & 0.8356 & 0.8191 & 0.9907 & 0.6841 & 0.2230 & 0.0136 & 0.0176 & 0.0149 & 0.0166 & 0.0237 \\ \bottomrule
        \end{tabular}}
\end{table}

From the numerical results in Tables~\ref{tab_hemodel_nonp_sym} - \ref{tab_hemodel_nonp_asym1}, we have the following observations:

(i) In general, the results in this example are similar to those in example 1. Compared with the results in example 1, however, the superiority of $\widehat{m}^*(\cdot)$ over $\widehat{m}_{\text{oll}}\br{\cdot}$ is slightly decreased.

(ii) Particularly, our estimator $\widehat{m}^*(\cdot)$ generally performs better than $\widehat{m}_{\text{oll}}\br{\cdot}$ in cases with smaller $m$ and larger $\lambda$, illustrating the robustness of our estimator. On the other hand, our estimator $\widehat{m}^*(\cdot)$ generally outperforms $\widehat{m}_{\text{alad}}\br{\cdot}$, and in the case with  asymmetric error distributions, the estimator $\widehat{m}_{\text{alad}}\br{\cdot}$ is even inconsistent.
When the number $m$ is larger than $20$, however, our estimator $\widehat{m}^*(\cdot)$ is not better than the benchmark estimator $\widehat{m}_{\text{oll}}$ because the values of RASE$(\widehat{m}^*, \widehat{m}_{\text{oll}})$ is generally less than $1$. This is not surprising since
the benchmark estimator $\widehat{m}_{\text{oll}}$ is computed on the full data set.

\subsection{Real data examples}

For case study, we apply our method to the Beijing Multi-Site Air-Quality Data set from the UCI machine learning repository\footnote{https://archive-beta.ics.uci.edu/ml/datasets/beijing+multi+site+air+quality+data}.
This data set consists of hourly data about 6 main air pollutants and 6 relevant meteorological variables collected from 12 nationally-controlled air-quality monitoring sites in Beijing, China.
The observational data cover the time period from March 1st, 2013 to February 28th, 2017, and each variable including 420768 observed values.
Our goal is to fit the relationship between the main air pollutants and the relevant meteorological variables in the dataset.

In model~\eqref{eq_model}, we use $Y$ and $X$ to denote a main index from 6 air pollutants and a covariate from 6 relevant meteorological variables  in the data set, respectively.
To test the performance of the estimators, we drop the data that suffer from data missing and then equally divide the remainder data set into training set $\D_{\text{train}}$ and testing set $\D_{\text{test}}$.
In the training set $\D_{\text{train}}$, we call the observations $\br{X_i, Y_i}$ as outliers if $\abs{Y_i - \widehat{m}_{\text{nw}}\br{X_i}} > \gamma \widehat{\sigma}\br{X_i}$,
where $\widehat{m}_{\text{nw}}\br{\cdot}$ and $\widehat{\sigma}\br{\cdot}$ are the pilot estimators from \eqref{eq_pilmnw} and \eqref{eq_pilsgmnw}, respectively, and $\gamma$ is successively taken as $2.5, 3.0$ and $3.5$.
Then the proportion of outliers is defined as $r_{\text{ol}} = n_{\text{ol}}/ n_{\text{train}} \times 100 \%$, where $n_{\text{ol}}$ and $n_{\text{train}}$ are the numbers of outliers and the number of total observations in $\D_{\text{train}}$, respectively.
To show the robustness of estimators, we replace the outliers $(X_i, Y_i)$ by $(X_i, cY_i)$ for the constant $c = \text{NaN}, 1, 2, 5, 10, 50$, successively,
where $c = \text{NaN}$ means that the outliers are removed from the training set.
To simulate the DC environment, the training set is naturally divided into $12$ parts, i.e., $\D_1, \cdots, \D_{12}$, corresponding to the $12$ monitoring sites mentioned before.

In this example, we still consider the estimators $\widehat{m}^*\br{\cdot}$, $\widehat{m}_{\text{alad}}\br{\cdot}$ and $\widehat{m}_{\text{oll}}\br{\cdot}$ computed on the training set $\D_{\text{train}}$.
The prediction accuracy of an estimator $\widehat{g}(\cdot)$ is described by the root mean square error (RMSE) and the mean absolute error (MAE) on $\D_{\text{test}}$, namely
\begin{equation*}
    \text{RMSE}(\widehat{g}) = \sqrt{\frac{1}{n_{\text{test}}} \sum_{(X_i, Y_i) \in \D_{\text{test}}}\br{Y_i - \widehat{g}\br{X_i}}^2},  \quad \text{MAE}(\widehat{g}) = \frac{1}{n_{\text{test}}} \sum_{(X_i, Y_i) \in \D_{\text{test}}}\abs{Y_i - \widehat{g}\br{X_i}},
\end{equation*}
where $n_{\text{test}}$ is the number of observations in $\D_{\text{test}}$.

\subsubsection{Asymmetric Data}

We first fit the relationship between the PM2.5 concentration and the wind speed.
After dropping missing data, the training set and testing set both consist of 205856 observations, which are shown in Figure~\ref{fig_data_pm}.
The Figures~\ref{fig_traina_pm} - \ref{fig_trainc_pm} present the same training set with different values of $\gamma$ (i.e., with different representations of outliers), and Figure~\ref{fig_testdata_pm} shows the scatters of testing data. It can be seen from the figures that the distribution of testing data is very similar to that of the training data.
By the data scatters in Figure~\ref{fig_data_pm}, we conclude  that the random error is strongly asymmetric in this example.

\begin{figure}
    \centering
    \subfigure[Training set with $\gamma = 3.5$]{
        \label{fig_traina_pm}
        \includegraphics[width=0.4\textwidth]{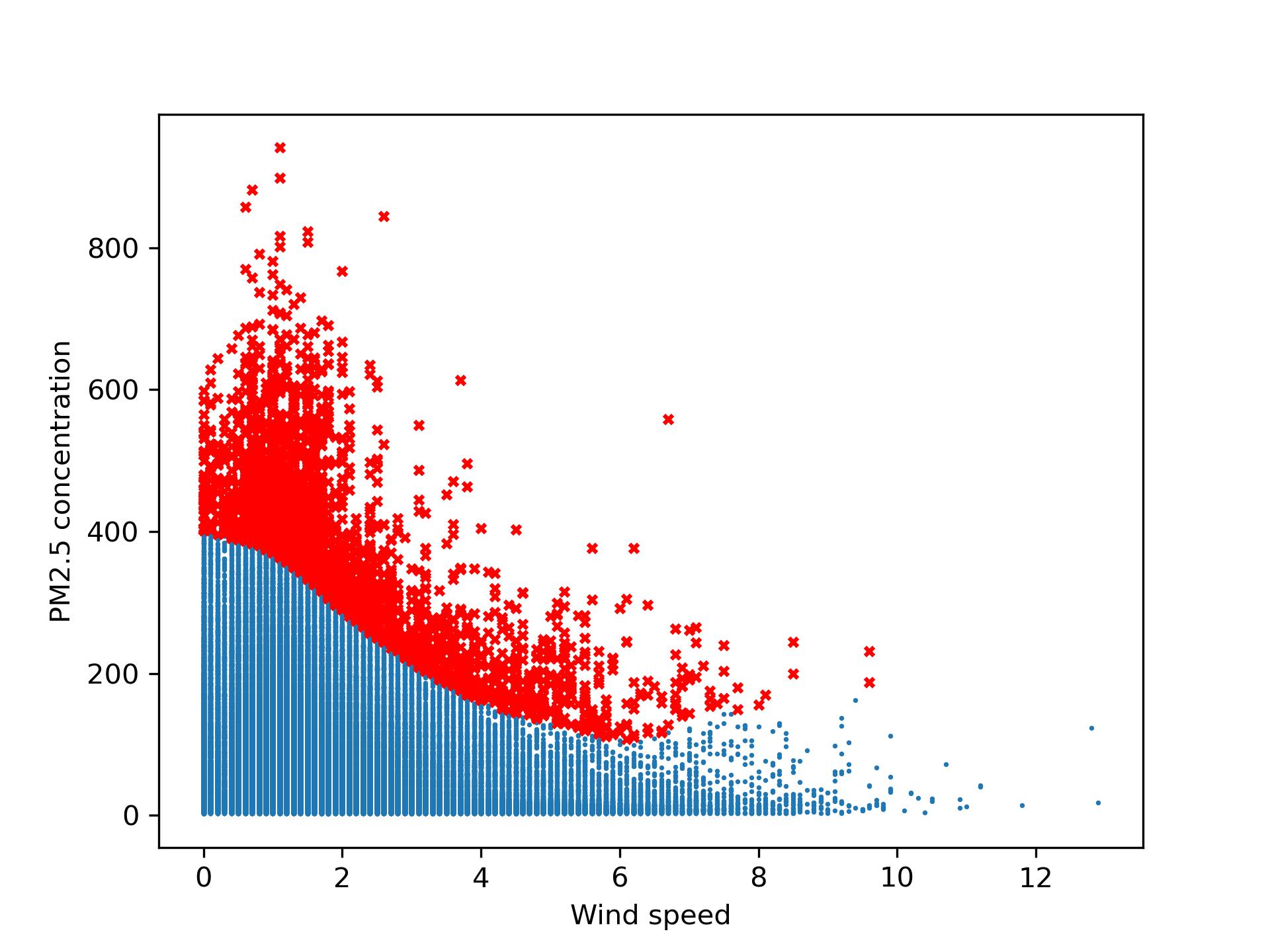}}
    \subfigure[Training set with $\gamma = 3.0$]{
        \label{fig_trainb_pm}
        \includegraphics[width=0.4\textwidth]{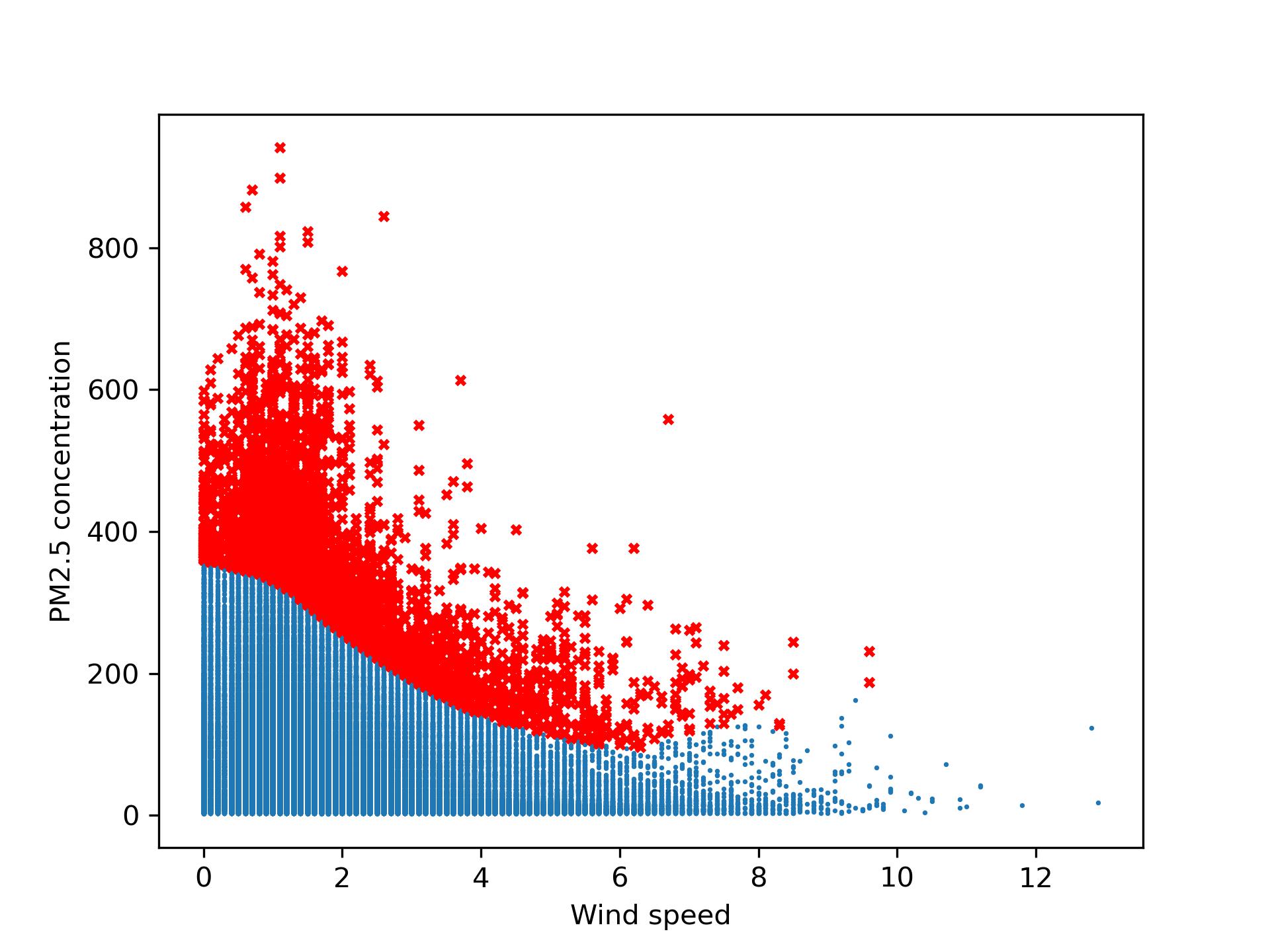}}
    \subfigure[Training set with $\gamma = 2.5$]{
        \label{fig_trainc_pm}
        \includegraphics[width=0.4\textwidth]{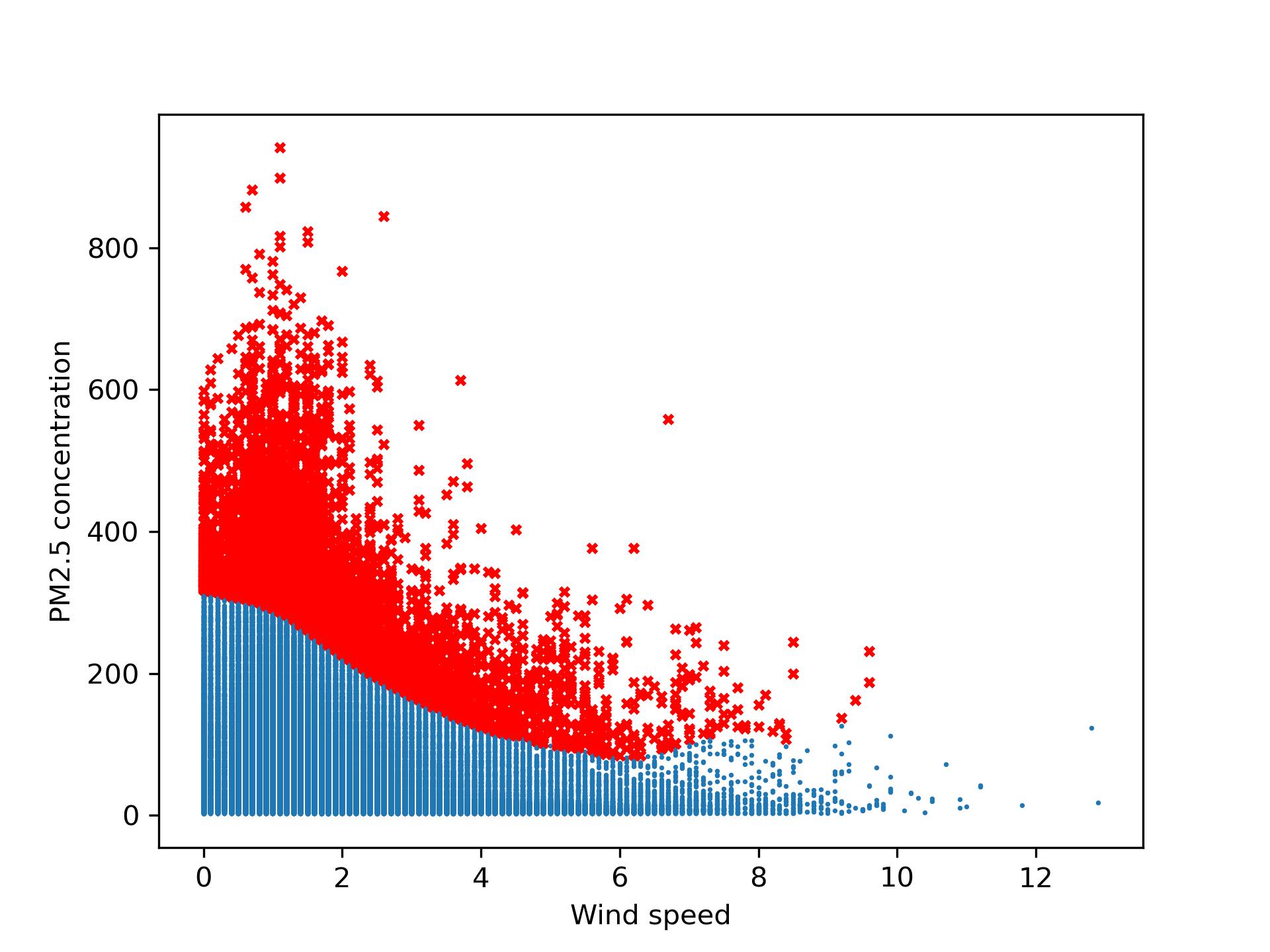}}
    \subfigure[Test set]{
    \label{fig_testdata_pm}
    \includegraphics[width=0.4\textwidth]{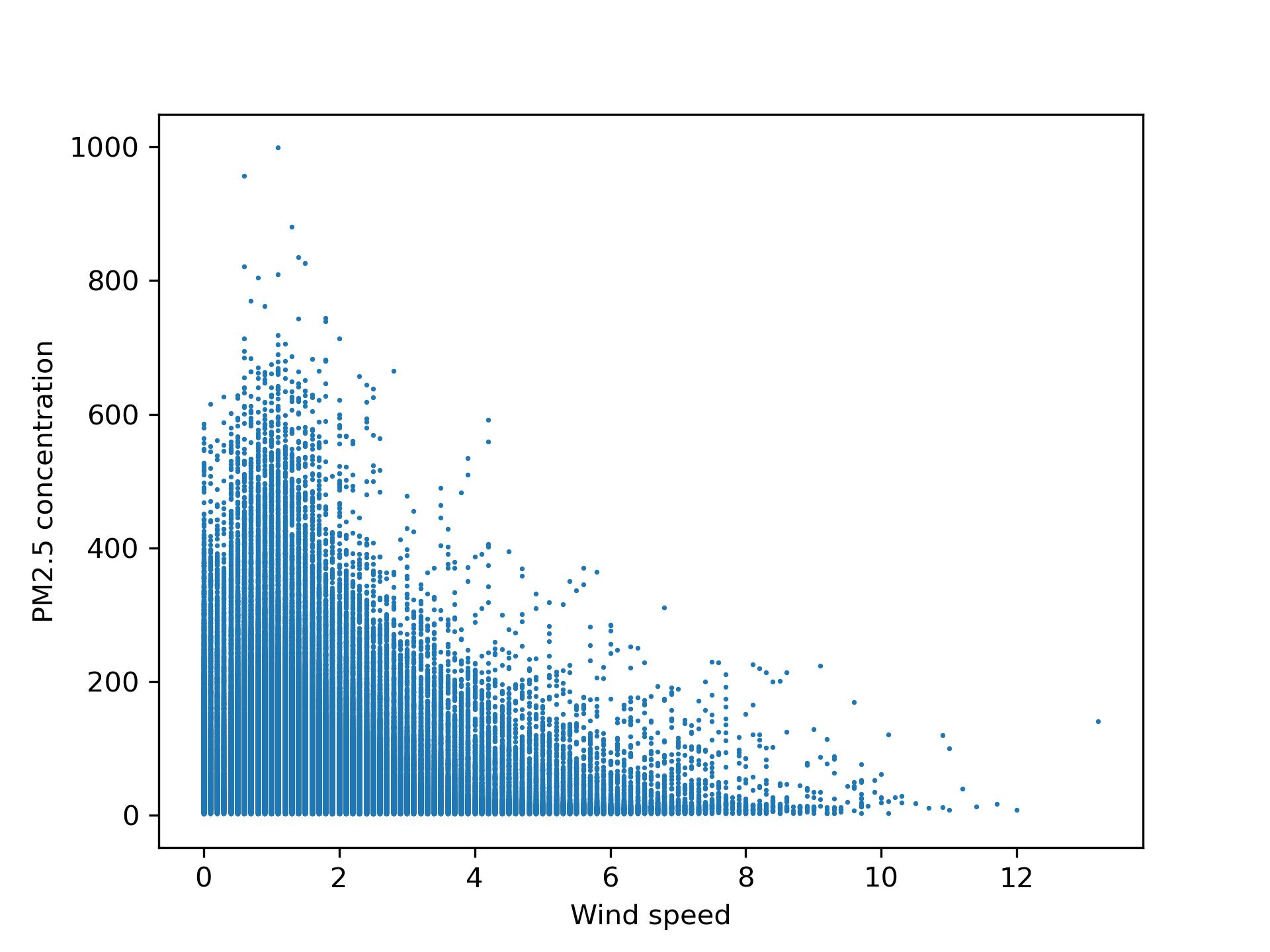}}
    \caption{The scatter plot of data sets the PM2.5 concentration ($\text{ug}/\text{m}^3$) and the wind speed (m/s), where the red signs ``{\color{red} $\pmb{\times}$}'' are outliers in the training set.}\label{fig_data_pm}
\end{figure}

The RMSEs and MAEs of the three estimators are reported in Table~\ref{tab_realdata_pm}.
From Table~\ref{tab_realdata_pm}, we can see that
when $c =1$ and $r_{\text{ol}} = 0$, i.e., without treating the outliers, the estimators $\widehat{m}^*\br{\cdot}$ and $\widehat{m}_{\text{oll}}\br{\cdot}$ appear to have the similar estimation errors, and their RMSEs is lower than that of $\widehat{m}_{\text{alad}}\br{\cdot}$.
As $c$ and $r_{\text{ol}}$ increase, i.e., the influence of outliers becomes significant, the estimation errors of $\widehat{m}^*\br{\cdot}$ and $\widehat{m}_{\text{alad}}\br{\cdot}$ remains nearly the same,
while the estimation errors of local linear estimator $\widehat{m}_{\text{oll}}\br{\cdot}$ increase significantly.
Hence, our estimator possesses strong robustness comparable to the ALAD estimator and is much batter than the local linear estimator.

Regardless of the change in outliers, our estimator gives better RMSEs than that given by $\widehat{m}_{\text{alad}}\br{\cdot}$.
Since the random error in this example is asymmetrical, the estimation of LAD is biased, resulting a poor performance of the ALAD estimator.
However, the asymmetry does not influence our estimator, since estimation bias is corrected by the quantile-matched composite.
We should remark that $\widehat{m}_{\text{alad}}\br{\cdot}$ always has the lowest MAEs among these estimators.
This is not surprising because the LAD estimator is just the minimizer of the MAE on the training set.
This phenomenon is consistent with the results in real data example of \cite{Rong2018Composite}.


\begin{table}
    \centering
    \caption{The performance of estimators in fitting the relationship between the PM2.5 concentration and the wind speed.}
    \label{tab_realdata_pm}
    \renewcommand\arraystretch{0.9}
    \resizebox{0.9\textwidth}{!}{%
    \begin{tabular}{@{}ccc|ccc|ccc@{}}
        \toprule
        \multirow{2}{*}{$c$} & \multirow{2}{*}{$\gamma$} & \multirow{2}{*}{$r_{\text{ol}}$} & \multicolumn{3}{c|}{RMSE} & \multicolumn{3}{c}{MAE} \\ \cmidrule(l){4-9}
         &  &  & $\widehat{m}^*$ & $\widehat{m}_{\text{lad}}$ & $\widehat{m}_{\text{oll}}$ & $\widehat{m}^*$ & $\widehat{m}_{\text{lad}}$ & $\widehat{m}_{\text{oll}}$ \\ \midrule
         \multirow{3}{*}{NaN} & 3.5 & 0.41\% & 77.3222 & 79.6549 & 77.0962 & 53.6384 & 52.3926 & 54.1895 \\
         & 3.0 & 0.64\% & 77.5062 & 79.8907 & 77.2698 & 53.3794 & 52.3790 & 53.8054 \\
         & 2.5 & 1.00\% & 77.8519 & 80.2599 & 77.6102 & 53.0411 & 52.3745 & 53.3455 \\ \midrule
         1 & $\infty$ & 0 & 77.0365 & 79.2596 & 76.9024 & 54.3319 & 52.4481 & 55.3774 \\ \midrule
        \multirow{3}{*}{2} & 3.5 & 0.41\% & 77.0159 & 79.2587 & 77.1191 & 54.4273 & 52.4486 & 57.1707 \\
         & 3.0 & 0.64\% & 77.0193 & 79.2580 & 77.3650 & 54.4135 & 52.4486 & 58.0138 \\
         & 2.5 & 1.00\% & 77.0188 & 79.2588 & 77.8787 & 54.4282 & 52.4485 & 59.3164 \\ \midrule
        \multirow{3}{*}{5} & 3.5 & 0.41\% & 77.0170 & 79.2605 & 80.7022 & 54.5204 & 52.4482 & 64.5048 \\
         & 3.0 & 0.64\% & 77.0255 & 79.2610 & 84.5667 & 54.4697 & 52.4481 & 69.7040 \\
         & 2.5 & 1.00\% & 77.0258 & 79.2644 & 92.0792 & 54.5605 & 52.4476 & 78.3188 \\ \midrule
        \multirow{3}{*}{10} & 3.5 & 0.41\% & 77.0750 & 79.2632 & 94.9023 & 54.5679 & 52.4478 & 81.9343 \\
         & 3.0 & 0.64\% & 77.0761 & 79.2677 & 110.7647 & 54.5348 & 52.4472 & 98.2934 \\
         & 2.5 & 1.00\% & 77.0992 & 79.2695 & 137.9665 & 54.8022 & 52.4470 & 124.9198 \\ \midrule
        \multirow{3}{*}{50} & 3.5 & 0.41\% & 77.5585 & 79.2695 & 314.2545 & 54.1327 & 52.4470 & 303.0136 \\
         & 3.0 & 0.64\% & 77.5669 & 79.2719 & 442.7426 & 54.1374 & 52.4465 & 431.5009 \\
         & 2.5 & 1.00\% & 77.9308 & 79.2751 & 630.4774 & 54.6406 & 52.4460 & 615.5673 \\ \bottomrule
        \end{tabular}
        }
\end{table}

\subsubsection{Symmetric Data}

We manage to fit the relationship between the O3 concentration ($\text{ug}/\text{m}^3$) and the temperature (degree Celsius).
Dropping missing data, the resulting training set and test set both consist of 203550 observations, which are shown in Figure~\ref{fig_data_o3}.
Comparing Figures~\ref{fig_data_pm} and \ref{fig_data_o3}, we can see that in this example, the asymmetry in error distribution is less significant than that in the former example of asymmetric data.
Also, the tail of error distribution is thinner with less outliers.
This example is designed to show the adaptability of our estimator for symmetric errors, i.e. whether or not our estimator has advantages over the competing estimators when the error is symmetric.

\begin{figure}
    \centering
    \subfigure[Training set with $\gamma = 3.5$]{
        \label{fig_traina_o3}
        \includegraphics[width=0.4\textwidth]{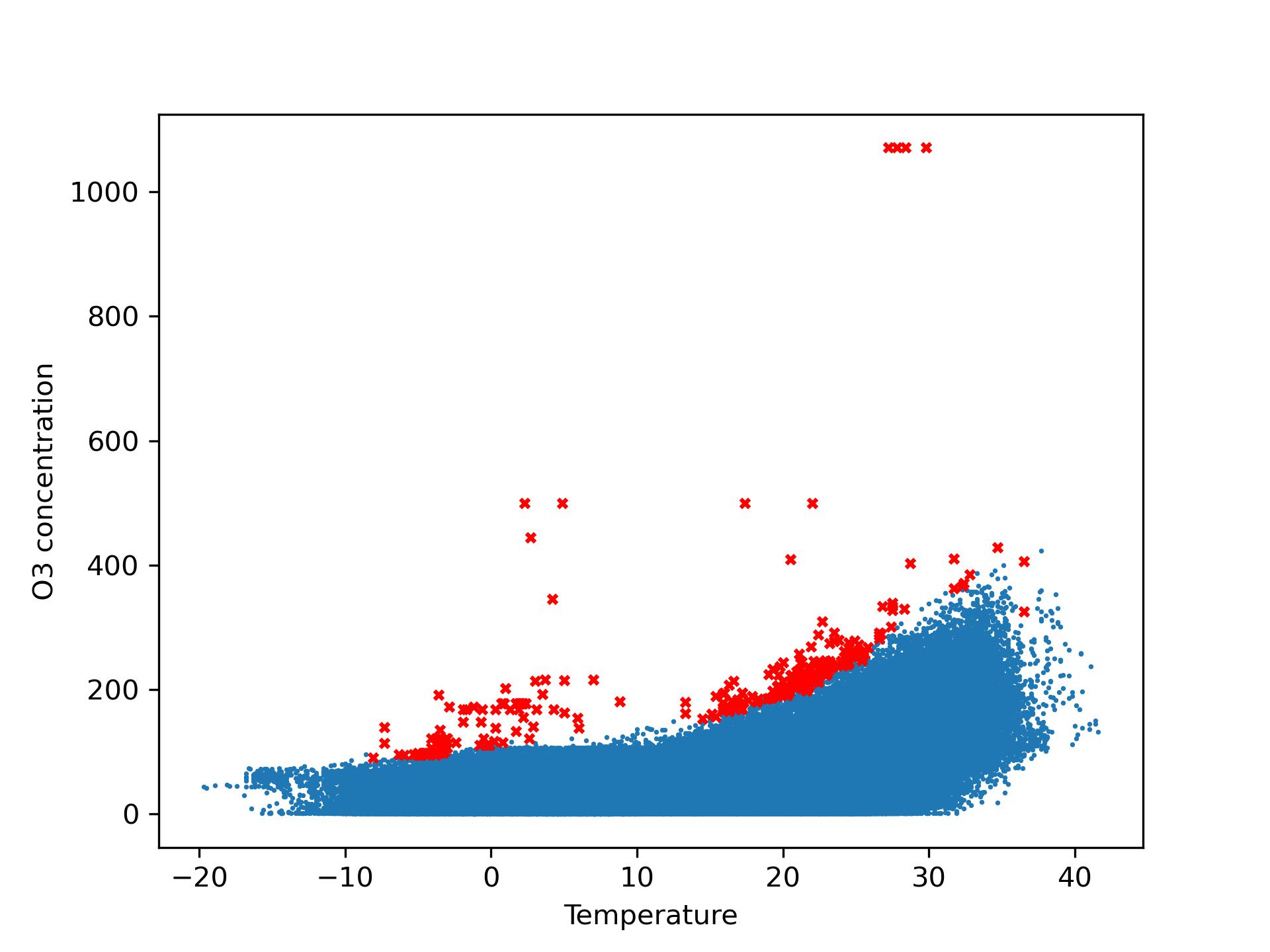}}
    \subfigure[Training set with $\gamma = 3.0$]{
        \label{fig_trainb_o3}
        \includegraphics[width=0.4\textwidth]{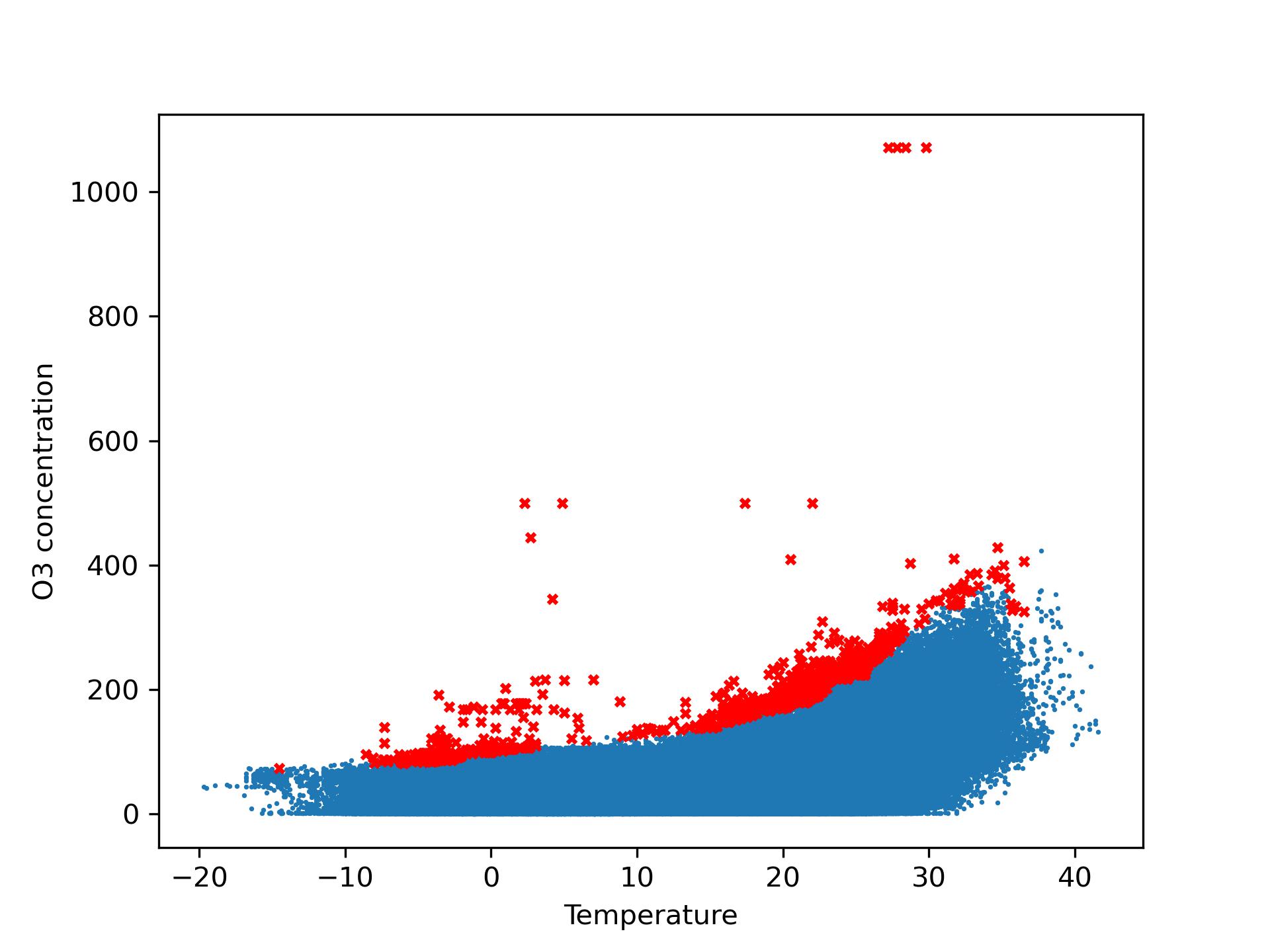}}
    \subfigure[Training set with $\gamma = 2.5$]{
        \label{fig_trainc_o3}
        \includegraphics[width=0.4\textwidth]{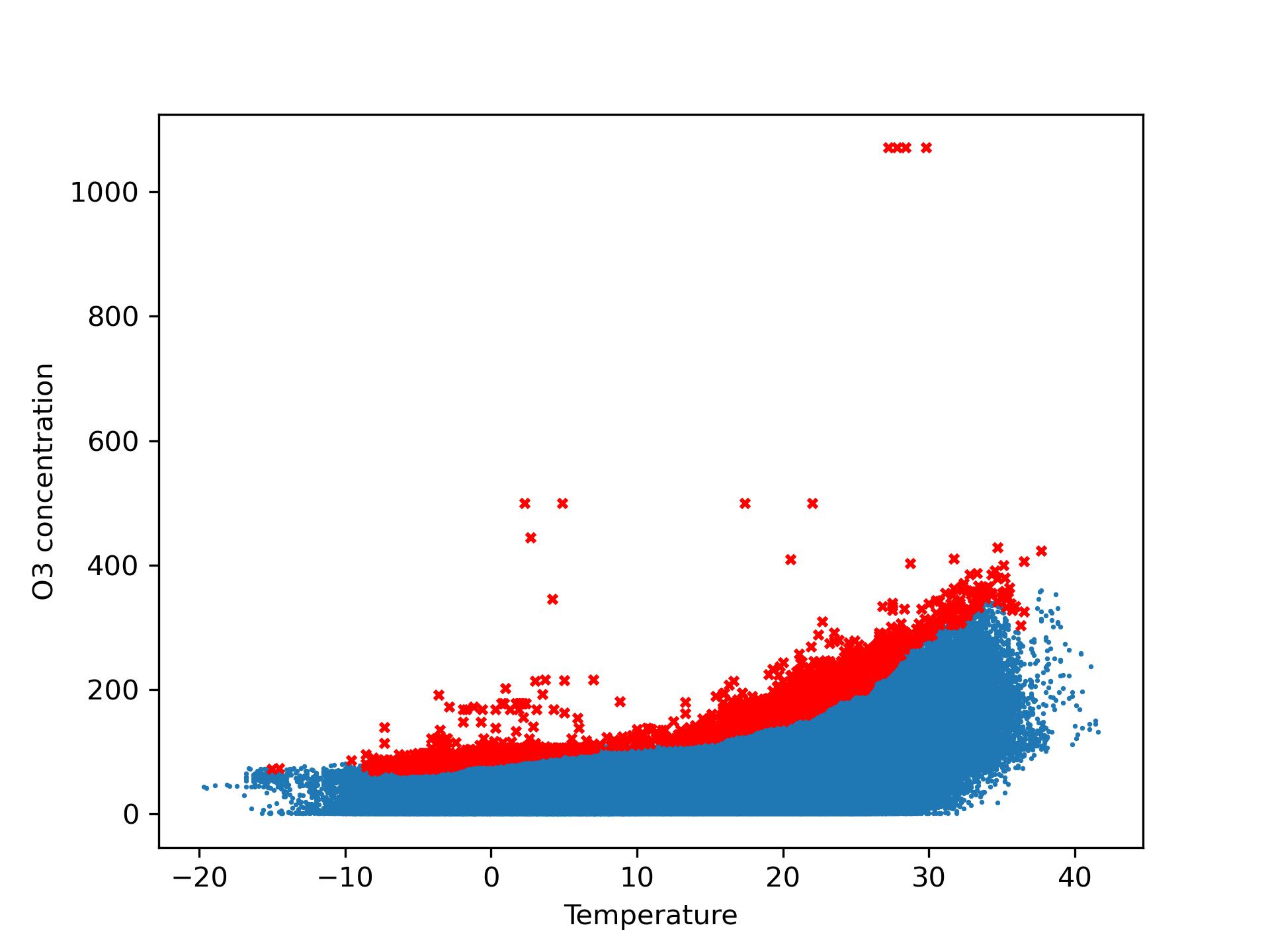}}
    \subfigure[Test set]{
    \label{fig_testdata_o3}
    \includegraphics[width=0.4\textwidth]{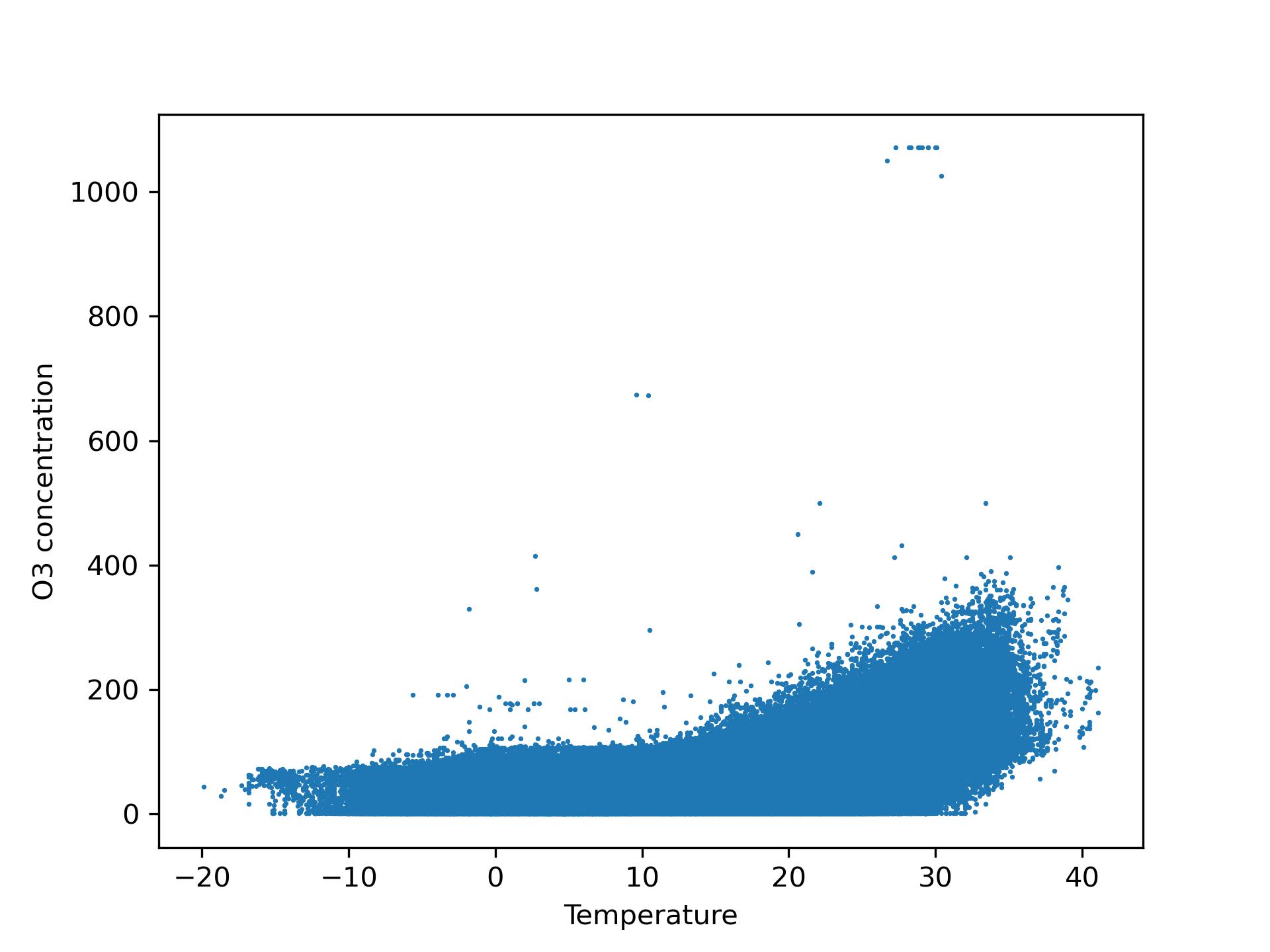}}
    \caption{The scatter plot of the O3 concentration ($\text{ug}/\text{m}^3$) and the temperature (degree Celsius), where the red signs ``{\color{red} $\pmb{\times}$}'' are outliers in the training set.}\label{fig_data_o3}
\end{figure}

The RMSEs and MAEs of the three estimators are listed in Table~\ref{tab_realdata_o3}.
In general, the results in this example are similar to those in the previous example, namely, our estimator together with the ALAD estimator are quite robust against the outliers, which outperform the local linear estimator.
The differences lie in the gap between the RMSEs of $\widehat{m}^*\br{\cdot}$ and $\widehat{m}_{\text{alad}}$, which is smaller than those in the previous example because the random error in this example is more symmetric than before.
Nevertheless, our estimator still shows advantages over the ALAD estimator under the criterion of RMSEs.

In summary, by comprehensively investigating the numerical results in various experiment conditions such as symmetric and asymmetric errors, homoscedastic and heteroscedastic models, homogeneous and heterogeneous data,
we can conclude the effectiveness and the robustness of our estimation method.


\begin{table}
    \centering
    \caption{The performance of estimators in fitting the relationship between the O3 concentration and the temperature.}
    \label{tab_realdata_o3}
    \renewcommand\arraystretch{0.9}
    \resizebox{0.9\textwidth}{!}{%
    \begin{tabular}{@{}ccc|ccc|ccc@{}}
        \toprule
        \multirow{2}{*}{$c$} & \multirow{2}{*}{$\gamma$} & \multirow{2}{*}{$r_{\text{ol}}$} & \multicolumn{3}{c|}{RMSE} & \multicolumn{3}{c}{MAE} \\ \cmidrule(l){4-9}
         &  &  & $\widehat{m}^*$ & $\widehat{m}_{\text{lad}}$ & $\widehat{m}_{\text{oll}}$ & $\widehat{m}^*$ & $\widehat{m}_{\text{lad}}$ & $\widehat{m}_{\text{oll}}$ \\ \midrule
        \multirow{3}{*}{NaN} & 3.5 & 0.03\% & 41.0046 & 41.3196 & 40.9419 & 31.8436 & 31.3884 & 31.8251 \\
         & 3.0 & 0.11\% & 40.9941 & 41.3396 & 40.9469 & 31.8124 & 31.3866 & 31.7951 \\
         & 2.5 & 0.34\% & 40.9782 & 41.4136 & 40.9791 & 31.7313 & 31.3818 & 31.7222 \\ \midrule
        1 & $\infty$ & 0 & 41.0091 & 41.3120 & 40.9407 & 31.8569 & 31.3898 & 31.8466 \\ \midrule
        \multirow{3}{*}{2} & 3.5 & 0.03\% & 41.0105 & 41.3119 & 40.9423 & 31.8578 & 31.3896 & 31.8759 \\
         & 3.0 & 0.11\% & 41.0119 & 41.3121 & 40.9555 & 31.8593 & 31.3897 & 31.9304 \\
         & 2.5 & 0.34\% & 41.0319 & 41.3118 & 41.0314 & 31.8995 & 31.3896 & 32.1106 \\ \midrule
        \multirow{3}{*}{5} & 3.5 & 0.03\% & 41.0141 & 41.3120 & 40.9684 & 31.8624 & 31.3897 & 31.9771 \\
         & 3.0 & 0.11\% & 41.0079 & 41.3120 & 41.1679 & 31.8485 & 31.3897 & 32.2897 \\
         & 2.5 & 0.34\% & 41.1210 & 41.3118 & 42.3723 & 31.8985 & 31.3896 & 33.5672 \\ \midrule
        \multirow{3}{*}{10} & 3.5 & 0.03\% & 41.0091 & 41.3122 & 41.0831 & 31.8486 & 31.3896 & 32.1918 \\
         & 3.0 & 0.11\% & 41.0057 & 41.3122 & 42.0706 & 31.8414 & 31.3896 & 33.2458 \\
         & 2.5 & 0.34\% & 41.2199 & 41.3113 & 47.7517 & 31.8348 & 31.3895 & 38.0221 \\ \midrule
        \multirow{3}{*}{50} & 3.5 & 0.03\% & 41.0097 & 41.3122 & 45.0173 & 31.8508 & 31.3896 & 35.8894 \\
         & 3.0 & 0.11\% & 41.0068 & 41.3117 & 66.6623 & 31.8445 & 31.3896 & 52.5894 \\
         & 2.5 & 0.34\% & 41.4667 & 41.3116 & 139.9563 & 32.3168 & 31.3894 & 114.4186 \\ \bottomrule
        \end{tabular}
        }
\end{table}

\clearpage
\bibliographystyle{abbrvnat}
\bibliography{ref_file}

\newpage
\appendix

\section{Lemmas and Proofs}


\begin{lemma}\label{lemm_nonp}
    Under Assumption~\ref{assum_fFK} and \ref{assum_K}, we have
    \begin{equation}\label{eq_Rsht}
        R^*(h, \tau) = \beta(x, \tau) h^2 + o(h^2)
    \end{equation}
    with $\beta(x, \tau) = \frac{1}{2} \left(m^{\prime\prime}(x) + \sigma^{\prime\prime}(x) F_{\varepsilon}^{-1}(\tau)\right)\int_{\R} v^2 K(v) dv$.
    Consequently, $R^*(h, \tau) = O(h^2)$ uniformly for $\tau \in [\delta_{\tau}, 1-\delta_{\tau}]$ with $\delta_{\tau}$ an arbitrary small constant in $(0, \frac{1}{2})$.
\end{lemma}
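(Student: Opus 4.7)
The plan is to treat $R^*(h,\tau)=Q^*_{Y,h}(\tau\mid x)-[m(x)+\sigma(x)F_\varepsilon^{-1}(\tau)]$ as the bias of the population-level smoothed quantile regression and to extract its $h^2$ expansion from the first-order conditions (FOCs) that define $(a^*,b^*)$. Since $\varepsilon$ is independent of $X$ and $Y=m(X)+\sigma(X)\varepsilon$, I would first rewrite
\[
\E{\psi_\tau(Y-a-b(X-x))\mid X}=\tau-F_\varepsilon\br{\frac{a+b(X-x)-m(X)}{\sigma(X)}},
\]
so that, after the change of variables $u=(t-x)/h$, the two FOCs read
\begin{align*}
\int K(u)\br{\tau-F_\varepsilon(A(u,h))}f_X(x+hu)\,\di u &= 0,\\
\int uK(u)\br{\tau-F_\varepsilon(A(u,h))}f_X(x+hu)\,\di u &= 0,
\end{align*}
with $A(u,h)=[a+bhu-m(x+hu)]/\sigma(x+hu)$.

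Next, I would introduce the ansatz $a=m(x)+\sigma(x)F_\varepsilon^{-1}(\tau)+a_1$ and $b=m'(x)+\sigma'(x)F_\varepsilon^{-1}(\tau)+b_1$, where $a_1$ and $b_1$ will be solved for, and Taylor-expand $m,\sigma,f_X$ to second order and $F_\varepsilon$ to first order, using the smoothness in Assumption~\ref{assum_fFK}. A short calculation then yields
\[
A(u,h)=F_\varepsilon^{-1}(\tau)+\frac{a_1+b_1 hu}{\sigma(x)}-\frac{h^2u^2}{2\sigma(x)}\br{m''(x)+\sigma''(x)F_\varepsilon^{-1}(\tau)}+o(h^2),
\]
and hence
\[
\tau-F_\varepsilon(A(u,h))=-\frac{f_\varepsilon(F_\varepsilon^{-1}(\tau))}{\sigma(x)}\br{a_1+b_1 hu-\tfrac{h^2u^2}{2}(m''(x)+\sigma''(x)F_\varepsilon^{-1}(\tau))}+o(h^2).
\]
Plugging this into the two FOCs and using the kernel moments $\int K=1$, $\int uK=0$, $\int u^2K=\mu_2$ from Assumption~\ref{assum_K}, the $uK$ equation reduces at leading order to $b_1 h\mu_2 f_X(x)=o(h^2)$, confirming $b_1=O(h)$ in line with the ansatz, while the $K$ equation collapses, after invoking the strict positivity of $f_\varepsilon(F_\varepsilon^{-1}(\tau))$ and $f_X(x)$ from Assumption~\ref{assum_fFK}, to
\[
a_1=\tfrac{\mu_2}{2}\br{m''(x)+\sigma''(x)F_\varepsilon^{-1}(\tau)}h^2+o(h^2)=\beta(x,\tau)h^2+o(h^2),
\]
which is exactly the claimed identity since $R^*(h,\tau)=a_1$.

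The main obstacle is not any single manipulation but making the remainder estimate uniform in $\tau\in[\delta_\tau,1-\delta_\tau]$, as needed for the stated uniform $O(h^2)$ conclusion. On this compact interval $F_\varepsilon^{-1}(\tau)$ is continuous and bounded, $f_\varepsilon(F_\varepsilon^{-1}(\tau))$ is bounded away from $0$ and from above by Assumption~\ref{assum_fFK}, and the bounded continuous $f_\varepsilon'$ from the same assumption controls the Taylor remainder of $F_\varepsilon$ uniformly in $\tau$. These together keep the coefficient matrix of $(a_1,b_1)$ in the linearized FOCs uniformly non-singular, so the implicit function theorem applies with uniform estimates and upgrades the pointwise $o(h^2)$ into a uniform $O(h^2)$. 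A minor technical care point is that Assumption~\ref{assum_K} does not require $K$ to be symmetric, so $\int u^3 K(u)\,\di u$ need not vanish; however this moment only enters the $uK$-FOC at order $h^3$ and therefore does not perturb the leading $h^2$ identity for $a_1$.
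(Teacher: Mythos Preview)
Your derivation is correct, but it takes a genuinely different route from the paper. The paper's own proof is two sentences: it cites Theorem~1 of \cite{Fan1994Robust} for the pointwise expansion \eqref{eq_Rsht} and Theorem~1 of \cite{Guerre2012} for the uniformity in $\tau$. You instead reconstruct the argument from first principles by Taylor-expanding the population first-order conditions around the limiting quantile $m(x)+\sigma(x)F_\varepsilon^{-1}(\tau)$, which is essentially the computation hidden inside those cited results specialized to the present model. Your approach buys self-containment---the reader sees explicitly how the $\mu_2 h^2$ bias arises and why the error density enters only through $f_\varepsilon(F_\varepsilon^{-1}(\tau))$---whereas the paper's approach is terse and delegates the analytic work to the literature.

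One minor slip: in the $uK$-FOC as you have written it (already divided by the factor $h$ coming from $X-x=hu$), the term $-\tfrac{h^2u^2}{2}(m''+\sigma''F_\varepsilon^{-1}(\tau))$ multiplied by $u$ produces an $\int u^3K(u)\,\di u$ contribution at order $h^2$, not $h^3$. This means that for a non-symmetric kernel one only gets $b_1=O(h)$ rather than $b_1=o(h)$. This does not affect your conclusion for $a_1$, since $b_1$ enters the $K$-FOC only through $b_1 h\int uK=0$ and through cross terms with $f_X'(x)hu$ that are $O(h^3)$; so the identity $a_1=\beta(x,\tau)h^2+o(h^2)$ survives unchanged.
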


\begin{proof}

    The equation~\eqref{eq_Rsht} is a direct corollary of Theorem 1 in \cite{Fan1994Robust}.
    By Theorem 1 in \cite{Guerre2012}, we conclude that $R^*(h, \tau) = O(h^2)$ uniformly for $\tau \in [\delta_{\tau}, 1-\delta_{\tau}]$ with $\delta_{\tau}$ an arbitrary small constant in $(0, \frac{1}{2})$.

\end{proof}

\begin{lemma}\label{lemm_cr_sgm}
    Under Assumption~\ref{assum_fFK} and \ref{assum_bat_bw}, the asymptotic variance in \eqref{eq_sigma} satisfies
    \begin{equation*}
        \Sigma(x, \B{\omega}, \B{\tau}, \B{h}) = \Omega\br{\|\B{\omega}\|_2^2 n^{-s\br{1-\nu}}},
    \end{equation*}
    uniformly for $x$, $\B{\tau}$, $n$ and $m$.
\end{lemma}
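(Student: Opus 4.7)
The plan is to exploit the block-diagonal structure of $\B{S}(\B{\tau}, \B{h})$ and then peel off the bandwidth factors, reducing everything to the spectral lower bound on $\BmR_1(\B{\tau}_i)$ provided by Assumption~\ref{assum_wtau}.

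First, I would use the block-diagonal form of $\B{S}(\B{\tau}, \B{h})$ to write
\begin{equation*}
    \Sigma(x, \B{\omega}, \B{\tau}, \B{h}) = a(x) \sum_{i=1}^m \frac{1}{n_i} \B{\omega}_i^{\top} \BmR(\B{h}_i, \B{\tau}_i) \B{\omega}_i.
\end{equation*}
Next, I observe that the definition \eqref{eq_defJ} factors cleanly in the bandwidths: if $D_i = \operatorname{diag}(h_{ij}^{-1/2})_{j=1}^J$, then $\BmR(\B{h}_i, \B{\tau}_i) = D_i \BmR_1(\B{\tau}_i) D_i$, because setting $h_{ij}=1$ in \eqref{eq_defJ} recovers $\BmR_1(\B{\tau}_i)$. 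Therefore, by the eigenvalue lower bound \eqref{eq_lamR} in Assumption~\ref{assum_wtau},
\begin{equation*}
    \B{\omega}_i^{\top} \BmR(\B{h}_i, \B{\tau}_i) \B{\omega}_i = (D_i\B{\omega}_i)^{\top} \BmR_1(\B{\tau}_i) (D_i\B{\omega}_i) \geq \lambda_{\min} \sum_{j=1}^J \frac{\omega_{ij}^2}{h_{ij}}.
\end{equation*}

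Then I would plug in the orders from Assumption~\ref{assum_bat_bw}. Since $h_{ij} = \Omega(n_i^{-\nu})$ and $n_i = \Omega(n^s)$, there exist positive constants depending only on $M_b$ and on the hidden constants in $h_{ij} = \Omega(n_i^{-\nu})$ such that $n_i^{-1} h_{ij}^{-1} \geq c\, n^{-s(1-\nu)}$ uniformly in $i, j, n, m$. Combining this with the previous display yields
\begin{equation*}
    \Sigma(x, \B{\omega}, \B{\tau}, \B{h}) \geq a(x)\,\lambda_{\min}\, c\, n^{-s(1-\nu)} \sum_{i=1}^m \sum_{j=1}^J \omega_{ij}^2 = a(x)\,\lambda_{\min}\, c\, n^{-s(1-\nu)} \|\B{\omega}\|_2^2.
\end{equation*}
Finally, by Assumptions~\ref{assum_fFK} and \ref{assum_K}, $a(x)$ is bounded below by a positive constant (uniformly over the relevant compact set of $x$), so the lower bound is $\Omega(\|\B{\omega}\|_2^2 n^{-s(1-\nu)})$ uniformly in the stated parameters.

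There is no real obstacle here; the only point that requires care is to confirm that the factorization $\BmR(\B{h}_i, \B{\tau}_i) = D_i \BmR_1(\B{\tau}_i) D_i$ is exactly what \eqref{eq_defJ} and \eqref{eq_A2} give, and to verify that the uniform positivity of $\BmR_1(\B{\tau}_i)$ (guaranteed by \eqref{eq_lamR}) allows the minimum-eigenvalue step with constants independent of $\B{\tau}$, $n$ and $m$.
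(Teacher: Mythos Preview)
Your approach is essentially identical to the paper's: both bound $a(x)$ uniformly, then exploit the factorization $\BmR(\B{h}_i,\B{\tau}_i)=D_i\BmR_1(\B{\tau}_i)D_i$ together with the eigenvalue bounds \eqref{eq_lamR} and the order conditions in Assumption~\ref{assum_bat_bw} to reduce $\Sigma$ to a constant times $\sum_{i,j}\omega_{ij}^2/(n_i h_{ij})$. The paper states this two-sided sandwich directly; your diagonal factorization just makes the mechanism explicit.

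There is one genuine gap. In this paper the notation $a_n=\Omega(b_n)$ means $a_n=O(b_n)$ \emph{and} $b_n=O(a_n)$ (see the Notations subsection), not merely a lower bound. You have only proved
\[
\Sigma(x,\B{\omega},\B{\tau},\B{h})\;\ge\; c\,\|\B{\omega}\|_2^2\, n^{-s(1-\nu)},
\]
whereas the lemma also requires the matching upper bound. This is easy to supply by the same route: use $\lambda_{\max}$ in \eqref{eq_lamR} to get $\B{\omega}_i^{\top}\BmR(\B{h}_i,\B{\tau}_i)\B{\omega}_i\le \lambda_{\max}\sum_j \omega_{ij}^2/h_{ij}$, use the upper direction of $h_{ij}=\Omega(n_i^{-\nu})$ and $n_i=\Omega(n^s)$ to get $n_i^{-1}h_{ij}^{-1}\le C\,n^{-s(1-\nu)}$, and bound $a(x)$ from above (which needs Assumptions~\ref{assum_fFK} and \ref{assum_K}). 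The paper's proof records both inequalities explicitly as
\[
\delta_\Sigma \sum_{i,j}\frac{\omega_{ij}^2}{n_i h_{ij}}\;\le\;\Sigma(x,\B{\omega},\B{\tau},\B{h})\;\le\; M_\Sigma \sum_{i,j}\frac{\omega_{ij}^2}{n_i h_{ij}}.
\]
Add the upper half and you are done.
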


\begin{proof}

    By Assumption~\ref{assum_fFK} and \ref{assum_K}, there exist positive constant $\delta_a$ and $M_a$ independent with $m$ and $n$, such that $ \delta_a < a(x) < M_a$.
    Then recalling \eqref{eq_sigma}, using the equation \eqref{eq_lamR}, we have
    \begin{equation*}
        \delta_{\Sigma} \sum_{i=1}^m \sum_{j=1}^J\frac{\omega_{ij}^2}{n_i h_{ij}} \leq \Sigma(x, \B{\omega}, \B{\tau}, \B{h}) \leq M_{\Sigma} \sum_{i=1}^m \sum_{j=1}^J\frac{\omega_{ij}^2}{n_i h_{ij}}
    \end{equation*}
    with $\delta_{\Sigma} = \delta_a\lambda_{\min}$ and $M_{\Sigma} = M_a\lambda_{\max}$ independent with $m$ and $n$.
    By Assumption~\ref{assum_bat_bw}, we have
    \begin{equation*}
        \Sigma(x, \B{\omega}, \B{\tau}, \B{h}) = \Omega\br{\|\B{\omega}\|_2^2 n^{-s\br{1-\nu}}}.
    \end{equation*}
    We complete the proof.

\end{proof}

\subsection{Proof of Theorem~\ref{them_asyproperty}}

\begin{proof}
    Following Theorem~2 in \cite{Guerre2012}, under Assumption~\ref{assum_fFK} and \ref{assum_K}, we have the Bahadur representation
    \begin{equation}\label{eqpf_Bahadur}
        \widehat{m}_i(x; \tau) = Q_{Y, h}^*(\tau|x) + \phi(x, \tau, h, \D_i) + R^B_{i}(n_i, h, \tau),
    \end{equation}
    where
    \begin{equation}\label{eq_RiB}
        R^B_i(n_i, h) = O_p\left(\left(\frac{\log n_i}{n_i h}\right)^{3 / 4}\right),
    \end{equation}
    which holds uniformly for $x \in [0, 1]$ and $\tau \in [\delta_{\tau}, 1-\delta_{\tau}]$.
    By Lemma~\ref{lemm_nonp}, we have
    \begin{equation}\label{eq_QandQstar}
        Q_{Y, h}^*(\tau|x) = m(x) + \sigma(x) F_{\varepsilon}^{-1} (\tau) + \beta(x, \tau) h^2 + o(h^2),
    \end{equation}
    uniformly for $x \in [0, 1]$ and $\tau \in [\delta_{\tau}, 1 - \delta_{\tau}]$,  where
    \begin{equation}
        \beta(x, \tau) = \frac{\mu_2}{2} \left(m^{\prime\prime}(x) + \sigma^{\prime\prime}(x) F_{\varepsilon}^{-1}(\tau)\right).
    \end{equation}
    Inserting \eqref{eq_QandQstar} into \eqref{eqpf_Bahadur}, and using Assumption~\ref{assum_wtau}, we obtain
    \begin{equation}\label{eqpf_hatmx}
        \begin{aligned}
            \widehat{m}(x) =\;& m(x) + B(x, \B{\omega}, \B{\tau}, \B{h}) + \sum_{i=1}^m\sum_{j=1}^J \omega_{ij}\phi(x, \tau_{ij}, h_{ij}, \D_i) \\
            &+ \sum_{i=1}^m\sum_{j=1}^J \omega_{ij} R^B_i(n_i, h_{ij}) + o\br{\sum_{i=1}^m\sum_{j=1}^J \abs{\omega_{ij}} h_{ij}^2}.
        \end{aligned}
    \end{equation}
    By Assumption~\ref{assum_bat_bw}, it holds that
    \begin{equation}\label{eq_sum_wni}
        \left(\frac{\log n_i}{n_i h_{ij}}\right)^{3 / 4} = O\br{\br{\log n}^{\frac{3}{4}} n^{-\frac{3}{4} s \br{1-\nu}}}
    \end{equation}
    \begin{equation}\label{eq_sum_wh}
        \sum_{i=1}^m\sum_{j=1}^J \abs{\omega_{ij}} h_{ij}^2 = O\br{\|\B{\omega}\|_1 n^{-2s\nu}},
    \end{equation}
    uniformly with respect to $i$ and $m$.
    Since all the data batches $\left\{\D_i\right\}_{i=1}^m$ are i.i.d., which means that \eqref{eq_RiB} holds uniformly for $x \in [0, 1]$, $\tau \in [\delta_{\tau}, 1-\delta_{\tau}]$, $i = 1, \cdots, m$ and $m \in \N$.
    Then by \eqref{eq_RiB} and \eqref{eq_sum_wni}, we have
    \begin{equation}\label{eq_sum_rbi}
        \abs{\sum_{i=1}^m\sum_{j=1}^J \omega_{ij} R^B_i(n_i, h)} \leq \left\|\B{\omega}\right\|_2 \sqrt{ J\sum_{i=1}^m \abs{R^B_i(n_i, h)}^2} = O_p\br{\|\B{\omega}\|_2 \br{\log n}^{\frac{3}{4}} n^{-\frac{3}{4} s \br{1-\nu}} n^{\frac{1}{2}(1-s)}}.
    \end{equation}
    Then the equation \eqref{eq_them_asyproperty} is proven by combining \eqref{eq_sum_wh} and \eqref{eq_sum_rbi} with \eqref{eqpf_hatmx}.

    In the following, we prove \eqref{lim_sqrtn_nonp}.
    Letting
    \begin{equation*}
        \xi_{ijk} = \frac{1}{h_{ij} v\left(x; \tau_{ij}, h_{ij}\right)} \psi_{\tau_{ij}}\left(Y_{ik}- Q_{Y, h}^*(\tau_{ij}|x) \right) K\left(\frac{X_{ik}-x}{h_{ij}}\right),
    \end{equation*}
    we have $\phi_{n_i}(\tau_{ij}, h_{ij}, \D_i) = \frac{1}{n_i} \sum_{k=1}^{n_i} \xi_{ijk}$.
    It's easy to verify that
    \begin{equation*}
        \text{Var}\left(\sum_{i=1}^m \sum_{j=1}^J \frac{\omega_{ij}}{n_i} \sum_{k=1}^{n_i} \xi_{ijk}\right) = \sum_{i=1}^m  \frac{1}{n_i^2} \sum_{k=1}^{n_i} \text{Var}\left(\sum_{j=1}^J \omega_{ij}\xi_{ijk}\right),
    \end{equation*}
    where
    \begin{equation*}
        \text{Var}\left(\sum_{j=1}^J \omega_{ij}\xi_{ijk}\right) =  \frac{\sigma^{2}\left(x\right)\int_{\R} K(v)^2 dv }{f_X\left(x\right)} \B{\omega}_i^{\top} \BmR(\B{h}_i, \B{\tau}_{i}) \B{\omega}_i + o\left(\sum_{j=1}^J\sum_{j^{\prime}=1}^J \frac{\omega_{ij}\omega_{ij^{\prime}}}{\sqrt{h_{ij} h_{ij^{\prime}}}}\right).
    \end{equation*}
    Then, we have
    \begin{equation*}
        \text{Var}\left(\sum_{i=1}^m \sum_{j=1}^J \frac{\omega_{ij}}{n_i} \sum_{k=1}^{n_i} \xi_{ijk}\right) = \Sigma(x, \B{\omega}, \B{\tau}, \B{h}) + o\left(\sum_{i=1}^m \sum_{j=1}^J\frac{\omega_{ij}^2}{n_i h_{ij}}\right).
    \end{equation*}
    By Lemma~\ref{lemm_cr_sgm}, it holds that
    \begin{equation}\label{eq_sgmxwth}
        \Sigma(x, \B{\omega}, \B{\tau}, \B{h}) = \Omega\br{\|\B{\omega}\|_2^2 n^{-s\br{1-\nu}}}
    \end{equation}
    uniformly for $x$, $n$ and $m$.    
    Recalling the condition $s \in (\frac{2}{3}, 1]$ in Assumption~\ref{assum_bat_bw}, \eqref{eq_sum_rbi} leads to
    \begin{equation}\label{eq_cr_rbi}
        \sum_{i=1}^m\sum_{j=1}^J \omega_{ij} R^B_i(n_i, h) = o_p(\|\B{\omega}\|_2 n^{-\frac{1}{2}s\br{1-\nu}}).
    \end{equation}
    Then by \eqref{eq_sgmxwth} and \eqref{eq_cr_rbi}, we have
    \begin{equation*}
        \br{\sum_{i=1}^m\sum_{j=1}^J \omega_{ij} R^B_i(n_i, h)} / \sqrt{\Sigma(x, \B{\omega}, \B{\tau}, \B{h})} = o_p(1).
    \end{equation*}
    Then considering the Slutsky lemma, to prove \eqref{lim_sqrtn_nonp}, we only need to establish the asymptotic normality of
    $\br{\sum_{i=1}^m \sum_{j=1}^J \frac{\omega_{ij}}{n_i} \sum_{k=1}^{n_i} \xi_{ijk}} / \sqrt{\Sigma(x, \B{\omega}, \B{\tau}, \B{h})}$.
    Further considering the Lindeberg-Feller central limit theorem, it's sufficient to show that
    \begin{equation}\label{pf_them_nonp6}
        \sum_{i=1}^m \sum_{j=1}^J\sum_{k=1}^{n_i} \E{\frac{|\omega_{ij}|^2}{{n_i}^2} \frac{|\xi_{ijk}|^2}{\Sigma(x, \B{\omega}, \B{\tau}, \B{h})} I\left(\frac{|\omega_{ij}||\xi_{ik}|}{n_i\sqrt{\Sigma(x, \B{\omega}, \B{\tau}, \B{h})}}  >\delta \right)} \to 0, \;\; \forall \delta >0.
    \end{equation}

    By Assumption~\ref{assum_fFK}, there exists a positive constant $M_{\xi}$  independent with $i, j, k$, $m$ and $n$, such that $\left|\xi_{ijk} \right| \leq M_{\xi} < \infty$ a.s..
    By the Jensen's inequality, we have $\|\B{\omega}\|_2 \geq \br{mJ}^{-\frac{1}{2}}$, which leads to $\|\B{\omega}\|_{\infty} \|\B{\omega}\|_2^{-1} = o\br{n^{1/3}}$ recalling Assumptions~\ref{assum_wtau} and \ref{assum_bat_bw}.
    Then by Lemma~\ref{lemm_cr_sgm}, there exist a constant $M_{\Sigma}$ independent with $i, j, k$, $m$ and $n$, such that
    \begin{equation*}
        \begin{aligned}
            \frac{|\omega_{ij}|}{n_i} \frac{|\xi_{ij}|}{\sqrt{\Sigma(x, \B{\omega}, \B{h})}} &\leq M_{\Sigma} M_{\xi} \|\B{\omega}\|_{\infty} \|\B{\omega}\|_2^{-1} n^{-\frac{1}{2} s\br{1 + \nu}} \\
            &= o\br{n^{\frac{1}{3}} n^{-\frac{1}{2} s\br{1 + \nu}}} \to 0, \; a.s. \text{ as }  n \to \infty.
        \end{aligned}
    \end{equation*}
    Therefore for sufficiently large $n$, we have
    \begin{equation*}
        I\left(\frac{|\omega_{ij}||\xi_{ik}|}{n_i\sqrt{\Sigma(x, \B{\omega}, \B{\tau}, \B{h})}}  >\delta \right) = 0, \quad a.s. \quad \forall i, j, k,
    \end{equation*}
    which implies \eqref{pf_them_nonp6}.

\end{proof}

\subsection{Proof of Theorem~\ref{thm_amse}}

\begin{proof}

    The equations~\eqref{eq1_amse} and \eqref{eq1d1_amse} are direct corollaries of \eqref{lim_sqrtn_nonp}.

    By Assumption~\ref{assum_bat_bw}, it holds uniformly for $x \in [0, 1]$ that
    \begin{equation*}
        \abs{B(x, \B{\omega}, \B{\tau}, \B{h})} \leq  \sum_{i=1}^m\sum_{j=1}^J h_{ij}^2\abs{\omega_{ij}} \abs{\beta(x, \tau_{ij})} =  O\br{M_{\beta}\|\B{\omega}\|_1 n^{-2s\nu}}
    \end{equation*}
    with the constant
    \begin{equation*}
        M_{\beta} = \sup\left\{\beta(x, \tau), x \in [0, 1], \tau \in [\delta_{\tau}, 1 - \delta_{\tau}]\right\} < \infty
    \end{equation*}
    by Assumption~\ref{assum_fFK}.
    Then we have
    \begin{equation}\label{eq_amse_B}
        B^2(x, \B{\omega}, \B{\tau}, \B{h}) = O\br{\|\B{\omega}\|_1^2 n^{-4 s\nu}},
    \end{equation}
    uniformly for $x \in [0, 1]$.
    By Lemma~\ref{lemm_cr_sgm}, we have
    \begin{equation}\label{eq_amse_S}
        \Sigma(x, \B{\omega}, \B{\tau}, \B{h}) = \Omega\br{\|\B{\omega}\|_2^2 n^{-s\br{1-\nu}}},
    \end{equation}
    uniformly for $x \in [0, 1]$.
    Combining \eqref{eq_amse_B} and \eqref{eq_amse_S}, we obtain \eqref{eq2_amse}.

\end{proof}

\subsection{Proof of Theorem~\ref{thm_optw}}

\begin{proof}

    The optimal weight vector can be expressed as
    \begin{equation*}
        \B{\omega}^*\br{\B{\tau}, \B{h}} = \argmin_{\B{\omega}} \Sigma(x, \B{\omega}, \B{\tau}, \B{h}), \quad \text{ subject to } \left\{\begin{aligned}
            &\B{\omega}^{\top}\B{1}_{mJ} = 1, \\
            &\B{\omega}^{\top}\B{F}_{\varepsilon}^{-1} \br{\B{\tau}} = 0,
        \end{aligned}\right.
    \end{equation*}
    Then \eqref{eq1_optw_var} can be obtained by the Lagrange's method of multipliers.

    Let $\B{\omega}_u = 1/(mJ) \B{1}_{mJ}$ be the uniform weight vector.
    By \eqref{eq_bartau}, $\B{\omega}_u$ satisfy the constraints in \eqref{eq_nonbias}.
    Since the weight vector $\B{\omega}^* = \B{\omega}^*\br{\B{\tau}\br{\bar{\tau}^*}, \B{h}}$ is optimal, we have
    \begin{equation}\label{eq_sgm_wu}
        \Sigma\br{x, \B{\omega}^*, \B{\tau}\br{\bar{\tau}^*}, \B{h}} \leq \Sigma\br{x, \B{\omega}_u, \B{\tau}\br{\bar{\tau}^*}, \B{h}}.
    \end{equation}
    By \eqref{eq_minwoptopt} with $\B{\omega}^{**} = \B{\omega}^*\br{\B{\tau}\br{\bar{\tau}^{**}}, \B{h}}$, we also have
    \begin{equation}\label{eq_sgm_wu2}
        \Sigma\br{x, \B{\omega}^{**}, \B{\tau}\br{\bar{\tau}^{**}}, \B{h}} \leq \Sigma\br{x, \B{\omega}_u, \B{\tau}\br{\bar{\tau}^{**}}, \B{h}}.
    \end{equation}
    Then we complete the proof.
\end{proof}

\subsection{Proof of Corollary~\ref{coro_optw_norm}}

\begin{proof}
    Let $\B{\omega}^* = \B{\omega}^*\br{\B{\tau}\br{\bar{\tau}^*}, \B{h}}$.
    By Lemma~\ref{lemm_cr_sgm}, we have
    \begin{equation*}
        \Sigma\br{x, \B{\omega}^*, \B{\tau}\br{\bar{\tau}^*}, \B{h}} = \Omega\br{ n^{\nu-1}m^{1-\nu} \|\B{\omega}^*\|_2^2},
    \end{equation*}
    \begin{equation*}
        \Sigma\br{x, \B{\omega}_u, \B{\tau}\br{\bar{\tau}^*}, \B{h}} = \Omega\br{ n^{\nu-1}m^{1-\nu} \|\B{\omega}_u\|_2^2}.
    \end{equation*}
    By \eqref{eq_sgm_wu}, we have
    \begin{equation*}
        \|\B{\omega}^*\|_2^2 = O\br{\|\B{\omega}_u\|_2^2} = \Omega(m^{-1}).
    \end{equation*}
    By the Jensen's inequality, we have
    \begin{equation*}
        \|\B{\omega}^*\|_2^2 \geq \|\B{\omega}_u\|_2^2.
    \end{equation*}
    Then we obtain $\|\B{\omega}^*\|_2^2 = \Omega\br{m^{-1}}$.
    Further by the Cauchy-Schwarz inequality, we have
    \begin{equation*}
        \|\B{\omega}^*\|_1 \leq \sqrt{mJ} \|\B{\omega}^*\|_2 = O(1).
    \end{equation*}
    Then we obtain \eqref{eq_optw_norm}.

    By \eqref{eq_sgm_wu2} and similar derivations as before, \eqref{eq_optw_norm} holds for $\B{\omega} = \B{\omega}^{**}$.

\end{proof}

\end{document}